\numberwithin{equation}{section}
\numberwithin{figure}{section}
\theoremstyle{plain}
\newtheorem{thm}{\protect\theoremname}
\newtheorem{lemma}[thm]{Lemma}
\newtheorem{remark}[thm]{Remark}
\newtheorem{prop}[thm]{Proposition}
\newtheorem{corollary}[thm]{Corollary}
\newcommand{\ef}{\color{black}}
\providecommand{\theoremname}{Theorem}
\begin{document}
\title{On the Spectrum of Schr\"{o}dinger Operators Interacting at Two Distinct Scales}

\author{Emmanuel Fleurantin\textsuperscript{1}}
\author{Jeremy L. Marzuola\textsuperscript{2}}
\author{Christopher K.R.T. Jones\textsuperscript{3}}

\address[1,3]{Department of Mathematical Sciences, George Mason University, Fairfax, VA 22030}
\address[2]{Department of Mathematics, University of North Carolina at Chapel Hill, Chapel Hill, NC 27599}

\email{efleuran@gmu.edu, marzuola@email.unc.edu, ckrtj@renci.org}

\begin{abstract}
Schr\"{o}dinger operators of the form $\Delta - W$ on $L^2_{\text{rad}}(\mathbb{R}^3)$, the space of radially symmetric square integrable functions are relevant in a variety of physical contexts. The potential $W$ is taken to be radially symmetric (i.e. $W(x) = W(|x|)$) and to decompose into two components with distinct spatial scales: $W=W_\varepsilon= V_0+V_{1,\varepsilon}$. The second component $V_{1,\varepsilon}(|x|) = \varepsilon^2V_1(\varepsilon |x|)$ represents a scaled potential that becomes increasingly delocalized as $\varepsilon \to 0$. We will assume that both potentials $V_0(r), V_1(r)$ exhibit certain decay properties as $r \to \infty$. We show how the eigenvalue count on the positive real axis is built out of the spectra associated with the two reduced eigenvalue problems  on their separate scales. The result is that the total number of eigenvalues of $\Delta - W$ is the sum of the number of positive eigenvalues of $\Delta - V_0$ and $\Delta - V_1$. Our analysis combines dynamical systems techniques with a separation of scales argument, providing a novel framework for studying spectral properties of differential operators where multiple spatial scales interact.
\end{abstract}

\maketitle

\begin{flushleft}
{\textbf{Keywords}: Separation of scales, spectrum of linear operators, compactification}
\vspace{1em}

{\textbf{MSC Classification}: 35J10, 47A75, 35B40, 35B25, 81Q10}
\end{flushleft}

\section{Introduction}
%START INTRO

\subsection{Mathematical framework and setup}

The Schr\"odinger operator $\Delta - W$ on $L^2(\mathbb{R}^3)$ with certain conditions on the potential function $W$ arises in the classical study of quantum mechanics and in a variety of other physical contexts. In particular, the behavior of a quantum particle confined in a well described by the function $W$ is determined by the spectrum of the Schr\"{o}dinger operator $\Delta - W$, denoted $\sigma(-\Delta + W)$ , where the proper notions of discrete (or point) spectrum and essential spectrum are as described for instance in Reed-Simon~\cite{reed1978iv}, Chapter XIII and Kato~\cite{Kato1995}. As the problem is already sufficiently interesting, we work here in the setting of radially symmetric potentials.  I.e., we will restrict ourselves to the weighted space $L^2_{\text{rad}}(\mathbb{R}^3)$ consisting of radial functions $f(x) = f(|x|)$ with the norm
\begin{equation}
    \|f\|^2_{L^2_{\text{rad}}} = 4\pi\int_0^\infty |f(r)|^2r^2dr.
\end{equation}
Following Simon~\cite{Simon1982} and Agmon~\cite{Agmon1982}, we know the operator $\Delta - W$ is essentially self-adjoint on $C^\infty_0(\mathbb{R}^3)$, with its closure defining a self-adjoint operator with domain
\begin{equation}
    \mathcal{D} = \{f \in L^2_{\text{rad}}(\mathbb{R}^3): \Delta f - Wf \in L^2_{\text{rad}}(\mathbb{R}^3)\},
\end{equation}
where $\Delta f$ is understood in the distributional sense.
Given the orientation of our Schr\"odinger operator $\Delta - W$, we see that the essential spectrum, denoted $\sigma_{\rm ess} (\Delta -W)$, will consist of the negative real numbers, while the discrete spectrum, denoted $\sigma_p (\Delta - W)$, will consist of positive, isolated eigenvalues with associated eigenfunctions in $L^2_{\text{rad}}(\mathbb{R}^3)$.  Our focus here will be on quantifying the number of positive, eigenvalues for Schr\"odinger operators of a particular form.

We specifically consider the interesting (and challenging) case $\Delta - W$ on $L^2_{\text{rad}}(\mathbb{R}^d)$, the space of radial functions, where $W =W_\varepsilon= V_0 (|x|) + V_{1,\varepsilon} (|x|) $ is a sum of two radially symmetric potentials, sufficiently smooth and decaying %potentials 
which interact on different scales.  In particular, we take 
\[
V_{1,\varepsilon} (|x|) = \varepsilon^2 V_1 (\varepsilon |x|),
\]
to be a potential acting in the far field (adiabatically in space) with respect the potential $V_0$.   For the radial problem, the operator takes the explicit form:
\begin{equation}
    \Delta - W_\varepsilon = \frac{d^2}{dr^2} + \frac{2}{r}\frac{d}{dr} - V_0 - V_{1,\varepsilon}
\end{equation}
with eigenvalue equation
\begin{equation}
\label{eqn:eveq}
    u_{rr} + \frac{2}{r}u_r - \left(V_0  + V_{1,\varepsilon}\right)u = \lambda u.
\end{equation}

We will assume the following decay condition on the potentials. 

%\begin{assumption}\label{assum:1}
\vspace{.3cm}
\noindent{\bf (A1)}
{\em Both $V_0$ and $V_1$ are $C^\infty$ on $r\ge0$ and satisfy the decay condition that there exists a $C_1>0$ such that
    \[
    | \frac{d}{d r}V_i(r) | \leq \left(\frac{C_1}{r^{3+\gamma}}\right) \quad \text{for some } \gamma > 0, i=0,1
    \] 
for all $r \geq 1$.}

    To facilitate our error estimates being sufficiently strong in connecting our two scales, we will further assume more rapid decay than guaranteed by (A1) on the near field potential $V_0$.  
    
\vspace{.3cm}
  \noindent {\bf (A2)}{\em  There exists a $C_0 >0$ such that
    \[
   |V_0(r) | \leq \left(\frac{C_0}{r^{4+\gamma}}\right) \quad \text{for some } \gamma > 0 
    \]
    for all $r \geq 1$.}
\vspace{.2cm}

The assumption on the potentials (A1) will be used to guarantee the validity (well-posedness) of the representation of the eigenvalue problem \eqref{eqn:eveq} as a dynamical system on a compact phase space. Although, for this representation, decay like $1/r^{2+\gamma}$ would suffice, we also perform a blow-up procedure inside the compactified phase space and this requires an extra degree of decay. The second assumption (A2) ensures that the interaction between the two potentials is not too strong as $\varepsilon \to 0$. 

%These decay conditions are not sufficient in themselves to conclude that the number of positive eigenvalues is finite. To that end, we further assume the following.

%\begin{assumption}\label{assum:3}
%Both potentials $V_i=V_i(|x|)\in L^\frac32\left(\mathbb{R}^3\right)$, i=0,1.

%\end{assumption}

%These conditions ensure that \( W \in L^\infty(\mathbb{R}^3) \) with sufficient decay to guarantee well-posedness of the associated spectral problem and regularity of dynamical systems arising from spatial reductions.
We consider each of the operators $\Delta-V_0$, $\Delta-V_1$, and $\Delta-W_\varepsilon$ on the space: 
\begin{equation}\label{eqn:domain}
\mathcal{D} = \{u \in H^2(\mathbb{R}^3) \cap L^2_{\text{rad}}(\mathbb{R}^3): \Delta u \in L^2(\mathbb{R}^3)\}.
\end{equation}
The Lieb-Thirring inequalities~\cite{lieb2001inequalities}, and especially the Cwikel-Lieb-Rozenbljum (CLR) bounds, see for instance \cite{Cwikel1977,Lieb1976,Rozenbljum1976}, guarantee finiteness of the number of positive eigenvalues. For a potential $V=V(r)$, let us define 
\begin{equation}
\label{def:m}
    m(V) := \# \ \text{of positive eigenvalues of the Schr\"odinger operator $\Delta - V$} .
\end{equation}
The Cwikel-Lieb-Rozenbljum bound gives that there exists a constant $C_3$ independent of $V$ such that
\begin{equation}
    \label{eqn:clr}
    m(V) \leq C_3 \int_{\mathbb{R}^3} V_-^{\frac32} dx,
\end{equation}
where $V = V_+ - V_-$ with $V_+ \geq 0$ and $V_- > 0$ so $V_-$ is the component of the potential responsible for any trapping effects.  For these results, in addition to $V_- \in L^{\frac32}$, one typically assumes that $V_+ \in L^1_{\rm loc}$ to ensure that spectrum is well-defined, see \cite{hundertmark2023cwikel}.  It is clear that the $\frac32$ power on the right hand side of \eqref{eqn:clr} then means that 
\begin{equation}
\label{eqn:scaling}
\int_{\mathbb{R}^3} V_{1,\varepsilon,-}^{\frac32} dx = \int_{\mathbb{R}^3} V_{1,-}^{\frac32} dx.
\end{equation}
The exact value of the constant in these bounds is in and of itself a topic of great interest, see the recent result \cite{hundertmark2023cwikel}. 

It follows from assumption (A1) that both $V_0$ and $V_1$ are in $L^\frac32$ and $L^1_{loc}$. We can thus conclude that both spectral problems have only a finite number of positive eigenvalues. The scale separation in $W_{\varepsilon}$ introduces interesting spectral phenomena %first observed by Naimark~\cite{Naimark1954} and later extended by Pavlov~\cite{Pavlov1967}. 
{\ef for self-adjoint operators, building on classical spectral theory for Schr\"{o}dinger operators, see \cite{reed1978iv, Simon1982, Kato1995}.} 
For eigenfunctions corresponding to discrete (point) spectrum, the eigenfunctions are "localized", in that they have exponential decay, however the rate will depend upon the distance to the essential spectrum.

Schr\"odinger operators also arise in the study of linear stability theory in a variety of contexts.  In particular, Schr\"odinger operators of this type appear in reaction-diffusion systems in $\mathbb{R}^3$ with radially symmetric steady states. Consider the equation:
\begin{equation}\label{eqn:RDE}
    \frac{\partial u}{\partial t} = D\Delta u + f(u),
\end{equation}
where $D > 0$ is the diffusion coefficient and $f(u)$ represents reaction kinetics. When seeking steady states and linearizing around them, one obtains eigenvalue problems involving $\Delta - W$ where $W$ emerges from the linearized reaction terms.  

A particular motivation that brought us to this work is a scalar problem related to the study of nonlinear stability of solitons inside potential wells as considered in the papers of Nakanishi,  \cite{nakanishi2017global1,nakanishi2017global}.  When considering the nonlinear Schr\"odinger (NLS), which is the equation \eqref{eqn:RDE} with an $i$ in front of the time derivative term on the left hand side,  interacting with potentials at different scales, the operators are non-self-adjoint and consist of a more complicated matrix system of operators. Nevertheless, scalar spectral considerations turn out to be important to establish first, see for instance \cite{KriegerSchlag2006,NakanishiSchlag2011}.  The study presented here, while of independent interest and focusing on scalar Schr\"{o}dinger operators, grew out of the Ph.D. thesis \cite{golovanich2021potential}, which was directly motivated by the works of Nakanishi on NLS dynamics.

\subsection{Theorems} It follows from the finiteness of the number of positive eigenvalues for the spectral problems (for each potential separately) that there is a spectral gap in each case of the form $[0,\hat{\mu}]$. It also follows from our assumptions that the essential spectrum is exactly $(-\infty,0]$ for all three operators: $\Delta-V_0$, $\Delta-V_1$, and $\Delta-W_\varepsilon$ (any $\varepsilon >0$). 

We adopt the following notation:
\begin{equation}\label{eqn:gap}
\mu_i=\min_{\mu>0} \{\mu \in \sigma\left(\Delta-V_i)\right).
\end{equation}
This is well-defined by the above considerations. 
The spectral gaps of interest are then $(0,\mu_0)$ and $(0,\mu_1)$ for $\Delta -V_0$ and $\Delta-V_1$ respectively.

We need one further assumption on the spectra of the two individual operators.

\vspace{.3cm}

\noindent {\bf (A3)}{\em The point $\mu=0$ is not an end-point resonance of either $\Delta -V_0$ or $\Delta -V_1$}

\vspace{.3cm}

\noindent Another way of stating (A3) would be to say that the kernels of these two operators are trivial on the space $\mathcal{D}$ given in \eqref{eqn:domain}.

\begin{thm}\label{thm:1} Under the assumptions (A1-A3), there exists $\varepsilon_0 > 0$ such that for all $\varepsilon \in (0,\varepsilon_0)$, the operator $\Delta - W_{\varepsilon}$ on the domain $\mathcal{D}$, given in \eqref{eqn:domain}, has exactly $m(V_1)$ eigenvalues in the spectral gap $(0,\mu_0)$.

Moreover, for all $\varepsilon \in (0, \varepsilon_0)$, the total number of eigenvalues in $\sigma(\Delta - W_{\varepsilon})$ contained in $(0, \infty)$ is given by
\[
m(W) = m(V_0) + m(V_1).
\]
\end{thm}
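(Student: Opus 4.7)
The plan is to recast the eigenvalue equation \eqref{eqn:eveq}, for each fixed $\lambda>0$, as a first-order dynamical system in $(u,u_r)$ (together with an $r$-variable suitably compactified), and then count eigenvalues as intersections of the "regular at $r=0$" manifold with the "decaying at $r=\infty$" manifold. The analysis splits naturally into two regimes determined by the size of $\lambda$ relative to $\varepsilon^2$: the gap $(0,\mu_0)$, where $\lambda = O(\varepsilon^2)$ is governed by the far-field $V_1$ problem, and the region $[\mu_0,\infty)$, where $\lambda = O(1)$ is a small perturbation of the near-field $V_0$ problem.

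\smallskip

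\noindent\textbf{Part I: Eigenvalues in $(0,\mu_0)$.} Introduce the far-field variable $\rho=\varepsilon r$, so that the eigenvalue equation becomes
\[
u_{\rho\rho}+\frac{2}{\rho}u_\rho - V_1(\rho)\,u - \varepsilon^{-2}V_0(\rho/\varepsilon)\,u = \frac{\lambda}{\varepsilon^2}u,
\]
and the $V_0$-term is controlled in $L^\infty(\{\rho\ge \varepsilon\})$ by $C_0\,\varepsilon^{2+\gamma}\rho^{-(4+\gamma)}$ thanks to (A2). Setting $\Lambda=\lambda/\varepsilon^2$, for $\lambda$ of order $\varepsilon^2$ the rescaled problem is an $O(\varepsilon^{2+\gamma})$ perturbation of the $V_1$ eigenvalue problem with parameter $\Lambda$. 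The plan is to propagate the solution that is regular at $r=0$ through the inner zone, where $V_0$ dominates and $V_{1,\varepsilon}=O(\varepsilon^2)$ is a perturbation of $\Delta-V_0$; since (A3) says $0$ is not a resonance of $\Delta-V_0$, the regular solution emerges into the matching zone in a direction that is non-degenerate relative to the decaying direction of the asymptotic $\lambda=0$ dynamics of $\Delta-V_0$. After a blow-up in the compactified phase space in the transition region $\rho\sim \varepsilon$, this outgoing direction lands in a unique direction of the far-field problem, giving a well-defined "shooting target". The number of values of $\Lambda$ at which this target hits the stable manifold of the decaying direction at $\rho=\infty$ of $\Delta-V_1$ is exactly $m(V_1)$, by the definition \eqref{def:m} applied to $V_1$. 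Invoking transversality and continuous dependence on $\varepsilon$, each of these $m(V_1)$ values perturbs to a genuine eigenvalue $\lambda(\varepsilon)=\varepsilon^2\Lambda(\varepsilon)$ of $\Delta - W_\varepsilon$ in $(0,\mu_0)$, and no other $\lambda$ in the gap produces an intersection.

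\smallskip

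\noindent\textbf{Part II: Eigenvalues in $[\mu_0,\infty)$.} For $\lambda$ bounded away from $0$, $V_{1,\varepsilon}$ is pointwise $O(\varepsilon^2)$ and its $L^{3/2}$-norm is $\varepsilon$-independent by \eqref{eqn:scaling}, but its action on any fixed $\lambda$-regular/decaying invariant manifold of $\Delta-V_0$ is $O(\varepsilon^2)$ in the relevant operator topology on the compactified phase space. Each of the $m(V_0)$ eigenvalues of $\Delta-V_0$ corresponds, by (A3), to a transverse intersection of the regular and decaying manifolds; regular perturbation theory (equivalently, the implicit function theorem applied to the shooting function) then produces, for all sufficiently small $\varepsilon$, exactly one nearby eigenvalue of $\Delta - W_\varepsilon$ for each. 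Conversely, a uniform lower bound on the shooting function away from these $m(V_0)$ points (plus the CLR bound \eqref{eqn:clr} to cap the total count) rules out spurious eigenvalues in $[\mu_0,\infty)$.

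\smallskip

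\noindent\textbf{Conclusion and main obstacle.} Summing Parts I and II yields $m(V_0)+m(V_1)$ positive eigenvalues. The main obstacle is rigorously connecting the two scales: the compactified phase portrait of \eqref{eqn:eveq} develops a degeneracy in the transition region $r\sim 1/\varepsilon$, where neither $V_0$ nor $V_{1,\varepsilon}$ dominates in a naive sense, and the blow-up procedure referenced in the discussion of (A1) must resolve this degeneracy and show that the "outgoing" invariant direction from the inner problem matches unambiguously onto the "incoming" invariant direction of the far-field problem, with errors controlled by the extra decay granted by (A2). Once this matching lemma is in place, counting intersections and applying (A3) to secure transversality on both sides gives the theorem.
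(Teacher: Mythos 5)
Your overall architecture coincides with the paper's: recast the eigenvalue equation as a shooting problem in a compactified phase space, capture gap eigenvalues $\lambda=\varepsilon^2\mu$ through the far-field $V_1$ problem, treat $O(1)$ eigenvalues as a regular perturbation of the $V_0$ problem, and invoke (A2), (A3) and a blow-up at the degenerate corner. The genuine gap is that the matching step you explicitly defer (``once this matching lemma is in place'') \emph{is} the technical content of the theorem, and the substitutes you offer for it do not work as stated. The rescaled problem is not an $O(\varepsilon^{2+\gamma})$ perturbation of $\Delta-V_1$: the term $\varepsilon^{-2}V_0(\rho/\varepsilon)$ is of size $\varepsilon^{-2}$ for $\rho\lesssim\varepsilon$ and is only bounded by $C\varepsilon^{2+\gamma}\rho^{-(4+\gamma)}$ away from $\rho=0$, so ``continuous dependence on $\varepsilon$'' cannot be invoked uniformly; symmetrically, the regular-at-zero data must be transported out to a matching radius that diverges in $r$ as $\varepsilon\to0$, across a plateau where the angular flow contracts only algebraically (rate $\sim 2/r$), so no finite-interval continuity argument reaches it. The paper resolves precisely this by matching at an intermediate radius $r_\varepsilon=\varepsilon^{\alpha}$ with $\alpha\in(-\tfrac12,0)$ chosen so that $(4+\gamma)(1-\kappa)>2+\tfrac{\gamma}{4}$, by bootstrap estimates across both plateaus (Lemmas \ref{lem:outer_threshold} and \ref{eta_estimate_outerplat}), by the blow-up asymptotics $\theta_0(r)=O(r^{-2})$ of Lemma \ref{Lemma:theta00} (this is where (A3) and the extra decay in (A2) enter quantitatively), and by the conversion $\tan\theta=\varepsilon\tan\psi$, which costs a factor $1/\varepsilon$ and is the reason the outer estimate must be $O(\varepsilon^2)$ (Corollary \ref{theta_estimate_outer}). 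Your own placement of the transition region is inconsistent ($\rho\sim\varepsilon$ in Part I, $r\sim 1/\varepsilon$ in the conclusion, versus $1\ll r_\varepsilon\ll\varepsilon^{-1/2}$ in the paper), which signals that the location and quantitative form of the matching --- the crux --- is not pinned down.

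A second, smaller gap concerns exactness of the count. The CLR bound \eqref{eqn:clr} gives only finiteness; it is not sharp and cannot cap the number of eigenvalues at $m(V_0)+m(V_1)$, and transversality of the model intersections does not by itself exclude extra zeros of the glued shooting function at finite $\varepsilon$. The paper gets exactness from a monotonicity-in-$\mu$ statement for the shooting function $\Sigma^k$ (Lemma \ref{lem:exact_count}), giving at most one zero per lift, together with simplicity of radial eigenvalues from Sturm--Liouville theory; an analogous derivative argument, not CLR, handles the $O(1)$ regime. With the plateau estimates and the monotonicity argument supplied, your outline would become the paper's proof.
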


Theorem \ref{thm:1} shows that this structure, with its scale separation through $V_{1,\varepsilon}$, leads to rich spectral properties, particularly regarding eigenvalues in the gap between the essential spectrum $(-\infty,0]$ and the discrete spectrum associated with $\Delta - V_0$.  To the best of our knowledge, the precise description of the spectrum generated by two potentials separated by scales in this manner has not been described before in the literature.  

\noindent {\bf Remark} We actually show that the eigenvalues in the spectral gap $(0,\mu_0)$ inherited from the far field potential are all of $\mathcal{O}(\varepsilon ^2)$. The eigenvalues inherited from the near field potential only perturb a small amount when $\varepsilon$ is turned on. 

Many other results exist to give integrability conditions on $W$ (and hence $V_0$, $V_1$) such that no discrete eigenvalues could exist, see \cite{bargmann1952number,berthier1982point} and Theorem XIII.9 of \cite{reed1978iv} for an overview.  Interestingly, our results in Theorem \ref{thm:1} then prove that related conditions on $V_1$ make this potential still globally perturbative to $V_0$ in a natural way.

It is of note that the scalings we consider here that are sharp with respect to the (CLR) bound make the problem critical.  In particular, for a scaling of the form $\varepsilon^r V_1 (\varepsilon r)$, if $r> 2$ then the problem would be globally perturbative to $V_0$, while if $r <2$ it would be globally perturbative to $V_1$.

\subsection{Comparative analysis of dynamics on two distinct scales - Outline of the proof} At its heart, the proof uses a version of Sturm-Liouville Theory expressed through an angular flow. We transform the system so as to realize the eigenvalue problem as finding a heteroclinic connection in an appropriate phase space.  We then analyze two distinct model dynamical systems; one related to $V_0$ and the other to $V_1$. The key steps are as follows:

\begin{enumerate}
    \item \textbf{Domain compactification:} To study the behavior near spatial infinity and near the origin within a unified framework, we compactify the radial domain \( r \in [0, \infty) \) by introducing a new dependent variable realted to $r$, and then transform it in its role as the independent variable. We perform this operation in two distinct ways  adapted to each scale. To focus on the near field, we use  $\sigma = \frac{r}{1+r} \in [0,1)$, while for the outer scaling regime we work with $\rho = \varepsilon r$, and in its compactified form \( \tau = \frac{\rho}{1+\rho} \in [0,1) \). These allow us to separate and analyze the distinct dynamics induced by the leading-order potentials \( V_0 \) and \( V_1 \) on their respective scales.

    \item \textbf{Matched regime approximation:} We identify an intermediate region, mapped to a small neighborhood in the compactified variables, outside of which solutions can be compared to model problems.
    \begin{itemize}
        \item Lemma~\ref{lem:outer_threshold} below shows that solutions closely follow the dynamics governed by \( V_0 \) up to this intermediate region, with an error of size \( \mathcal{O}(\varepsilon) \).
        \item Lemma~\ref{eta_estimate_outerplat} below shows that solutions re-enter a regime governed by \( V_1 \), with error \( \mathcal{O}(\varepsilon^2) \), provided \( \mu \notin \sigma_p(\Delta - V_1) \).
    \end{itemize}
    These two Lemmas provide a justification for treating the domain as approximately partitioned into regions where either \( V_0 \) or \( V_1 \) dominates the dynamics.

    \item \textbf{Spectral gluing:} In Section~\ref{sec:proof}, we use topological and spectral arguments to combine the eigenvalue counts obtained in the \( V_0 \) and \( V_1 \)-dominated regimes. This yields the full spectral information and completes the proof of Theorem~\ref{thm:1}.
\end{enumerate}

\subsection{Outline of the paper}
The paper is organized as follows. Section \ref{sec:geom} formulates the eigenvalue count through the two key scalings. We introduce the model systems and define a threshold for the spatial variable at which we match the model problems. Section \ref{sec:estimates} develops the necessary machinery for tracking the relevant trajectories, determined by the asymptotic invariant manifolds, in the compactified phase space. Section \ref{sec:proof} completes the proof of Theorem \ref{thm:1} through a gluing argument at the threshold. Section \ref{sec:appl} provides numerical examples validating our theoretical results across different classes of potentials satisfying our hypotheses. Finally, Section \ref{sec:conclusion} discusses our findings and future directions.

%INTRO ENDS HERE

\section{A geometric framework for the spectral problem}\label{sec:geom}
The eigenvalue problem on radial $L^2$ functions is a boundary value problem with a condition of regularity at $r=0$ and decay as $r \rightarrow \infty$. %For $L$ as in \eqref{eqn:Ldef}, 
The eigenvalue equation is thus $Lp=\lambda p$ with
\begin{equation}
Lp = p_{rr} + \frac{2}{r} p_r - \left(V_0+V_{1,\varepsilon}\right) p =\lambda p
,\label{eq:Lp}
\end{equation}
where $V_0=V_0(r)$ and $V_{1,\varepsilon}(r)=\varepsilon^2V_1(\varepsilon r)$
The full eigenvalue problem for $\lambda$ being in the point spectrum of $L$ thus becomes:
\begin{equation}
p_{rr} + \frac{2}{r} p_r - \lambda p - \left(V_0+V_{1,\varepsilon}\right) p = 0
\label{eq:Lp1}
\end{equation}
together with boundary conditions
\begin{equation}
\lim_{r\rightarrow 0} p_r(r)=0, \hspace{2mm} {\rm and}\hspace{1mm}
\lim_{r\rightarrow \infty}p(r) =0
.\label{eq:LpBC}
\end{equation}
We will refer to this as the {\em inner system} as the potential in the far field is small, and it has a well-defined limit when $\varepsilon \to 0$ where only the inner potential $V_0$ remains. 

An analogous system will be derived  that focuses on the outer potential, called the {\em outer system}. This system will not have a limit as $\varepsilon\to 0$, and so we will always refer back to \eqref{eq:Lp1} and \eqref{eq:LpBC} for solving the eigenvalue problem. 

To formulate the outer system, we scale the independent variable $r=\frac\rho\varepsilon$, then
\begin{equation} \label{eq:Lp4}
Lp=\varepsilon^2 p_{\rho\rho} +\varepsilon^2\frac2\rho p_\rho-\left( V_0\left(\frac\rho\varepsilon\right)+\varepsilon^2V_1\left(\rho\right)\right)p=\lambda p.
\end{equation}
Using $V_{0,\varepsilon^{-1}} \left( \rho \right)=\frac{1}{\varepsilon^2}V\left(\frac{\rho}{\varepsilon}\right)$ and $\lambda =\varepsilon^2 \mu$, we can rewrite \eqref{eq:Lp4} as
\begin{equation}\label{eq:Lp5}
    \tilde{L}p :=p_{\rho\rho}+\frac2\rho p -\left(V_{0,\varepsilon^{-1}} \left(\rho\right) + V_1\left({\rho}\right)\right)p=\mu p,
\end{equation}
where we have introduced the modified Schr\"odinger operator $\tilde{L}$.
The boundary conditions are, analogously,
\begin{equation}
\lim_{\rho\rightarrow 0} p_\rho(\rho)=0 \hspace{2mm} {\rm and}\hspace{1mm}
\lim_{\rho\rightarrow \infty}p(\rho) =0.
\label{eq:LpBCrho}
\end{equation}
If $\varepsilon > 0$ then\eqref{eq:Lp5}-\eqref{eq:LpBCrho} is equivalent to \eqref{eq:Lp1}-\eqref{eq:LpBC}. 
\subsection{Eigenvalue problem as a dynamical system} Returning to the inner formulation of the eigenvalue problem \eqref{eq:Lp1}, and
setting $p_r=v$, we can write \eqref{eq:Lp1} as a first-order system
\begin{equation}
\begin{split}
  p'=&v, \\
  v'=& -\frac2r v +\lambda p + \left(V_0+V_{1,\varepsilon}\right) p,
  \end{split}\label{Lp1_sys}
\end{equation}
where ${}' = \frac{d}{dr}$.

The system \eqref{Lp1_sys} is a non-autonomous system on $\mathbb{R}^2$ with independent variable $r\in (0,\infty)$. A standard trick in dynamical systems is to make it autonomous by introducing $r$ as a new dependent variable. We will perform a variant of this strategy that has the effect of also compactifying the phase space. To that end, we set $\sigma=\frac{r}{r+1}$, and note that $\sigma'=\left( 1-\sigma \right) ^2$. The phase space of the resulting equation is
$\mathbb{R}^2\times (0,1)$, with the independent variable still being $r$. The system is singular at $\sigma =0$ since $\frac2r=2\left(\frac{1-\sigma}{\sigma}\right)$. To desingularize the system, we introduce a new independent variable $s=r+\ln{r}$. In these variables, the augmented version of\eqref{Lp1_sys} becomes
\begin{equation}
\begin{split}
  \dot{p}=&\sigma v, \\
 \dot{v}=& 2(\sigma-1) v +\sigma \left(\lambda p + \left(V_0+V_{1,\varepsilon}\right) p\right),\\
  \dot{\sigma}=&\sigma (1-\sigma)^2,
  \end{split}\label{Lp1_comp}
\end{equation}
where $\dot{}=\frac{d}{ds}$.
The eigenvalue condition now becomes one of finding a trajectory $\left( p(s),v(s),\sigma(s)\right)$ of \eqref{Lp1_comp} for which:
\begin{equation}\label{eq:BC1}\lim_{s\rightarrow -\infty}v(s)=0
\end{equation} and
\begin{equation}\label{eq:BC2}
\lim_{s\rightarrow +\infty} p(s)=0,
\end{equation} where we also require that $\sigma(s)\ne 0$ or $1$ for any finite $s$. 

This last condition requires some explanation which can also serve to illuminate this dynamical systems viewpoint.  The phase space for \eqref{Lp1_comp} is $\mathbb{R}^2 \times [0,1]$. With suitable extensions of the potentials to $\sigma =1$ (see \eqref{eq:V0_lim} and \eqref{V1_lim} below for formal definitions), the planes $\sigma =0 $ and $\sigma =1$ are both invariant, with the former corresponding to $r=0$ and the latter the limit $r\rightarrow \infty$. The great benefit of this approach is that the asymptotic behavior, both at $0$ and $\infty$, is now represented within the phase space. The trajectory we want to get that corresponds to the eigenfunction lives in between these invariant planes. The condition that $\sigma \ne 0$ guarantees that we would not get a trajectory in the limit planes. 

To recover the eigenfunction, the parameterization needs to be set correctly. This is achieved by choosing $\sigma=\frac12$ when $s=1$. This compactification is an example of a general construction that is explored in some detail in \cite{wieczorek2021compactification}. It follows from \cite{wieczorek2021compactification} that \eqref{Lp1_comp} is a $C^1$ system on its phase space because of the assumed decay of the potentials. 
In particular, the assumptions (A1)-(A3) ensure that the vector field extends smoothly to the compactified boundaries and that the number of positive eigenvalues is finite.

The analogous compactification and desingularization for the outer system involves setting $\tau=\frac{\rho}{\rho+1}$ and $t=\rho + \ln \rho$. It reads very similarly to \eqref{Lp1_comp} 
\begin{equation}
\begin{split}
  \dot{p}=&\tau w, \\
 \dot{w}=& 2(\tau-1) w +\tau \left(\mu p + \left(V_{0,\varepsilon}+V_1\right) p\right),\\
  \dot{\tau}=&\tau (1-\tau)^2,
  \end{split}\label{eq:Lp5_comp}
\end{equation}
where now $\dot{}=\frac{d}{dt}$. The boundary conditions \eqref{eq:BC1} and \eqref{eq:BC2} are the same with $s$ replaced by $t$ and $v$ by $w$.  The variable $p$ is the same in \eqref{Lp1_comp} and \eqref{eq:Lp5_comp} and $v=\varepsilon w$.

\subsection{The angular system}
Because the original eigenvalue equation is linear, system \eqref{Lp1_sys} is linear, as are the first two equations of \eqref{Lp1_comp} in $(p,v)$. We can then express $(p,v)$ in polar coordinates and the angular equations will decouple from the radial part. 

Setting the angular variable to be $\theta = \arctan(\frac{v}{p})$, we obtain the following system
\begin{equation}\label{eq:Lp3}
    \begin{split}
        \dot{\theta} = & (\sigma -1)\sin{2\theta} +\sigma \left(\left(\lambda + \tilde{V}_{0} (\sigma) + \tilde{V}_{1,\varepsilon} (\sigma) \right) \cos^2 \theta -\sin^2 \theta \right),\\\dot{\sigma} = & \sigma(1-\sigma)^2,
    \end{split}
\end{equation}
\normalsize
where $\dot{} = \frac{d}{ds}$. The functions $\tilde{V}_0$ and $\tilde{V}_{1,\varepsilon}$ are given by
\begin{equation}\label{eq:V0_lim}
\tilde{V}_0=\begin{cases}V_0 \left(\frac{\sigma}{1-\sigma }\right)\hspace{3mm} &{\rm if} \hspace{3mm}\sigma \ne 1,\\
0 \hspace{3mm} &{\rm if} \hspace{3mm}\sigma
     =1, \end{cases}
\end{equation}
and
\begin{equation}\label{V1_lim}
\tilde{V}_{1,\varepsilon}=\begin{cases}V_{1,\varepsilon} \left(\frac{\varepsilon\sigma}{1-\sigma }\right)\hspace{3mm} &{\rm if} \hspace{3mm}\sigma \ne 1,\\
0 \hspace{3mm} &{\rm if} \hspace{3mm}\sigma
     =1. \end{cases}
\end{equation}

We can take $\theta$ to lie in either $\mathbb{R}$ or $S^1$.  The latter is achieved by taking $\theta \in [-\frac\pi2,\frac\pi2]$, mod $\pi$, which corresponds to taking a wrapped branch of $\arctan$.
The fact that system \eqref{eq:Lp} is $C^1$ on either phase space follows from assumption (A1). 
  
  Another way to think of the angular system is for the angle of a subspace as it evolves under the linear dynamics of the first two equations of \eqref{Lp1_comp}. Since scalar multiples of solutions are also solutions, subspaces are carried to subspaces under the dynamics, which is why the angular equation decouples  from the radial part.

On $\sigma =0$ the system reduces to $\dot{\theta}=-\sin 2 \theta$. On $\sigma =1$ the equation is $\dot{\theta} =\lambda \cos^2\theta -\sin^2 \theta$. The boundary conditions for the eigenvalue problem will be realized as asymptotic conditions to the relevant fixed points in these reduced equations.

Taking $\psi=\arctan(\frac{w}{p}) $, there is an analogous angular system for the outer equation:
\begin{equation}\label{eq:Lp6}
    \begin{split}
        \dot{\psi} = & (\tau -1)\sin{2\psi} +\tau \left(\left(\mu + \tilde{V}_{0,\varepsilon^{-1}} (\tau) + \tilde{V}_1 (\tau) \right) \cos^2 \psi -\sin^2 \psi \right),\\\dot{\tau} = & \tau(1-\tau)^2,
    \end{split}
\end{equation}
where $\dot{} = \frac{d}{dt}$ and $\tilde{V}_{0,\varepsilon^{-1}}$ and $\tilde{V}_1$ are extended in a similar fashion to \eqref{eq:V0_lim} and \eqref{V1_lim}. Note that \begin{equation}
    \theta=\arctan \left( \frac{v}{p}\right)=\arctan \left( \frac {\varepsilon  w}{ p} \right),
\end{equation}
and so 
\begin{equation}\label{theta_psi}
    \tan \theta = \varepsilon \tan \psi.
\end{equation}

\subsection{Boundary conditions and invariant manifolds}
For \eqref{eq:Lp3} on $S^1\times [0,1]$, there are two equilibria in each of  $\sigma=0$ and $\sigma =1$. In $\sigma=0$, we have
\begin{equation}\label{eq:fps1}
     \hat{\theta}_{0,-} = \left(0,0\right), \hspace{3mm} \hat{\theta}_{0,+} = \left(0,\frac{\pi}{2}\right),
\end{equation}
 and  in $\sigma =1$: 
\begin{equation}\label{eq:fps2}
\hat{\theta}_{1,\pm}(\lambda) = \left(1,\arctan (\mp\sqrt{\lambda})\right).
\end{equation}
The regularity boundary condition at $r=0$ will be satisfied by tending to $\hat{\theta}_{0,-}$, as that corresponds to $v=p_r=0$. For the boundary condition as $r \rightarrow +\infty$, it helps to see that the equation governing the dynamics in $\sigma =1$ is exactly the angular equation for the eigenvalue problem in one space dimension. The angle $\arctan \left( -\sqrt{\lambda}\right)$ is that of the subspace of decaying solutions, and the boundary condition is to find a trajectory tending to this value. 

Within each of their respective invariant planes, i.e., $\sigma =0 $ or $1$, the fixed points with a $+$ superscript are unstable (positive eigenvalue), while those with a $-$ superscript are stable (negative eigenvalue). Note the reversal of signs from the argument of the $\arctan$. Also, note the irony that the angle of the space of decaying solutions is actually unstable inside $\sigma =1$ as solutions that do not decay must grow and hence their angles are pushed away. 

Since $\sigma =0$ is repelling for \eqref{eq:Lp3} and $\sigma =1$ is attracting, albeit not exponentially, the fixed points $\hat{\theta}_{0,-} $ and $\hat{\theta}_{1,+} $ are both saddle-like, each with one attracting direction and one unstable direction. Although for $\hat{\theta}_{1,+}$ the attracting direction is a center direction (zero eigenvalue). 

We will, on occasion, include or suppress the dependence of $\theta$ on one or both of $\lambda$ and $\varepsilon$. From these considerations, the following proposition easily follows.
\begin{prop}
A value $\lambda \in \mathbb{R}$ is an eigenvalue of $L$ for the space of radial $L^2$ functions iff \eqref{eq:Lp3} has a heteroclinic orbit $\left( \theta(\lambda, s), \sigma (s) \right)$ connecting from  $\hat{\theta}_{0,-}$ to $\hat{\theta}_{1,+}$, i.e., 
$$ \left( \theta(\lambda, s), \sigma (s) \right)\rightarrow \begin{cases} \hat{\theta}_{0,-} \hspace{3mm} {\rm as} \hspace{3mm} s \rightarrow -\infty,\\
     \hat{\theta}_{1,+}(\lambda) \hspace{3mm} {\rm as} \hspace{3mm} s \rightarrow +\infty. \end{cases}$$

\end{prop}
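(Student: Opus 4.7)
The plan is to prove the equivalence by showing that each boundary condition of the eigenvalue problem \eqref{eq:Lp1}--\eqref{eq:LpBC} translates precisely into one of the asymptotic conditions at the fixed points $\hat{\theta}_{0,-}$ and $\hat{\theta}_{1,+}(\lambda)$, and conversely that any such heteroclinic in \eqref{eq:Lp3} lifts to an honest $L^2_{\text{rad}}$ eigenfunction. The key observation is that the angular system is obtained from the linear planar system \eqref{Lp1_sys} by projectivization, so a trajectory in $(\theta,\sigma)$ only prescribes the direction of $(p,v)$; the magnitude $R(s) = \sqrt{p(s)^2 + v(s)^2}$ satisfies a scalar linear ODE driven by $\theta(s)$ and may be recovered by quadrature once $\theta(s)$ is known.

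For the forward direction, I would start from an eigenfunction $p \in L^2_{\text{rad}}(\mathbb{R}^3)$ with eigenvalue $\lambda > 0$. Smoothness of radial $H^2$ solutions at the origin forces $p_r(0) = 0$, so $v/p \to 0$ as $r \to 0^+$, which in the compactified time $s = r + \ln r$ means $(\sigma,\theta) \to (0,0) = \hat{\theta}_{0,-}$ as $s \to -\infty$. For $s \to +\infty$, I would use that for large $r$ the eigenvalue equation \eqref{eq:Lp1} reduces, via $q = rp$, to $q_{rr} = \lambda q + o(1) q$ because of the decay in (A1), so any $L^2_{\text{rad}}$ solution must asymptote to the decaying Jost solution $p \sim C r^{-1} e^{-\sqrt{\lambda} r}$, giving $p_r/p \to -\sqrt{\lambda}$ and hence $\theta(s) \to \arctan(-\sqrt{\lambda}) = \hat{\theta}_{1,+}(\lambda)$, while $\sigma \to 1$ by construction.

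For the reverse direction, given a heteroclinic $(\theta(s),\sigma(s))$ from $\hat{\theta}_{0,-}$ to $\hat{\theta}_{1,+}(\lambda)$, I would reconstruct $p(s) = R(s)\cos\theta(s)$, $v(s) = R(s)\sin\theta(s)$ by integrating the linear magnitude equation, then pull back to the variable $r$ via the inversion of $s = r + \ln r$ fixed by $\sigma(1) = \tfrac12$. The resulting $p(r)$ is by construction a solution of \eqref{eq:Lp1}; regularity at $r=0$ follows from $\theta \to 0$ (so $v = p_r \to 0$), and the angular limit $\theta \to \arctan(-\sqrt{\lambda})$ selects the decaying asymptotic subspace of the limiting linear system in $\sigma=1$, forcing exponential decay $p(r) \lesssim r^{-1}e^{-\sqrt{\lambda}r}$ and hence $p \in L^2_{\text{rad}}$. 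The main obstacle is handling the non-hyperbolicity of $\hat{\theta}_{1,+}$ in the normal $\sigma$-direction, where $\dot\sigma = \sigma(1-\sigma)^2$ decays only algebraically: one must argue that convergence of $\theta$ to $\arctan(-\sqrt{\lambda})$ in $s$-time nevertheless corresponds to genuine exponential decay of $p$ in $r$-space. This comes down to the observation that the logarithmic reparametrization $s = r + \ln r$ is a bounded perturbation of the identity for large $r$, together with Gronwall applied to the magnitude equation, so that the decay rate dictated by the unstable eigenvalue of $\hat{\theta}_{1,+}$ inside $\sigma = 1$ translates cleanly back to $L^2$ integrability of $r \mapsto p(r)$.
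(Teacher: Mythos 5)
Your proposal is correct and follows essentially the same route as the paper, which deduces the proposition directly from the projectivization of \eqref{Lp1_sys}: the regularity condition at $r=0$ is identified with the angle tending to $\hat{\theta}_{0,-}$ (excluding the singular Frobenius branch), and decay at infinity with the angle tending to $\arctan(-\sqrt{\lambda})$, i.e.\ $\hat{\theta}_{1,+}(\lambda)$. You simply spell out the standard details (magnitude recovery by quadrature, Jost asymptotics, and the $s=r+\ln r$ reparametrization) that the paper leaves implicit.
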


This eigenvalue condition can now be expressed in terms of invariant manifolds associated with these fixed points. 

The boundary conditions \eqref{eq:BC1} and \eqref{eq:BC2} are thus satisfied if and only if 
\begin{equation}\label{eq:BC_IM1}
\left( \theta(\lambda, s), \sigma (s) \right) \in W^u\left(\hat{\theta}_{0,-}\right),
\end{equation}
and
\begin{equation}\label{eq:BC_IM2}
\left( \theta(\lambda, s), \sigma (s) \right)\in W^c\left(\hat{\theta}_{1,+}(\lambda)\right),
\end{equation}
respectively. Note that both $W^u\left(\hat{\theta}_{0,-}\right)$ and $W^c\left(\hat{\theta}_{1,+}\right)$ are $1-$dimensional and hence each consists of exactly one trajectory. 

While, in general, center manifolds are not uniquely defined, $W^c\left(\hat{\theta}_{1,+}\right)$ is well-defined and unique as the complementary direction is (exponentially) unstable.

We will denote the trajectory in $W^u\left(\hat{\theta}_{0,-}\right)$ by \begin{equation}\label{eq:thetahatminus}
    \hat{\theta}_-(\lambda, s)=\left( \theta_-(\lambda, s), \sigma (s) \right) ,
\end{equation}
and the one in $W^c\left(\hat{\theta}_{1,+}\right)$ by
\begin{equation}
    \hat{\theta}_+(\lambda,s)=\left( \theta_+(\lambda, s), \sigma (s) \right) .
\end{equation}
In both cases, we parameterize $\sigma(s)$ so that $\sigma\left(1\right)=\frac12 $, consistently with $\sigma = \frac{r}{r+1}$.

Both $\hat{\theta}_-(\lambda,s)$ and $\hat{\theta}_+(\lambda,s)$ are (at least) $C^1$ in $s$ and $\lambda$ from the unstable and center manifold theorems. 

\subsection{Matching condition and proof strategy}
The proof proceeds by finding $\lambda$ so that \begin{equation}\label{eq:cond1}
    \hat{\theta}_-(\lambda,s)=\hat{\theta}_+(\lambda,s),
\end{equation}
for some $s\in\mathbb{R}$. Note that if \eqref{eq:cond1} holds for some $s$, it does so for all $s$ as both $\hat{\theta}_-$ and $\hat{\theta}_+$ are solutions of the same equation.

A key point will be to choose a value of $s$ where the underlying trajectory hits a $\sigma$-section that lies between the inner and outer regions. Since the potentials are originally functions of $r$, it is most natural to set an $r$ value first. To that end, we set $r_\varepsilon=\varepsilon^\alpha$ and so $\rho_\varepsilon=\varepsilon^{\alpha +1}$ where $\alpha\in \left(-1 , 0\right)$.  We observe later in Remark \ref{rem:alphacon} that we require $\alpha > -\frac12$ to complete the proof.\footnote{Later, we will need to take $\alpha>-\frac12$ sufficiently close to $-\frac12$ such that $(4+\gamma)(-\alpha) > 2 + \frac{\gamma}{4}$ with $\gamma$ as in assumption (A2).}  The corresponding $s$ and $t$ will be $s_\varepsilon=\varepsilon^\alpha +\alpha \ln \varepsilon$ and $t_\varepsilon =\varepsilon^{\alpha -1} +\left( \alpha -1\right)\ln \varepsilon $

The $\sigma$-section will then be \begin{equation}\label{sigma_threshold}
    \sigma_\varepsilon = \frac{\varepsilon^{\alpha}}{1+\varepsilon^{\alpha}},
\end{equation} 
and, in terms of $\tau$ \begin{equation}\label{tau_threshold}
   \tau_\varepsilon=\frac{\varepsilon^{\alpha-1}}{1+\varepsilon^{\alpha-1}}.
\end{equation}
An eigenvalue $\lambda$ corresponds to a zero of  the quantity
\begin{equation}\label{eq:cond2}
    \Sigma\left(\lambda, \varepsilon\right)= \theta_-\left( \lambda, s_\varepsilon, \varepsilon\right)-\theta_+\left(  \lambda, s_\varepsilon,\varepsilon\right),
\end{equation}
which is the difference between the two quantities in \eqref{eq:cond1} where the trajectories both hit the $\sigma_\varepsilon$-section. From the Implicit Function Theorem and the second equation in \eqref{eq:Lp3} evaluated at $\sigma_\varepsilon$, it follows that $\Sigma \left( \lambda, \varepsilon \right)$ is, at least, $C^1$ in its arguments.

\section{Estimates for eigenvalue parameters close to zero}\label{sec:estimates}
The hardest part is detecting eigenvalues close to zero, and we develop the needed estimates for this case first. Set $\lambda =\varepsilon^2\mu$ with $\mu$ order one. We need to estimate $\theta_-(\lambda,s)$ from $s\to-\infty$ up to the threshold $s=s_\varepsilon$. Similarly, we will obtain an estimate of $\theta_+(\lambda,s)$ backwards from $s\to +\infty$ to the threshold $s=s_\varepsilon$. 

Each will be estimated by the solution of a model problem. For the inner problem, the outer potential $V_{1,\varepsilon}$ is omitted. Similarly, for the outer problem, the inner potential $V_{0,\varepsilon}$ is omitted. 

As will be seen in the proofs below, there is a region around the threshold where the estimates are harder to establish and we call this the {\em plateau}. 
\subsubsection{Inner problem}
 
The inner model problem we use here is obtained by setting $\lambda=0$ in the angular equation, and keeping only the inner potential. In its compactified and desingularized form, it is
\begin{equation}\label{eq:Lp_inner}
    \begin{split}
        \dot{\theta} = & (\sigma -1)\sin{2\theta} +\sigma \left(\tilde{V}_{0} (\sigma)\cos^2 \theta -\sin^2 \theta \right),\\\dot{\sigma} = & \sigma(1-\sigma)^2.
    \end{split}
\end{equation}

The regularity boundary condition at $r=0$ leads us to consider $$\hat{\theta}_0 \left(  s\right)=\left( \theta_0(s), \sigma (s)\right),$$ where $\theta_0(s)\to \theta_{0,-}$ as $s\to - \infty$. In other words, $\hat{\theta}_0 \left(  s\right)$ is the trajectory in $W^u(\hat{\theta}_{0,-})$ for the system \eqref{eq:Lp_inner} with the usual parameterization. 

 The goal is to estimate 
\begin{equation}\label{theta_diff}
\left |\theta_-(\lambda,s) -\theta_0(s )\right|.\end{equation} The quantity in \eqref{theta_diff} tends to $0$ as $s\to -\infty$ and we want to show that it remains small, in terms of $\varepsilon$, up to $s=s_\varepsilon$.  We will divide the interval $(-\infty,s_\varepsilon]$ into two subintervals. Fix an $s_0\in(0,1)$ independently of $\varepsilon$. We shall think of $s_0$ as close to $1$, but it is key that it will stay fixed as $\varepsilon \to 0$. The first estimate is on $(-\infty,s_0]$.

\begin{lemma}\label{pre_plat} There is a constant $C_1$ and an $\varepsilon_0 >0$ so that

$$\left |\theta_-(\varepsilon^2\mu,s) -\theta_0(s )\right|\le C_1 \varepsilon ^2$$
when $\varepsilon \le \varepsilon_0$ and $s\in(-\infty,s_0]$ 

\end{lemma}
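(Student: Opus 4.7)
The plan is to set $\eta(s) = \theta_-(\varepsilon^2\mu, s) - \theta_0(s)$ and derive a scalar linear inhomogeneous ODE for $\eta$. Because the $\sigma$-equation $\dot\sigma = \sigma(1-\sigma)^2$ is identical in both \eqref{eq:Lp3} (with $\lambda = \varepsilon^2\mu$) and \eqref{eq:Lp_inner}, and both trajectories share the normalization $\sigma(1)=1/2$, the function $\sigma(s)$ is literally the same in both systems. Subtracting the two angular equations and applying the mean value theorem to the Lipschitz terms $\sin 2\theta$, $\sin^2\theta$, $\cos^2\theta$ yields
\begin{equation}
\dot\eta = A(s,\varepsilon)\,\eta + F(s,\varepsilon),
\end{equation}
where $A$ is uniformly bounded on $(-\infty, s_0]$ (the coefficient is built from bounded trig terms, $\sigma\in[0,\sigma(s_0)]\subset[0,1)$, and the potentials $\tilde V_0$, $\tilde V_{1,\varepsilon}$, all bounded on this interval) and
\begin{equation}
F(s,\varepsilon) = \sigma(s)\bigl(\lambda + \tilde V_{1,\varepsilon}(\sigma(s))\bigr)\cos^2\theta_-(s).
\end{equation}
Since $\lambda = \varepsilon^2\mu$ and $\tilde V_{1,\varepsilon}(\sigma) = \varepsilon^2 V_1(\varepsilon\sigma/(1-\sigma))$ with $V_1$ bounded, we have $|F(s,\varepsilon)| \le C\varepsilon^2$ uniformly for $s \in (-\infty, s_0]$.

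To handle the initial behavior as $s\to -\infty$, observe that both trajectories lie on the one-dimensional unstable manifold of the common fixed point $\hat\theta_{0,-}$: at $\sigma=0$ the angular equation reduces to $\dot\theta = -\sin 2\theta$ in both systems. Viewing the full angular vector field as an $\varepsilon^2$-perturbation of the inner one, and invoking the smooth parameter-dependence of local unstable manifolds, we obtain a $C^1$ graph representation $\theta = \Theta(\sigma, \varepsilon)$ of $W^u_{\rm loc}(\hat\theta_{0,-})$ with $\Theta(\sigma,0) = \Theta_0(\sigma)$ and
\begin{equation}
|\Theta(\sigma,\varepsilon) - \Theta_0(\sigma)| \le C\varepsilon^2
\end{equation}
uniformly on an $\varepsilon$-independent interval $[0,\sigma^*]$. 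Choosing $s_{\rm loc}$ so that $\sigma(s_{\rm loc}) = \sigma^*$, this gives $|\eta(s)| \le C\varepsilon^2$ for all $s \le s_{\rm loc}$.

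Finally, on the bounded interval $[s_{\rm loc}, s_0]$ we apply Gronwall's inequality: starting from $|\eta(s_{\rm loc})| \le C\varepsilon^2$, with bounded coefficient $|A|\le M$ independent of $\varepsilon$ and forcing $|F|\le C\varepsilon^2$, integration of the linear ODE gives $|\eta(s)| \le C_1\varepsilon^2$ throughout, where $C_1$ depends only on the fixed length $s_0 - s_{\rm loc}$, $M$, and $C$. Combining the two regimes, and choosing $\varepsilon_0$ sufficiently small that the local neighborhood analysis is valid, produces the claimed uniform bound on $(-\infty, s_0]$.

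The principal subtlety lies in the second step: one must verify that the perturbation of the inner angular vector field is genuinely $O(\varepsilon^2)$ in $C^1$ norm on a fixed neighborhood, rather than $O(\varepsilon)$, so that the perturbative local unstable manifold estimate inherits the correct rate. This hinges on the explicit $\varepsilon^2$ prefactors in both $V_{1,\varepsilon}$ and $\lambda = \varepsilon^2\mu$, together with the smoothness of $V_1$ at the origin so that $V_1(\varepsilon\sigma/(1-\sigma))$ remains close to $V_1(0)$ on the local $\sigma$-range, ensuring no extra $\varepsilon$-order terms enter through the argument of $V_1$.
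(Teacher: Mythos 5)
Your proposal is correct and follows essentially the same route as the paper: the paper appends $\dot{\varepsilon}=0$ and treats the union of unstable manifolds as a $C^1$ center-unstable manifold of $(0,0,0)$, using that the perturbation is $O(\varepsilon^2)$ to get the estimate near $s=-\infty$, and then invokes continuity in parameters up to $s=s_0$, which is exactly your two-step structure (local unstable-manifold perturbation at the common saddle $\hat{\theta}_{0,-}$, then Gronwall on the fixed interval $[s_{\rm loc},s_0]$). Your version simply makes explicit the quantitative points the paper leaves terse -- that $\sigma(s)$ is common to both systems, that the forcing carries a clean $\varepsilon^2$ prefactor, and that the manifold comparison inherits the $O(\varepsilon^2)$ rate from the size of the perturbation rather than mere $C^1$ dependence on $\varepsilon$ -- all of which is consistent with the paper's argument.
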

\begin{proof}
 If we append the equation $\dot{\varepsilon}=0$ to both \eqref{eq:Lp3} and \eqref{eq:Lp_inner}, then the union of $W^u(0,0,\varepsilon)$ over $\varepsilon$ small can be viewed as a center-unstable manifold of $(0,0,0)$ in $(\theta,\sigma,\varepsilon)$ space. Since such a manifold is $C^1$ and the perturbation to the vector field is $O(\varepsilon^2)$, the estimate holds for $s$ near $-\infty$. The standard result on continuity in parameters then implies the estimate holds up to $s=s_0$.
\end{proof}
\noindent {\bf Remark} The technique in this proof cannot be used to push the estimate to $s_\varepsilon$ as $s_\varepsilon \to \infty$ as $\varepsilon \to 0$. 
The strategy will be to estimate $\eta_0(\varepsilon, ,s)=\theta_-(\varepsilon^2 \mu,s)-\theta_0(s)$ up to the threshold $s=s_\varepsilon$ through an understanding of the dynamics of the solution $\theta_0(s)$ of the model inner problem. 

There is only one fixed point of \eqref{eq:Lp_inner} in $\sigma =1$, namely $\theta_{1,\pm}=0$. It is degenerate with two zero eigenvalues, and to analyze the dynamics in a full neighborhood of the fixed point will require a blow-up procedure. We shall prove the following lemma.
\begin{lemma}\label{Lemma:theta00}
    If $0 \notin \sigma_p\left( \Delta -V_0\right)$, then  $$\theta_0(r) = O\left( \frac{1}{r^2} \right)$$
    as $r\to +\infty$
\end{lemma}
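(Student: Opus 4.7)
The plan is to work in the original radial variable $r$, where $\theta_0(r) = \arctan(p_0'(r)/p_0(r))$ and $p_0$ is the solution of the inner model problem
\begin{equation*}
    p_0'' + \tfrac{2}{r}p_0' - V_0(r)\,p_0 = 0, \qquad p_0'(0)=0,
\end{equation*}
unique up to scale. Since the angle is independent of the reparameterization $s = r + \ln r$, it suffices to show $\theta_0(r) = O(r^{-2})$ as $r \to \infty$.

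First I would pin down the asymptotic behavior of $p_0$. The substitution $\phi_0 = r p_0$ reduces the ODE to the one-dimensional equation $\phi_0'' = V_0\,\phi_0$, whose coefficient is integrable by (A2) (in fact $V_0 = O(r^{-4-\gamma})$). Standard asymptotic integration then yields $\phi_0(r) = c_1 r + c_2 + O(r^{-2-\gamma})$, and hence
\begin{equation*}
    p_0(r) = c_1 + \tfrac{c_2}{r} + O(r^{-3-\gamma}).
\end{equation*}
Assumption (A3) excludes $c_1 = 0$: if $c_1 = 0$ then $p_0 \sim c_2/r$ would be a regular-at-origin, decaying solution of $(\Delta - V_0)p_0 = 0$, i.e.\ a threshold (half-bound) state. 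So $p_0(r) \to c_1 \neq 0$ and $p_0$ is bounded away from zero for all large $r$.

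The core step is to rewrite the ODE in conservative form as $(r^2 p_0')' = r^2 V_0\, p_0$. By (A2) together with the boundedness of $p_0$, the right-hand side is $O(r^{-2-\gamma})$ on $[1,\infty)$, hence integrable. Therefore $r^2 p_0'(r)$ converges to some finite limit $L \in \mathbb{R}$ with $r^2 p_0'(r) - L = O(r^{-1-\gamma})$, so $p_0'(r) = L/r^2 + O(r^{-3-\gamma})$. Dividing by $p_0(r) = c_1 + O(r^{-1})$ and using $\arctan(x) = x + O(x^3)$ for small $x$ yields
\begin{equation*}
    \theta_0(r) = \frac{L}{c_1\,r^2} + O(r^{-3}),
\end{equation*}
which is the claimed bound.

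The main obstacle is the step identifying (A3) with $c_1 \neq 0$. In three dimensions the $c_1 = 0$ branch decays only like $1/r$ and is not in $L^2_{\mathrm{rad}}$, so (A3) has to be read as excluding threshold resonances rather than literally as triviality of the $L^2$-kernel; this reading is the natural one consistent with the wording of (A3). Once $c_1 \neq 0$ is in hand, the rest is routine asymptotic analysis of the radial ODE, and no blow-up at the degenerate fixed point $\hat{\theta}_{1,\pm}(0)$ is needed for this particular rate estimate.
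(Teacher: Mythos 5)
Your proof is correct, but it follows a genuinely different route from the paper. The paper stays inside the compactified angular system: it shifts to $\xi=\sigma-1$, observes that $(\theta,\xi)=(0,0)$ is a doubly degenerate fixed point, and performs a blow-up $\theta=z\bar\theta$, $\xi=z\bar\xi$; in the chart $\bar\xi=1$ the blown-up locus carries the flow $\bar\theta'=-\bar\theta-\bar\theta^2$, whose two equilibria $\bar\theta=0$ and $\bar\theta=-1$ correspond respectively to $p_0\to\mathrm{const}$ (non-resonant case) and $p_0\sim 1/r$ (threshold/resonant case), and assumption (A3) selects the $\bar\theta=0$ branch, giving $\theta_0=z\bar\theta=O(1/r^2)$. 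You instead work directly with the radial ODE: asymptotic integration of $\phi_0''=V_0\phi_0$ (with $\phi_0=rp_0$) to get $p_0\to c_1$, exclusion of $c_1=0$ via the no-resonance reading of (A3) -- which matches the paper's intent, and you are right that the hypothesis ``$0\notin\sigma_p$'' is really the no-resonance condition since the $1/r$ branch is not in $L^2_{\mathrm{rad}}(\mathbb{R}^3)$ -- and then the conservative form $\left(r^2p_0'\right)'=r^2V_0p_0$ with (A2) to conclude $p_0'=L/r^2+O(r^{-3-\gamma})$ and hence $\theta_0(r)=\tfrac{L}{c_1}r^{-2}+O(r^{-3})$. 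Your argument is more elementary, yields the leading coefficient explicitly, and in fact needs less decay than (A2) (integrability of $r^2V_0$ suffices for the key step); the paper's blow-up, by contrast, keeps the analysis within the single dynamical-systems framework used throughout and produces the desingularized phase portrait near $\sigma=1$ that underlies the subsequent plateau and threshold estimates. One small imprecision, harmless to the conclusion: with $V_0=O(r^{-4-\gamma})$ the standard asymptotic-integration error for $\phi_0$ is $O(r^{-1-\gamma})$ rather than $O(r^{-2-\gamma})$, since the correction is controlled by $\int_r^\infty s^2|V_0(s)|\,ds$; all you actually use is $p_0\to c_1\neq 0$, so nothing breaks.
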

\begin{proof}
Set $\xi=\sigma-1$ in \eqref{eq:Lp_inner}, which \eqref{eq:Lp_inner} then becomes
\begin{equation}\label{eq:Lp_inner_shift}
    \begin{split}
        \dot{\theta} = & \xi \sin{2\theta} +(\xi+1) \left(\left(\tilde{V}_{0} (\xi +1) \right) \cos^2 \theta -\sin^2 \theta \right),\\\dot{\xi} = & \xi ^2(\xi +1).
    \end{split}
    \end{equation}
    
    System \eqref{eq:Lp_inner_shift} has a fixed point at $(0,0)$ which is easily seen to be doubly degenerate (double zero eigenvalue).  To blow this point up, we set $\theta =z \bar{\theta}$ and $\xi=z\bar{\xi}$, where $z\ge 0$, and $\bar{\theta}^2+\bar{\xi}^2=1$. Condition (A1) guarantees that this can be done with a resulting $C^1$ vector field on the blown-up space.
    
    Rather than writing out the equations for $\bar{\theta}$ and $\bar{\xi}$, which are unwieldy, it is common to calculate the equations directly in a local chart (see \cite{dumortier1993techniques,dumortier1996canard,gucwa2009scaling,gucwa2009geometric}). The relevant chart is $\bar{\xi}=1$ and the equations have the form:
    \begin{equation}\label{eq:Lp_inner_blowup}
    \begin{split}
        \bar{\theta}' = & -\bar{\theta} -\bar{\theta}^2 +O(z^2),        
        \\z' = & z ^2(1-z),
    \end{split}
    \end{equation}
    where we have changed independent variable so that $\dot{\{\}}=z{\{\}}'$. 
    The set $z=0$ is the blown-up fixed point and carries the flow of
    \begin{equation} \label{eq:Lp_inner_bup circle}
    \bar{\theta}' = -\bar{\theta} -\bar{\theta}^2,
    \end{equation}
    which has fixed points at $\bar{\theta}=0$  and $\bar{\theta}=-1$. Each corresponds to a different type of decay of the original $p$. The former with decay to a constant and, the latter, decay like $1/r$. In the former case, $0\notin \sigma_p\left(\Delta -V_0 \right)$ whereas  $0\in \sigma_p\left(\Delta -V_0 \right)$ in the latter case. Since we are assuming in (A3) that $0\notin \sigma_p\left(\Delta -V_0 \right)$, the lemma follows. 
    \end{proof}
    
    This phenomenon was first discovered by Sandstede and Scheel \cite{sandstede2004evans}, where it is discussed in the context of contact defects and extensions of the Evans Function. Note there is an irony here, as when considered in terms of the angle $\theta$, the more rapid decay corresponds to not being in the spectrum!

    The asymptotics established in these two lemmas will allow us to show that, up to the threshold $\sigma_\varepsilon$, $\theta_-(\varepsilon^2\mu,s)$ is well approximated by $\theta_0(s)$. Setting $\eta_0(\varepsilon,s)=\theta_-(\varepsilon^2 \mu,s)-\theta_0(0,s)$, using various trigonometric identities, we can calculate
    
    \begin{equation} \label{eq:dev_inner}
    \dot{\eta}_0=\left[(\sigma -1)\cos 2 \theta_0-\sigma( \tilde{V}_0(\sigma)+1)\sin 2 \theta_0)\right]\eta_0 +h(\eta_0,\varepsilon),
    \end{equation}
    where $\dot{\eta}_0=\frac{d\eta_0}{ds}$, as usual, and $h(\eta_0,\varepsilon)=O(\eta_0^2) +O(\varepsilon^2)$. 
    
    Set $\tilde{\eta}_0(r) =\eta_0(\mu, r+\ln r)$, and, with an abuse of notation, we drop the \~{}  and recast \eqref{eq:dev_inner} with $r$ as the independent variable
    \begin{equation}\label{eq:dev_inner2}
     \eta_0'=-\left[\frac2r\cos 2 \theta_0+( V_0(r)+1)\sin 2 \theta_0)\right]\eta_0 +h(\eta_0,\varepsilon),
    \end{equation}
    where $\eta_0'=\frac{d\eta_0}{dr}$ and $h(\eta_0,\varepsilon)=O(\eta_0^2) +O(\varepsilon^2)$.

    We recall that the threshold, in terms of $r$ is at $r_\varepsilon=\varepsilon^\alpha$. the following lemma holds under the standing assumption that $0 \notin \sigma_p\left( \Delta -V_0\right)$.

    \begin{lemma} \label{lem:outer_threshold}
    There is a constant $C>0$, independent of $\varepsilon$, so that
    \begin{equation}
    \left | \eta_0(r_\varepsilon)\right | \le C \varepsilon.
    \end{equation}

    \end{lemma}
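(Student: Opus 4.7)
The plan is to propagate the small bound $|\eta_0(r_0)| = O(\varepsilon^2)$ supplied by Lemma~\ref{pre_plat}---where $r_0 > 0$ is the fixed $r$-value corresponding to $s = s_0$ (so $s_0 = r_0 + \ln r_0$)---forward through the variational equation~\eqref{eq:dev_inner2} all the way to $r = r_\varepsilon = \varepsilon^{\alpha}$. The decisive input is the quantitative rate $\theta_0(r) = O(1/r^2)$ from Lemma~\ref{Lemma:theta00}, which turns the linear coefficient of $\eta_0$ in~\eqref{eq:dev_inner2} into the Fuchsian part $-2/r$ plus an absolutely integrable remainder. A short bootstrap then absorbs the quadratic nonlinearity hidden in $h$.

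Write~\eqref{eq:dev_inner2} as $\eta_0' = a(r)\eta_0 + h(\eta_0,\varepsilon)$, where
$$a(r) = -\tfrac{2}{r}\cos 2\theta_0(r) - \bigl(V_0(r)+1\bigr)\sin 2\theta_0(r).$$
Lemma~\ref{Lemma:theta00} yields $\cos 2\theta_0 = 1 + O(1/r^4)$ and $\sin 2\theta_0 = O(1/r^2)$, while (A1)--(A2) give $V_0 = O(1/r^{4+\gamma})$; hence $a(r) = -2/r + b(r)$ with $\int_{r_0}^{\infty} |b(r)|\,dr < \infty$. The fundamental solution of the homogeneous problem therefore satisfies
$$\Phi(r,r_0) := \exp\!\left(\int_{r_0}^{r} a(s)\,ds\right) = \left(\frac{r_0}{r}\right)^{2} e^{B(r,r_0)},$$
with $|B(r,r_0)|$ uniformly bounded on $[r_0,\infty)$; in particular $\Phi(r,\tau) \le M (\tau/r)^{2}$ for all $r_0 \le \tau \le r$ and some $M$ independent of $\varepsilon$.

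Variation of constants combined with $|h(\eta,\varepsilon)| \le C_h(\eta^{2} + \varepsilon^{2})$ gives
$$r^{2}|\eta_0(r)| \le M r_0^{2}\, |\eta_0(r_0)| + M C_h \int_{r_0}^{r} \tau^{2}\bigl(|\eta_0(\tau)|^{2} + \varepsilon^{2}\bigr)\,d\tau.$$
Under the bootstrap hypothesis $|\eta_0(\tau)| \le \varepsilon$ on $[r_0,r]$, and using $|\eta_0(r_0)| \le C_1\varepsilon^{2}$ from Lemma~\ref{pre_plat}, this is bounded by
$$|\eta_0(r)| \le \frac{M C_1 r_0^{2}\,\varepsilon^{2}}{r^{2}} + \frac{M C_h}{3}(1+\varepsilon^{2})\,\varepsilon^{2}\, r.$$
At $r = r_\varepsilon = \varepsilon^{\alpha}$ this is $O(\varepsilon^{2}) + O(\varepsilon^{2+\alpha})$, and since $\alpha > -1$ we have $2+\alpha > 1$, so the right-hand side is $o(\varepsilon)$. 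Hence the bootstrap closes uniformly on $[r_0,r_\varepsilon]$ for $\varepsilon$ small, yielding $|\eta_0(r_\varepsilon)| \le C\varepsilon$.

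The main obstacle is extracting the integrable remainder $b(r)$ from $a(r)$: without the quantitative rate $\theta_0 = O(1/r^{2})$ from Lemma~\ref{Lemma:theta00}, the term $\sin 2\theta_0$ in $a(r)$ need not be integrable, and the integrating factor $\Phi(r,r_0)$ could drift at an uncontrolled (sub- or super-polynomial) rate over the long interval $[r_0,r_\varepsilon]$, whose length $\varepsilon^{\alpha}$ blows up as $\varepsilon \to 0$. Once the splitting $a = -2/r + b$ with $b \in L^{1}([r_0,\infty))$ is in hand, the remaining Duhamel-plus-bootstrap step is routine and the balance $\varepsilon^{2}\cdot r_\varepsilon = \varepsilon^{2+\alpha}$ yields the stated $O(\varepsilon)$ bound.
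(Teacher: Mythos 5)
Your proposal is correct and follows essentially the same route as the paper: start from the $O(\varepsilon^2)$ bound of Lemma~\ref{pre_plat} at a fixed $r_0$, use Lemma~\ref{Lemma:theta00} to control the coefficient of $\eta_0$ in \eqref{eq:dev_inner2}, represent $\eta_0$ by an integrating-factor/Duhamel formula, and close a bootstrap over the plateau $[r_0,r_\varepsilon]$ using the balance $\varepsilon^2 r_\varepsilon = \varepsilon^{2+\alpha} = o(\varepsilon)$. The only (harmless) difference is that you extract the exact $-2/r$ part with an $L^1$ remainder, giving the propagator bound $M(\tau/r)^2$ and the condition $\alpha>-1$, whereas the paper settles for the slightly cruder $(r/r_0)^{2\pm\delta}$ bounds and the condition $\alpha(1+2\delta)>-1$.
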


\begin{proof}
    Set $r_0$ so that $s_0=r_0+\ln{r_0}$, then we can invoke Lemma \ref{pre_plat}, to obtain an even better estimate at $r_0$.
    The region $[r_0,r_\varepsilon]$ is called the inner plateau and we need to push the estimate
    across the plateau from $r_0$ to $r_\varepsilon$. This is achieved using a bootstrapping argument. The bootstrap assumption will be that $\left|\eta_0(r)\right|\le C \varepsilon$ for all $r\in[r_0,r_\varepsilon]$. 

    Set $g(r)=\frac2r\cos 2 \theta_0+( V_0(r)+1)\sin 2 \theta_0$, which is the coefficient of $\eta_0$ in \eqref{eq:dev_inner2}. Using an integrating factor
\begin{equation}\label{eq:int_fac}
    \Gamma(r)=e^{ \int_{r_0}^rg(s)ds},
\end{equation}
we can write
\begin{equation} \label{eq:dev_inner_intfac}
    \eta_0(r)=\Gamma^{-1}(r)\Gamma(r_0)\eta_0(r_0)+\Gamma^{-1}(r)\int_{r_0}^r\Gamma(z)h(\eta_0(z),\varepsilon)dz.
\end{equation}
The integrating factor $\Gamma(r)$ can be estimated as follows.  Adjusting our choice of $r_0 $ sufficiently large if necessary, but still independent of $\varepsilon$, we can find a small constant $0 < \delta < \frac14$ depending only on $r_0$ and $V_0$ such that
$$1-\frac\delta4<\cos 2\theta_0(r)<1+\frac\delta4,$$
and 
$$\left |\frac r2 \left(V_0(r)+1)\sin 2 \theta_0(r)\right)\right |<\frac\delta4 $$
hold for all $r\ge r_0$ from Lemma \ref{Lemma:theta00}.  It follows that
\begin{equation}\label{eq:g_estimate}\frac{2-\delta}{r}<g(r)<\frac{2+\delta}{r}\end{equation}
for all $r\ge r_0$, and so
\begin{equation}\label{gamma_estimate}\left(\frac{r}{r_0}\right)^{2-\delta}<\Gamma(r)<\left(\frac{r}{r_0}\right)^{2+\delta}.\end{equation}
Plugging into \eqref{eq:dev_inner_intfac}, we estimate
\begin{equation}
    \left|\eta_0(r)\right|\le \left|\eta_0(r_0)\right|+c_1\left(\frac{r}{r_0}\right)^{\delta -2}\int_{r_0}^r \Gamma(z)dz\left(\eta^2 +\varepsilon^2\right),
\end{equation}
for some constant $c_1$. Using \eqref{gamma_estimate} again and integrating, there is a constant $c_2=c_2(r_0,\delta)$ so that
\begin{equation}\label{bstrap_est1}
    \left|\eta_0(r)\right|\le \left|\eta_0(r_0)\right|+c_2r^{1+2\delta}\ \left(\eta^2 +\varepsilon^2\right).
\end{equation}
The second term on the right hand side will be largest on the interval $[r_0,r_\varepsilon]$ at $r=r_\varepsilon=\varepsilon^\alpha$, and so 

\begin{equation}\label{bstrap_est2}
    \left|\eta_0(r)\right|\le \left|\eta_0(r_0)\right|+c_2\varepsilon^{\alpha(1+2\delta)}\ \left(\eta^2 +\varepsilon^2\right).
\end{equation}
By the bootstrapping hypothesis, there is a constant $c_3$ and
\begin{equation}\label{bstrap_est3}
    \left|\eta_0(r)\right|\le \left|\eta_0(r_0)\right|+c_3\varepsilon^{\alpha(1+2\delta)+2}.
\end{equation}
We choose $-1 < \alpha< 0$ so that $\alpha ( 1+2 \delta) > -1$, i.e
\begin{equation}
    2 +\alpha(1+2\delta)=1+\nu,
\end{equation}
for some sufficiently small $\nu >0$ so that $-1/(1+2 \delta) <\alpha<0$,
\begin{equation}\label{bstrap_est4}
    \left|\eta_0(r)\right|\le \left|\eta_0(r_0)\right|+c_3\varepsilon^{1+\nu}.
\end{equation}

Recalling the bootstrapping assumption that $\left|\eta_0(r)\right| \le C\varepsilon $ for all $r\in[r_0,r_\varepsilon]$, we could estimate the right-hand side by a constant times $\varepsilon$. But that is not good enough, and we need to have a stronger estimate on $\left|\eta_0(r)\right|$. If we set $\varepsilon_0 $ small enough that $c_3 \varepsilon^{2\nu}<\frac C2$, then from \eqref{bstrap_est4} and choosing $\varepsilon_0 $ so that $\left | \eta_0(r_0)\right | \le \frac C2\varepsilon$ for a constant $C$, by Lemma \ref{theta_diff},

\begin{equation}
  \left|\eta_0(r)\right|\le \frac C2 \varepsilon,  
\end{equation}
which closes the bootstrapping argument.
\end{proof}

 \subsubsection{Outer problem}
 The angular equation for the outer model problem that is analogous to \eqref{eq:Lp_inner} is. 
 \begin{equation}\label{eq:Lp_outer}
    \begin{split}
        \dot{\psi} = & (\tau -1)\sin{2\psi} +\tau \left(\left(\mu + \tilde{V}_{1} (\tau) \right) \cos^2 \psi -\sin^2 \psi \right),\\\dot{\tau} = & \tau(1-\tau)^2,
     \end{split}
\end{equation}
where $\dot{}=\frac{d}{d\rho}$.

For the outer problem, we consider solutions that satisfy the boundary condition at $\tau =1$. To that end, set $\hat{\psi}_1 (\mu, t)=\psi_1(\mu,t),\tau (t))$ to be a solution of \eqref{eq:Lp_outer} for which $\psi_1(\mu, t)\to \psi _{1,+} (\mu)$ as $t \to + \infty$. We will compare $\psi_1$ to the solution of the full (outer) problem \eqref{eq:Lp6} that satisfies the same boundary condition at $\tau=1$. We denote this by $\hat{\psi}_-(\mu,t)=\left(\psi_-(\mu,t),\tau (t)\right)$. Set
$$\eta_1(\mu,\varepsilon,t)=\psi_1(\mu,t)-\psi_-(\mu,t,\varepsilon).$$

The goal of this section is to obtain an estimate on $\eta_1$ at $t=t_\varepsilon$ where $t_\varepsilon=\varepsilon^\kappa+\kappa\ln\varepsilon$. We will actually need an estimate on the analogous quantity where $\psi$ is replaced by $\theta$. The relationship between the two is given by \eqref{theta_psi}.  

The equation satisfied by $\eta_1$ is
\begin{equation}\label{etaone_tau} 
\dot{\eta}_1 = \left[(\tau-1)\cos 2\psi_1-\tau (\tilde{V}_1(\tau)+\mu+1)\sin2\psi_1\right]\eta_1+\frac{\tau}{\varepsilon^2}\tilde{V}_0(\tau)+O(\eta_1^2).
\end{equation}

As in the previous section, we will switch to expressing $\eta_1$ in terms of the independent variable, which here will be $\rho$. Recall that $\rho=\varepsilon r$, and setting $\kappa =\alpha + 1$ makes this match with the thresholds as given in \eqref{sigma_threshold} and \eqref{tau_threshold}. We write $\hat{\eta}_1(\mu, \varepsilon,\rho)=\eta_1(\mu, \varepsilon,\rho +\ln \rho)$. Dropping the $\hat{}$ again and the explicit dependence on $\mu$ and $\varepsilon$, \eqref{etaone_tau} becomes
\begin{equation}\label{etaone_rho}
\eta_1'=-\left[\frac2\rho\cos 2\psi_1+ (\tilde{V}_1(\tau)+\mu+1)\sin2\psi_1\right]\eta_1+\frac{1}{\varepsilon^2}V_0(\frac{\rho}{\varepsilon})+O(\eta_1^2),
\end{equation}
where $\eta_1'=\frac{d\eta_1}{d \rho}$. 

We will need information on the asymptotics of the solution of the model outer problem \eqref{eq:Lp_outer} as $\rho \to 0$, or, equivalently, as $t \to -\infty$. 

\begin{lemma}\label{lem:model_outer_noeig}
If $\mu \notin \sigma_p(\Delta-V_1)$ then 
$$\psi_1(\mu, t)\to (2k+1)\frac{\pi}{2}$$ for some integer $k$ as $t \to -\infty$.
\end{lemma}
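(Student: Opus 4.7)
The plan is to identify the $\alpha$-limit set of the trajectory $\hat\psi_1(\mu,t) = (\psi_1(\mu,t),\tau(t))$ as $t \to -\infty$ and then translate each possible limit into a distinct boundary behavior of the underlying radial solution $p(\rho)$ at $\rho = 0$. The hypothesis $\mu \notin \sigma_p(\Delta - V_1)$ will exclude one of the two possibilities, forcing the other.

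First, since $\dot\tau = \tau(1-\tau)^2 > 0$ on $(0,1)$, we have $\tau(t) \to 0$ as $t \to -\infty$ (in fact $\tau(t) \sim e^t$), so the $\alpha$-limit of $\hat\psi_1$ lies in the invariant circle $\{\tau = 0\}$. On that circle the reduced flow is $\dot\psi = -\sin 2\psi$, with equilibria at $\psi \equiv 0$ and $\psi \equiv \pi/2$ on $S^1 = \mathbb{R}/\pi\mathbb{Z}$. Direct linearization of the full 2D system shows that the Jacobian at $(0,0)$ has eigenvalues $-2$ and $+1$ (a saddle), while the Jacobian at $(\pi/2,0)$ has eigenvalues $+2$ and $+1$ (an unstable node). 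Both equilibria are hyperbolic, and on $\tau = 0$ the only non-equilibrium orbits run from $\pi/2$ to $0$ in forward time. Standard planar arguments then force the $\alpha$-limit of $\hat\psi_1$ to be a single equilibrium: the backward-time attracting node $(\pi/2,0)$ for generic $\mu$, or the saddle $(0,0)$ for those $\mu$ with $\hat\psi_1$ lying on the one-dimensional backward-stable manifold of $(0,0)$.

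Next, I translate the two cases into behaviors of $p$. The indicial exponents of $p_{\rho\rho} + \tfrac{2}{\rho}p_\rho - (V_1+\mu)p = 0$ at $\rho = 0$ are $0$ and $-1$, so solutions are either regular with $p(\rho) = p(0) + O(\rho^2)$ (giving $\psi = \arctan(p_\rho/p) \to 0 \pmod\pi$) or singular with $p(\rho) \sim 1/\rho$ (giving $\psi \to \pi/2 \pmod\pi$). The regular branch extends smoothly to $\mathbb{R}^3$ and lies in $\mathcal{D}$, while the singular branch behaves like $1/|x|$ near the origin and fails to lie in $H^2_{\mathrm{loc}}(\mathbb{R}^3)$. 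Since $\hat\psi_1(\mu,t) \to \arctan(-\sqrt{\mu})$ as $t \to +\infty$ by construction, the associated $p$ already decays exponentially as $\rho \to +\infty$. If in addition $\psi_1(\mu,t) \to 0 \pmod\pi$ as $t \to -\infty$, then $p$ would be a non-trivial $L^2_{\mathrm{rad}}$ eigenfunction of $\Delta - V_1$ with eigenvalue $\mu$, contradicting $\mu \notin \sigma_p(\Delta - V_1)$. Hence $\psi_1(\mu,t) \to (2k+1)\pi/2$ for some integer $k$.

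The main obstacle is the $\alpha$-limit step: one must verify hyperbolicity of both equilibria in the compactified vector field (which relies on the $C^1$ extendability at $\tau = 0$ guaranteed by (A1)) and rule out $\alpha$-limit sets that accumulate on a heteroclinic cycle joining the two fixed points on $\tau = 0$. Hyperbolicity plus the exponential rate $\tau \sim e^t$ handle this: a backward-time trajectory passing near the saddle $(0,0)$ is either on its backward-stable manifold (and converges) or is ejected along its backward-unstable direction, which is tangent to $\tau = 0$ and carries it into the basin of the attracting node $(\pi/2,0)$, pinning the $\alpha$-limit to one equilibrium.
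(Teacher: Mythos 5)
Your argument is correct and is essentially the paper's own (sketched) proof: the paper likewise reduces the lemma to the dynamics of \eqref{eq:Lp_outer} near $\tau=0$, where the only alternatives for the $\alpha$-limit are $\psi\equiv 0 \pmod{\pi}$ or $\psi\equiv\frac{\pi}{2}\pmod{\pi}$, and notes that tending to a multiple of $\pi$ would mean the regularity condition at $\rho=0$ is met, producing an eigenfunction and contradicting $\mu\notin\sigma_p(\Delta-V_1)$. Your write-up simply supplies the details the paper leaves implicit (the linearizations at the two equilibria, the exclusion of non-trivial $\alpha$-limit sets, and the indicial-exponent dictionary between $\psi\to 0$ versus $\psi\to\frac{\pi}{2}$ and regular versus singular behavior of $p$ at the origin).
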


The proof of the lemma follows from considering the dynamics of \eqref{eq:Lp_outer} near $\tau =0$ and the fact that tending to $(2k+1)\frac{\pi}{2}$ corresponds to the boundary condition at $\rho=0$ not being satisfied, as it should not be when $\mu$ is not an eigenvalue. 

From assumption (A2) on the decay of $V_0(r)$, we can set a fixed $\rho_1$ so that $$\left |V_0\left(\frac{\rho_1}{\varepsilon}\right)\right| \le C \varepsilon^{4+\gamma}$$
for chosen $\gamma >0$ and some $C>0$, and we then have an analog of Lemma \ref{pre_plat}.
\begin{lemma}\label{preplat_outer}
    There is a constant $C_1>0$ and an $\varepsilon_0$ so that
    $$\left|\eta_1(\rho_1)\right| \le C_1 \varepsilon^{2+\gamma}$$
    when $\varepsilon \le \varepsilon_0$, and $\gamma$ is as determined above.
    \end{lemma}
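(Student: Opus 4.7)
The plan is to mirror Lemma~\ref{pre_plat}, swapping the roles of the two boundaries. Here the solution is pinned at $\rho = +\infty$ (equivalently $\tau = 1$), and the smallness of the perturbation comes from assumption (A2) rather than from an explicit $\varepsilon^2$ factor. The difference between the vector fields of \eqref{eq:Lp6} and \eqref{eq:Lp_outer} is precisely $\tilde{V}_{0,\varepsilon^{-1}}(\tau) = \varepsilon^{-2}V_0(\rho/\varepsilon)$, and by (A2) I have
\[
\left|\tilde{V}_{0,\varepsilon^{-1}}(\tau)\right| \le \frac{C_0\,\varepsilon^{2+\gamma}}{\rho^{4+\gamma}}\qquad\text{whenever } \rho \ge \varepsilon,
\]
so the full and model outer vector fields differ by $O(\varepsilon^{2+\gamma})$ uniformly on any set where $\rho$ is bounded away from $0$ (and in particular near $\tau = 1$).

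First I would append $\dot{\varepsilon} = 0$ to both \eqref{eq:Lp_outer} and \eqref{eq:Lp6}. Thanks to the extension \eqref{V1_lim} the fixed point $(\hat{\psi}_{1,+}(\mu), 1)$ persists for every $\varepsilon \ge 0$ since $\tilde{V}_{0,\varepsilon^{-1}}(1) = 0$. The union over $\varepsilon \ge 0$ of the one-dimensional center manifolds $W^c(\hat{\psi}_{1,+}(\mu))$ then assembles into a two-dimensional invariant surface in $(\psi,\tau,\varepsilon)$-space; uniqueness comes, as in the main text, from the exponentially unstable complementary direction inside $\tau=1$. Using the $O(\varepsilon^{2+\gamma})$ perturbation bound, the center manifolds at $\varepsilon > 0$ are $O(\varepsilon^{2+\gamma})$-close to the model one, yielding a fixed $R > \rho_1$ independent of $\varepsilon$ such that $|\eta_1(\rho)| \le C\varepsilon^{2+\gamma}$ for all $\rho \ge R$.

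Second, I would push this estimate backward to $\rho = \rho_1$ over the bounded interval $[\rho_1, R]$. On this fixed interval the coefficient of $\eta_1$ in \eqref{etaone_rho} is bounded uniformly in $\varepsilon$ and the forcing obeys $|f(\rho,\varepsilon)| \le C\varepsilon^{2+\gamma}/\rho^{4+\gamma}$, so a backward Gronwall estimate combined with a routine bootstrap absorbing the $O(\eta_1^2)$ nonlinearity (whose contribution is at worst $O(\varepsilon^{4+2\gamma})$) delivers $|\eta_1(\rho_1)| \le C_1\varepsilon^{2+\gamma}$.

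The hard part is Step~1: the perturbation $\varepsilon^{-2}V_0(\rho/\varepsilon)$ is not obviously $C^1$ in $\varepsilon$ at $\varepsilon = 0$ (the $\varepsilon$-derivative generically carries a factor $\varepsilon^{\gamma-1}$), so one cannot invoke a $C^1$ invariant manifold theorem quite as cleanly as in Lemma~\ref{pre_plat}. My preferred workaround is to bypass the manifold formalism entirely and apply variation of parameters directly to \eqref{etaone_rho}, integrating backward from $\rho = +\infty$. Because $\psi_1(\rho) \to \arctan(-\sqrt{\mu})$ as $\rho \to \infty$, the coefficient $-h(\rho)$ in \eqref{etaone_rho} tends to $2\sqrt{\mu} > 0$, so the kernel $\exp\!\bigl(\int_\rho^{s} h(u)\,du\bigr)$ decays exponentially in $s-\rho$ for $\rho$ large; the boundary condition $\eta_1 \to 0$ at $+\infty$ (coming from both $\psi_1$ and $\psi_-$ lying in $W^c(\hat{\psi}_{1,+}(\mu))$ for their respective systems) selects the convergent Duhamel integral, and inserting the (A2)-bound on the forcing produces $|\eta_1(\rho_1)| \le C_1 \varepsilon^{2+\gamma}$ directly, with the nonlinearity handled by a standard contraction or bootstrap.
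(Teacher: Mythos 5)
Your main plan (append $\dot{\varepsilon}=0$, regard the union over $\varepsilon$ of the center manifolds at $\tau=1$ as an invariant surface, use the bound $|\varepsilon^{-2}V_0(\rho/\varepsilon)|\le C\varepsilon^{2+\gamma}/\rho^{4+\gamma}$ from (A2) for $\rho\ge\rho_1$, then propagate over a fixed bounded $\rho$-interval) is essentially the paper's own argument, which proves Lemma~\ref{preplat_outer} verbatim as Lemma~\ref{pre_plat} with the stable manifold at $\sigma=0$ replaced by the center manifold at $\tau=1$. Your Duhamel workaround is a sound and more explicit variant of the same step: because the center manifold at $\tau=1$ is unique (the complementary direction is exponentially unstable), the two trajectories you are comparing are exactly the bounded-at-$+\infty$ solutions, and writing $\eta_1$ as the backward integral equation from $+\infty$ built on the limiting coefficient $2\sqrt{\mu}$ in \eqref{etaone_rho} is just the Lyapunov--Perron characterization made quantitative; it needs only sup-norm control of the perturbation rather than $C^1$-dependence on $\varepsilon$, which both answers the regularity wrinkle you flag and shows the manifold version survives it, since $O(\varepsilon^{2+\gamma})$-closeness of the unique invariant curves follows from the contraction/integral characterization without differentiating the forcing in $\varepsilon$. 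The one caveat is that the exponential rate $2\sqrt{\mu}$ degenerates as $\mu\to 0$, so your constants (like the paper's) are for $\mu$ fixed, or bounded away from $0$, which is consistent with the $\varepsilon_0$ depending on $\mu$ in Lemma~\ref{eta_estimate_outerplat}.
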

The proof is similar to that of Lemma \ref{pre_plat} with the stable manifold at $\sigma=0$ replaced by the center manifold at $\tau =1$. Because we are working with $\psi$ and need to convert back to $\theta$ for the main proof, we need a stronger estimate for the outer system. 
\begin{lemma}\label{eta_estimate_outerplat}
Assume $\mu \notin \sigma_p \left(\Delta - V_1\right)$, then there is a constant $C$, independent of $\varepsilon$, so that
    \begin{equation}
    \left | \eta_1(\rho_\varepsilon)\right | \le C \varepsilon^2,
    \end{equation}
    where $\varepsilon \le \varepsilon_0$ for prescribed $\varepsilon_0$ depending $\mu$.
\end{lemma}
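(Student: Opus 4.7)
The plan is to propagate the initial estimate $|\eta_1(\rho_1)|\le C_1\varepsilon^{2+\gamma}$ from Lemma~\ref{preplat_outer} across the outer plateau $[\rho_\varepsilon,\rho_1]$ by an integrating-factor argument on the $\rho$-form \eqref{etaone_rho} of the evolution, in the same spirit as the proof of Lemma~\ref{lem:outer_threshold} but exploiting cancellation that follows from $\mu\notin\sigma_p(\Delta-V_1)$. Setting
\[
g(\rho)=\tfrac{2}{\rho}\cos 2\psi_1+(\tilde V_1(\tau)+\mu+1)\sin 2\psi_1,\qquad \Gamma(\rho)=\exp\Bigl(\int_{\rho_1}^{\rho}g(s)\,ds\Bigr),
\]
and keeping the $\cos^2\psi_-(s)$ factor that the derivation of \eqref{etaone_rho} actually produces in the forcing (harmlessly bounded by $1$ in \eqref{etaone_rho}, but essential here), the Duhamel form of \eqref{etaone_rho} reads
\[
\eta_1(\rho_\varepsilon)=\Gamma(\rho_\varepsilon)^{-1}\eta_1(\rho_1)-\Gamma(\rho_\varepsilon)^{-1}\int_{\rho_\varepsilon}^{\rho_1}\Gamma(s)\Bigl[\tfrac{1}{\varepsilon^2}V_0(s/\varepsilon)\cos^2\psi_-(s)+O(\eta_1(s)^2)\Bigr]ds.
\]

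\textbf{Asymptotics at $\rho=0$.} By Lemma~\ref{lem:model_outer_noeig} the hypothesis $\mu\notin\sigma_p(\Delta-V_1)$ gives $\psi_1\to(2k+1)\pi/2$ as $\rho\to 0$. Linearising the outer model equation at this fixed point (the analogue of the blow-up step in Lemma~\ref{Lemma:theta00}) yields $\psi_1-(2k+1)\pi/2=O(\rho)$, hence $\cos 2\psi_1=-1+O(\rho^2)$ and $\sin 2\psi_1=O(\rho)$. Thus $g(\rho)=-\tfrac{2}{\rho}+O(1)$ on a fixed interval $(0,\rho_1]$ (shrinking $\rho_1$ if necessary, but independently of $\varepsilon$), and arguing exactly as in the derivation of \eqref{gamma_estimate} we obtain, for any small $\delta>0$,
\[
(\rho_1/\rho)^{2-\delta}\le\Gamma(\rho)\le(\rho_1/\rho)^{2+\delta}\quad\text{on }[\rho_\varepsilon,\rho_1],
\]
so $\Gamma(\rho_\varepsilon)^{-1}\le C\rho_\varepsilon^{2-\delta}=C\varepsilon^{(2-\delta)(\alpha+1)}$. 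Since $\psi_-=\psi_1-\eta_1$, the same asymptotic yields $\cos^2\psi_-(s)\le C(s^2+\eta_1(s)^2)$.

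\textbf{Bootstrap.} Assume a priori $|\eta_1(\rho)|\le C\varepsilon^2$ on $[\rho_\varepsilon,\rho_1]$. Assumption~(A2) gives $|V_0(s/\varepsilon)|\le C_0(\varepsilon/s)^{4+\gamma}$ throughout the plateau (since $s/\varepsilon\ge\varepsilon^\alpha\gg 1$), so the forcing is bounded by $C\varepsilon^{2+\gamma}\bigl(s^{-(2+\gamma)}+\varepsilon^4 s^{-(4+\gamma)}\bigr)$. Combining with the $\Gamma$-bound and integrating in $s$ over $[\rho_\varepsilon,\rho_1]$, the dominant contribution is at the lower endpoint and produces a finite list of terms of the form $C\varepsilon^{2+\gamma-(k+\gamma+2\delta)(\alpha+1)}$ for small integers $k\ge 1$ (plus the initial piece $\Gamma(\rho_\varepsilon)^{-1}|\eta_1(\rho_1)|$, of strictly higher order in $\varepsilon$). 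Choosing $\alpha>-\tfrac12$ sufficiently close to $-\tfrac12$ so that the footnote condition $(4+\gamma)(-\alpha)>2+\gamma/4$ holds, and $\delta$ correspondingly small, arranges each such exponent to be at least $2$; then for $\varepsilon\le\varepsilon_0$ (with $\varepsilon_0$ depending on $\mu$ through the choice of $\rho_1$) the sum is dominated by $\tfrac{C}{2}\varepsilon^2$, closing the bootstrap.

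\textbf{Main obstacle.} The chief difficulty is the tension between the growth of $\Gamma$ as $\rho\to 0$ and the singular forcing $\varepsilon^{-2}V_0(\rho/\varepsilon)$: with only the trivial bound $\cos^2\psi_-=O(1)$ the Duhamel integral cannot produce an $\mathcal{O}(\varepsilon^2)$ tail. The improvement $\cos^2\psi_-=O(\rho^2)$, a direct consequence of $\mu\notin\sigma_p(\Delta-V_1)$ through Lemma~\ref{lem:model_outer_noeig}, is exactly what upgrades the rate from $\mathcal{O}(\varepsilon)$ (as in Lemma~\ref{lem:outer_threshold}) to the needed $\mathcal{O}(\varepsilon^2)$, and is precisely why the non-resonance hypothesis enters; the tight margin in the exponents is what forces the narrow window of allowed $\alpha$ singled out in the footnote.
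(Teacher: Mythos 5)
Your strategy is essentially the paper's: propagate the estimate of Lemma~\ref{preplat_outer} from $\rho_1$ down to $\rho_\varepsilon$ with the integrating factor built from $g(\rho)\sim -2/\rho$ (a consequence of Lemma~\ref{lem:model_outer_noeig}), and close a bootstrap at level $\varepsilon^2$; your one genuine addition is to retain the factor $\cos^2\psi_-$ in the forcing, which \eqref{etaone_tau} suppresses, and to exploit $\cos^2\psi_-=O(\rho^2+\eta_1^2)$. That refinement is legitimate, and your asymptotics $\psi_1-(2k+1)\tfrac\pi2=O(\rho)$, the $\Gamma$-bounds, and the treatment of the homogeneous and $O(\eta_1^2)$ pieces are all fine. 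The problem is the last step: the claim that the surviving exponents can be made at least $2$ by taking $\alpha$ close to $-\tfrac12$ and $\delta$ small is false for the range of $\gamma$ allowed by (A1)--(A2). The dominant Duhamel contribution is your $k=1$ term: with $\Gamma(z)\le(\rho_1/z)^{2+\delta}$, $\Gamma(\rho_\varepsilon)^{-1}\le(\rho_\varepsilon/\rho_1)^{2-\delta}$ and the refined forcing bound $C\varepsilon^{2+\gamma}z^{-(2+\gamma)}$, the integral concentrates at the lower endpoint and gives
\[
\Gamma(\rho_\varepsilon)^{-1}\int_{\rho_\varepsilon}^{\rho_1}\Gamma(z)\,\varepsilon^{2+\gamma}z^{-(2+\gamma)}\,dz\;\le\;C\,\varepsilon^{2+\gamma}\,\rho_\varepsilon^{-(1+\gamma+2\delta)}\;=\;C\,\varepsilon^{\,2+\gamma-(1+\gamma+2\delta)(\alpha+1)}.
\]
Demanding that this exponent be $\ge 2$ forces $\alpha+1\le\gamma/(1+\gamma+2\delta)$, which contradicts the standing requirement $\alpha>-\tfrac12$ (i.e.\ $\alpha+1>\tfrac12$) whenever $\gamma\le 1$; the hypotheses only guarantee \emph{some} $\gamma>0$. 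Pushing $\alpha\downarrow-\tfrac12$ yields at best an exponent approaching $\tfrac{3+\gamma}{2}$, i.e.\ a bound of order $\varepsilon^{(3+\gamma)/2}$, so for small $\gamma$ the bootstrap at level $C\varepsilon^2$ does not close as you assert.

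Note also that the footnote condition $(4+\gamma)(-\alpha)>2+\gamma/4$ you invoke does not bear on this: it controls the exponent $(1-\kappa)(4+\gamma)-2$ appearing in the sup-bound \eqref{potential_estimate} that enters \eqref{etaone_outer_estimate} (with $\kappa=\alpha+1$), not your quantity $2+\gamma-(1+\gamma+2\delta)(\alpha+1)$, so citing it does not justify the step. As it stands your argument proves an estimate of order $\varepsilon^{(3+\gamma)/2}$ on the plateau (already better than the $O(\varepsilon)$ of Lemma~\ref{lem:outer_threshold}), but to reach $\varepsilon^2$ for all admissible $\gamma$ you would need some further input --- stronger decay of $V_0$ (effectively $\gamma>1$), a wider admissible window for $\alpha$, or genuine additional smallness in the Duhamel integral beyond the pointwise bounds you use --- none of which the proposal supplies.
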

\begin{proof} We use the same methodology as for the inner system by expressing $\eta_1$ as a solution of an integral equation. The first step is then to obtain estimates on the integrating factor. 

Proceeding as before, we will need estimates on $$g(\rho)=\frac2\rho\cos 2\psi_1+ (\tilde{V}_1(\tau)+\mu+1)\sin2\psi_1,$$ which is the parenthetical quantity in \eqref{etaone_rho}.
With $\mu \notin \sigma_p \left(\Delta - V_1\right)$, the base solution $\psi_1$ tends to $\frac\pi2+k\pi$, as $\rho\to 0$ for some integer $k$. It follows that $\cos2\psi_1 \to -1$ and $\sin 2 \psi_1 \to 0$ as $\rho \to 0$. Since this is independent of $\varepsilon$, for fixed $0< \delta < \frac14$, let us take   $\rho_1 = \rho_1 (V_1)>0$ small enough so that 
\begin{equation}\label{estimate_intfac_tau}
\frac{-2-\delta}{\rho} < g(\rho) < \frac{-2+\delta}{\rho},
\end{equation}
 holds for $\rho\le\rho_1$.   
By integrating, we obtain
$$\ln\left(\frac{\rho_1}{\rho}\right)^{2+\delta}>\int_{\rho_1}^\rho g(z)dz>\ln\left(\frac{\rho_1}{\rho}\right)^{2-\delta},$$ 
and exponentiating, with $\Gamma (\rho)=\exp\{\int_{\rho_1}^\rho g(z)dz\}$ being the integrating factor we want,
$$\left(\frac{\rho_1}{\rho}\right)^{2-\delta}<\Gamma (\rho)<\left(\frac{\rho_1}{\rho}\right)^{2+\delta}.$$

Using the integrating factor, we obtain an expression for $\eta$
\begin{equation}\label{int_eq_outer}
    \eta_1(\rho)=\Gamma^{-1}(\rho)\Gamma(\rho_1)\eta_1(\rho_1)+\Gamma^{-1}(\rho)\int_{\rho_1}^\rho \Gamma(z)h(z,\eta_1,\varepsilon)dz,
\end{equation}
where $h(z,\eta_1,\varepsilon)=\frac{1}{\varepsilon^2}V_0\left(\frac{\rho}{\varepsilon}\right)+O(\eta_1^2)$. From the assumption on the decay of $V_0$, and recalling that $\rho_\varepsilon =\varepsilon^\kappa$, we can estimate
\begin{equation}\label{potential_estimate}
    \left|\frac{1}{\varepsilon^2}V_0\left(\frac{\rho}{\varepsilon}\right)\right|\le C{}\varepsilon^{(1-\kappa)(4+\gamma)-2},
\end{equation}
provided $\rho \ge \rho_\varepsilon$.

We also have
\begin{equation}\label{intfac_estimate}
    \Gamma^{-1}(\rho)\int_\rho^{\rho_1}\Gamma(z)dz\le\frac{\rho_1}{1+\delta},
\end{equation}
for any $\rho\le\rho_1$.
Plugging \eqref{potential_estimate} and \eqref{intfac_estimate} into \eqref{int_eq_outer} and estimating, we obtain
\begin{equation}\label{etaone_outer_estimate}
    \left|\eta_1(\rho)\right| \le \left|\eta_1(\rho_1)\right|+C\frac{\rho_1}{1+\delta}\left[\varepsilon^{(1-\kappa)(4+\gamma)-2}+\left(\eta_1(\rho)\right)^2\right]
\end{equation}
for some (new) constant $C>0$.

Again, we make a bootstrap argument and assume that $\left|\eta_1(\rho)\right|\le C_1\varepsilon^2$, for all $\rho\in [\rho_\varepsilon,\rho_1]$. From Lemma \ref{preplat_outer}, the fact that $(4+\gamma)>4$, and by choosing $\kappa$ sufficiently close to $\frac12$ (and hence $\alpha$ sufficiently close to $-\frac12$) such that $(4+\gamma) (1-\kappa)> 2 + \frac{\gamma}{4}$, we can set the parameters so that the right-hand side of \eqref{etaone_outer_estimate} is dominated by a constant times $\varepsilon^{2+ \frac{\gamma}{4}}$, thus allowing closure of the bootstrap. 
\end{proof}
For the proof of the main result, we will need an estimate in terms of $\theta$.
\begin{corollary}\label{theta_estimate_outer}
Assume $\mu \notin \sigma_p \left(\Delta - V_1\right)$, then there is a constant $C$, independent of $\varepsilon$, so that
    \begin{equation}
    \left | \theta_+(\rho_\varepsilon)-\theta_1(\rho_\varepsilon)\right | \le C \varepsilon,
    \end{equation}
    where $\varepsilon \le \varepsilon_0$ for prescribed $\varepsilon_0$ depending $\mu$.
\end{corollary}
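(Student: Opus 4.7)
The plan is to deduce the corollary directly from Lemma~\ref{eta_estimate_outerplat} by translating the $\varepsilon^2$ bound on $\eta_1 = \psi_1 - \psi_-$ from the $\psi$-coordinate into the $\theta$-coordinate via the change-of-variable identity \eqref{theta_psi}, namely $\tan\theta = \varepsilon \tan\psi$. Along each of the two relevant trajectories, $\theta$ is the unique continuous lift of $\arctan(\varepsilon \tan\psi)$ determined by the (center) initial data at $\tau = 1$; writing this lift as $\theta = g_\varepsilon(\psi)$, we have $\theta_+(\rho_\varepsilon) = g_\varepsilon(\psi_-(\rho_\varepsilon))$ and $\theta_1(\rho_\varepsilon) = g_\varepsilon(\psi_1(\rho_\varepsilon))$.

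The single key computation is the uniform derivative bound
\[
g_\varepsilon'(\psi) \;=\; \frac{\varepsilon}{\cos^2\psi + \varepsilon^2 \sin^2\psi} \;\le\; \frac{\varepsilon}{\varepsilon^2} \;=\; \frac{1}{\varepsilon},
\]
since the denominator is minimized at $\psi = (2k+1)\pi/2$ with minimum value $\varepsilon^2$. Applying the mean value theorem to $g_\varepsilon$ and using Lemma~\ref{eta_estimate_outerplat} then yields
\[
|\theta_+(\rho_\varepsilon) - \theta_1(\rho_\varepsilon)| \;\le\; \frac{1}{\varepsilon}\,|\psi_-(\rho_\varepsilon) - \psi_1(\rho_\varepsilon)| \;=\; \frac{1}{\varepsilon}\,|\eta_1(\rho_\varepsilon)| \;\le\; \frac{1}{\varepsilon}\cdot C\varepsilon^2 \;=\; C\varepsilon,
\]
which is the claimed bound.

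The main point to appreciate --- and really the only place where anything subtle is happening --- is that the Lipschitz constant $1/\varepsilon$ of $g_\varepsilon$ is saturated precisely when $\psi$ is near an odd multiple of $\pi/2$, which by Lemma~\ref{lem:model_outer_noeig} is exactly the regime where $\psi_1(\rho_\varepsilon)$ sits when $\mu \notin \sigma_p(\Delta - V_1)$. So the $1/\varepsilon$ loss from the coordinate change is genuinely incurred, and it is compensated exactly by the sharp $\varepsilon^2$ gain from Lemma~\ref{eta_estimate_outerplat}; the final bound of order $\varepsilon$ is critical and reflects the two-scale structure of the problem. A minor bookkeeping item is that, since $|\psi_-(\rho_\varepsilon) - \psi_1(\rho_\varepsilon)| = O(\varepsilon^2) \ll \pi$, both angles lie in a common branch of length $\pi$, so by continuity of $\theta$ along each trajectory the mean value estimate is applied to $g_\varepsilon$ on a single smooth monotone branch, with no wrapping of $\arctan$ to worry about.
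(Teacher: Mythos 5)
Your proposal is correct and follows essentially the same route as the paper: both convert the $\mathcal{O}(\varepsilon^2)$ bound of Lemma~\ref{eta_estimate_outerplat} into a $\theta$-estimate through the relation $\tan\theta=\varepsilon\tan\psi$, using a Lipschitz constant of order $1/\varepsilon$ for the change of angle (the paper phrases this via $\cot\psi=\varepsilon\cot\theta$ near $\psi\approx\pi/2$, you via the global derivative bound $g_\varepsilon'(\psi)\le 1/\varepsilon$ of the lifted map, a minor variation).
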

\begin{proof} 
Noting that $\psi_-$ lies in a neighborhood of $\frac\pi2$ in the outer plateau $[\rho_\varepsilon,\rho_1]$, and that we can write $\cot \psi =\varepsilon \cot \theta$, we can find a constant $C>0$ so that
$$|\theta_+-\theta_1|\le \frac C\varepsilon |\psi_+-\psi_1|.$$ Combining this estimate with Lemma \ref{eta_estimate_outerplat} yields the result.
\end{proof}

\section{Proof for small eigenvalue parameters}\label{sec:proof}
Throughout this section, we will work with angles $\theta$ in all of $\mathbb{R}$, rather than $S^1$, and use the notation $\tilde{\theta}$ to distinguish it. If the covering map is denoted by $\Pi:\mathbb{R}\to S^1$ then, $\Pi (\tilde{\theta}) =\theta$.

 We introduce new notation for the trajectories of interest in the covering space $\mathbb{R}$.   The lift of $\theta_0(s)$ will be denoted $\tilde{\theta}_0(s)$ if it satisfies $\lim_{s\to -\infty} \tilde{\theta}_0(s)=0$, called the primary lift. Other lifts will be denoted $\tilde{\theta}^k_0(s)=\tilde{\theta}_0(s)+k\pi$. The (primary) lift of $\theta _-(\lambda, s)$, given in \eqref{eq:thetahatminus}, will be denoted $\tilde{\theta}_-(\mu,s)$, where $\lambda =\varepsilon^2\mu$. The other lifts will be denoted $\tilde{\theta}^k_-(\mu,s)=\tilde{\theta}_-(\mu,s)+k\pi$. Analogously, we use the notation $\tilde{\psi}_1(\mu,t)$ and $\tilde{\psi}_+(\mu,t)$ for the primary lifts of the model and full outer problems respectively; note we are suppressing explicit mention of the dependence of $\psi_-$ on $\varepsilon$.

 We set $\tilde{\mu}$ so that if $\mu \in \sigma(\Delta -V_1)$ and $\mu \ge 0$ then $\mu \in (0,\tilde{\mu})$. Recall that we evaluate the angles of the two candidate trajectories at $\sigma =\sigma _\varepsilon$. In terms of the independent variables, the candidate trajectories intersect the section when $s=s_\varepsilon$ or $t=t_\varepsilon$. It then follows from Lemma \ref{lem:model_outer_noeig} that there are $k_0$ and $k_1$ so that
 \begin{equation}\label{eq:asym_model_noeig}
     \begin{split}
         \tilde{\psi}(\tilde{\mu},t) &\to (2k_1+1)\frac{\pi}{2},\hspace{.7mm} {\rm and} \\
         \tilde{\psi}(0,t) &\to (2k_0+1)\frac{\pi}{2},
     \end{split}
 \end{equation}
 as $t \to -\infty$.
 From Sturm-Liouville theory, we know that $k_0 \ge k_1$, and we have the following lemma, recalling that $m(V_1)$ is the number of positive eigenvalues of $\Delta-V_1$ on $L_2$ radial functions in $\mathbb{R}^3$.
 \begin{lemma}\label{lem:spectrum_outer}
 $m(V_1) =k_0-k_1$.
 \end{lemma}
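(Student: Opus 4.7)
The plan is a Sturm oscillation-type argument applied to the outer model flow \eqref{eq:Lp_outer}, tracking the lifted backward limit
\[
F(\mu) := \lim_{t\to -\infty}\tilde\psi_1(\mu, t)
\]
as $\mu$ varies continuously on $[0,\tilde\mu]$. By Lemma \ref{lem:model_outer_noeig}, $F(\mu)\in\{(2k+1)\pi/2:k\in\mathbb{Z}\}$ whenever $\mu\notin \sigma_p(\Delta-V_1)$, so $F$ is a piecewise-constant integer-shifted step function whose total jump across $[0,\tilde\mu]$ will encode the eigenvalue count (up to a factor of $\pi$).

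First I would verify that $F$ is well defined and locally constant on $[0,\tilde\mu]\setminus\sigma_p(\Delta-V_1)$. The lift $\tilde\psi_1(\mu,t)$ is $C^1$ in $\mu$ via the center manifold theorem applied at $\hat\psi_{1,+}(\mu)$, and hyperbolicity (in reverse time) of the sources $(0,(2k+1)\pi/2)$ of \eqref{eq:Lp_outer} on $\tau=0$ forces the backward limit to be locally constant in $\mu$ on each component of the complement of $\sigma_p(\Delta-V_1)\cap[0,\tilde\mu]$. From \eqref{eq:asym_model_noeig}, $F(0)=(2k_0+1)\pi/2$ and $F(\tilde\mu)=(2k_1+1)\pi/2$.

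The heart of the argument is to show that at each $\mu^*\in\sigma_p(\Delta-V_1)\cap(0,\tilde\mu)$, the value of $F$ jumps by precisely $-\pi$. Geometrically, $\mu=\mu^*$ is exactly the parameter for which the backward-traced center-manifold trajectory coincides with the $1$-dimensional unstable manifold $W^u(0,k^*\pi)$ of some saddle; as $\mu$ crosses $\mu^*$ transversally, the trajectory breaks onto one side or the other of this saddle and limits onto the two adjacent sources $(0,(2k^*\pm 1)\pi/2)$. The direction of the jump is fixed by the standard monotonicity
\[
\partial_\mu \dot{\tilde\psi}_1 = \tau\cos^2\tilde\psi_1 \ge 0,
\]
combined with a Pr\"ufer-style comparison argument in compatible lifts, forcing $F$ to be monotonically non-increasing in $\mu$, consistent with the inequality $k_0 \ge k_1$ noted above.

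Summing over the $m(V_1)$ positive eigenvalues in $(0,\tilde\mu)$ gives
\[
F(\tilde\mu)-F(0) = (2k_1+1)\tfrac{\pi}{2} - (2k_0+1)\tfrac{\pi}{2} = -m(V_1)\,\pi,
\]
which rearranges to $m(V_1)=k_0-k_1$. The main obstacle I anticipate is the transversality and sign of the jump at each eigenvalue: one must show that the center-manifold trajectory crosses each saddle's unstable manifold transversely in the extended $(\mu,\tilde\psi,\tau)$ phase space, and that the crossing always produces a jump of magnitude exactly $\pi$ in the correct direction. These are standard consequences of the local saddle structure at $(0,k^*\pi)$ combined with smooth dependence of invariant manifolds on parameters, but they require careful bookkeeping in the compactified coordinates.
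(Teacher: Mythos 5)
Your argument is correct and is essentially the paper's own proof: the paper likewise proves this by a shooting/oscillation argument in $\mu$, tracking the backward limit of $\tilde{\psi}_1(\mu,t)$ between the odd multiples $(2k_0+1)\frac{\pi}{2}$ and $(2k_1+1)\frac{\pi}{2}$, with eigenvalues corresponding to the exceptional limits $k\pi$ and simplicity (equivalently, your monotonicity of $\partial_\mu\tilde{\psi}_1$, which mirrors Lemma~\ref{lem:exact_count}) giving exactness. Your version merely reorganizes the bookkeeping, counting jumps of size $\pi$ in the limit function $F(\mu)$ rather than counting the $\mu$-values at which the limit equals $k\pi$, which is the same Sturm--Liouville mechanism.
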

 \noindent The proof is a shooting argument: there must be $k_0-k_1$ values of $\mu$ for which 
 $$\tilde{\psi}_1(\mu,t) \to k\pi$$ for some integer $k$. Since the eigenvalues are simple, the count is exact.

 The remainder of the argument is to show that $k_0-k_1$ also counts the eigenvalues of the full problem. First, we need estimates on the evaluation of the solutions of the two model problems at the threshold. 

 \begin{lemma}\label{lem:model_inner_section}
 Assuming $0\notin \sigma(\Delta-V_0)$, there is an $\varepsilon_0 >0$ and $C>0$ so that for $\varepsilon \le \varepsilon_0$,
 \begin{equation}
     \left| \tilde{\theta}_0(s_\varepsilon)-m\pi\right| \le C \varepsilon^{-2\alpha},
 \end{equation}
 for some integer $m$, where $\alpha \in (-\frac12 ,0)$ is as chosen consistently with prior conditions.
 \end{lemma}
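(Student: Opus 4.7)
The plan is to derive the estimate from Lemma \ref{Lemma:theta00} by lifting the asymptotic $S^1$-bound on $\theta_0(r)$ to the universal cover $\mathbb{R}$ and then evaluating at the threshold. Under assumption (A3), Lemma \ref{Lemma:theta00} gives $\theta_0(r)=\mathcal{O}(1/r^2)$ as $r\to\infty$ (as a point of $S^1$). Since $\alpha<0$, the threshold $r_\varepsilon=\varepsilon^\alpha$ tends to $+\infty$ as $\varepsilon\to 0$, so evaluating this asymptotic bound at $r=r_\varepsilon$ produces exactly $\mathcal{O}(r_\varepsilon^{-2})=\mathcal{O}(\varepsilon^{-2\alpha})$. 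The nontrivial point is therefore to show that this rate controls the distance from the lift $\tilde{\theta}_0(s)$ to a single, well-defined integer multiple of $\pi$, rather than allowing $\tilde{\theta}_0$ to drift between different branches.

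To pin down the integer $m$, fix $\delta\in(0,\pi/2)$, let $C$ be the constant from Lemma \ref{Lemma:theta00}, and take $r_1>0$ large enough that $Cr_1^{-2}<\delta$. Then for all $r\ge r_1$, $\theta_0(r)$ lies in the $\delta$-neighborhood of $0$ in $S^1$. Its preimage under $\Pi:\mathbb{R}\to S^1$ is the disjoint union $\bigsqcup_{k\in\mathbb{Z}}(k\pi-\delta,k\pi+\delta)$, and continuity of $\tilde{\theta}_0$ forces it to remain in a single component $(m\pi-\delta,m\pi+\delta)$ for all $s\ge s_1:=r_1+\ln r_1$. The integer $m$ is uniquely determined by the primary-lift condition $\lim_{s\to-\infty}\tilde{\theta}_0(s)=0$ together with the rotation of $\tilde{\theta}_0$ accumulated over $(-\infty,s_1]$; in particular, $\tilde{\theta}_0(s)\to m\pi$ as $s\to+\infty$.

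To transfer the rate, observe that on each component $(k\pi-\delta,k\pi+\delta)$ the covering map $\Pi$ is an isometry onto its image in $S^1$. Hence
\[
|\tilde{\theta}_0(s)-m\pi|=\mathrm{dist}_{S^1}(\theta_0(r),0)\le \frac{C}{r^2}
\]
for every $r\ge r_1$. Choosing $\varepsilon_0>0$ small enough that $\varepsilon^\alpha\ge r_1$ for all $\varepsilon\in(0,\varepsilon_0)$, evaluation at $r=r_\varepsilon=\varepsilon^\alpha$ yields $|\tilde{\theta}_0(s_\varepsilon)-m\pi|\le Cr_\varepsilon^{-2}=C\varepsilon^{-2\alpha}$, which is the claim.

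The main (and essentially only) obstacle was already absorbed into Lemma \ref{Lemma:theta00} and its blow-up proof; everything beyond that is a topological trapping argument for the lift. One can in fact identify $m$ with $m(V_0)$ by a Sturm--Liouville rotation count—each additional positive eigenvalue of $\Delta-V_0$ corresponds to one additional net crossing of $\tilde{\theta}_0$ through an angle $\equiv\pi/2\pmod\pi$—but the statement as given only requires the existence of some $m\in\mathbb{Z}$, which the argument above supplies.
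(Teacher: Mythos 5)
Your proposal is correct and follows essentially the same route as the paper's (very terse) proof: evaluate the $O(1/r^2)$ asymptotics of Lemma \ref{Lemma:theta00} at the threshold $r_\varepsilon=\varepsilon^\alpha$, noting $s_\varepsilon\to\infty$ as $\varepsilon\to 0$. Your additional topological trapping argument for the lift simply makes explicit a step the paper leaves implicit, so there is nothing further to address.
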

 \begin{proof}
  From Lemma \ref{Lemma:theta00}, we know that any lift $\tilde{\theta}_0(s)$ must tend to an integer multiple of $\pi$ as $s \to + \infty$. Recalling that $r_\varepsilon =\varepsilon^\alpha$, and that $s_\varepsilon =r_\varepsilon +\ln{r_\varepsilon}$ we can conclude that $s_\varepsilon \to \infty$ as $\varepsilon \to 0$.  The lemma then follows from the asymptotic relation in Lemma \ref{Lemma:theta00}.   
 \end{proof}

 For the outer problem, we have:
 \begin{lemma}\label{lem:model_outer_section}
 If $\mu \notin \sigma(\Delta - V_1)$ then there is and $\varepsilon_0>0$ and $C>0$ so that
 \begin{equation}
     \left| \tilde{\theta}_1(\mu,t_\varepsilon)-(2k+1)\frac{\pi}{2}\right| \le C \varepsilon^{2\kappa -1},
 \end{equation}
 where $\kappa=\alpha + 1$ is chosen appropriately, and so $\kappa>\frac12$.
 
 \end{lemma}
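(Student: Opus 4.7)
The plan is to estimate $|\tilde\theta_1(\mu, t_\varepsilon) - (2k+1)\pi/2|$ by first controlling the analogous quantity in the outer angular coordinates $\psi$ and then converting via \eqref{theta_psi}. I would begin by linearizing \eqref{eq:Lp_outer} at the backward limit equilibrium $(\tau, \psi) = (0, (2k+1)\pi/2)$ supplied by Lemma \ref{lem:model_outer_noeig}. Setting $\psi = (2k+1)\pi/2 + \phi$ and Taylor expanding produces $\dot\phi = 2\phi - \tau + O(\phi^2 + \tau\phi)$, together with the decoupled $\dot\tau = \tau + O(\tau^2)$; the Jacobian is upper triangular with positive eigenvalues $2$ and $1$, making the equilibrium a hyperbolic sink in reverse time.

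Next, from the stable manifold theorem applied in reverse $t$ one obtains $|\tilde\psi_1(\mu, t) - (2k+1)\pi/2| \le C e^{\beta t}$ for some $\beta > 0$ and $t$ sufficiently negative. Evaluating at $t_\varepsilon = \varepsilon^\kappa + \kappa\ln\varepsilon$ and using $e^{t_\varepsilon} = e^{\varepsilon^\kappa}\varepsilon^\kappa \sim \varepsilon^\kappa$ yields $|\tilde\psi_1(\mu, t_\varepsilon) - (2k+1)\pi/2| \le C\varepsilon^{\beta\kappa}$. To translate to $\theta$, with $\phi := \tilde\psi_1(\mu, t_\varepsilon) - (2k+1)\pi/2$ small, $\tan\psi \approx -1/\phi$, so \eqref{theta_psi} gives $\tan\theta \approx -\varepsilon/\phi$, of magnitude $\varepsilon^{1-\beta\kappa}$. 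Provided $\beta\kappa > 1$, this is large, placing $\theta$ near a half-integer multiple of $\pi$ with deviation $|\theta - (2k'+1)\pi/2| = O(1/|\tan\theta|) = O(\varepsilon^{\beta\kappa - 1})$. Choosing $\beta = 2$ (the larger eigenvalue) together with $\kappa > 1/2$ recovers the claimed bound $C\varepsilon^{2\kappa - 1}$.

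The main obstacle will be justifying that the effective rate is $\beta = 2$ rather than the slow $\beta = 1$. Since the decoupled $\tau$ equation gives $\tau(t) \sim e^t$, the forcing term $-\tau$ in $\dot\phi = 2\phi - \tau$ produces a particular solution $\phi_p \sim e^t$, so a generic trajectory approaches the equilibrium along the slow direction at rate $e^t$, which after the conversion would only yield the weaker bound $O(\varepsilon^{1-\kappa})$. To recover $\beta = 2$ one must argue that the slow component is absent or subdominant for the particular trajectory $\tilde\psi_1$; a natural route is to mirror the blow-up argument of Lemma \ref{Lemma:theta00} inside the outer compactified chart, which would expose the sharp asymptotic of $\theta_1(\rho)$ as $\rho \to 0$ directly from the desingularized flow, with the hypothesis $\mu \notin \sigma_p(\Delta - V_1)$ selecting the hyperbolic equilibrium that governs the relevant decay.
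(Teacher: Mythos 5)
Your outline coincides with the paper's own proof: linearize \eqref{eq:Lp_outer} at the backward-limit equilibrium $\bigl((2k+1)\tfrac\pi2,0\bigr)$ furnished by Lemma \ref{lem:model_outer_noeig}, note the triangular Jacobian with eigenvalues $2$ and $1$, evaluate at $t_\varepsilon$ so that $e^{t_\varepsilon}\sim\varepsilon^{\kappa}$, and convert to $\theta$ through \eqref{theta_psi}; the paper asserts $\bigl|\tilde\psi_1(\mu,t_\varepsilon)-(2k+1)\tfrac\pi2\bigr|\le C\varepsilon^{2\kappa}$ directly from ``the unstable eigenvalue'' and then switches to $\theta$. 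You have correctly isolated the point on which everything hinges, and which the paper's one-line proof does not address: whether the approach is at the strong rate $e^{2t}$ or at the slow rate $e^{t}$ forced by the $-\tau$ term.

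The gap is that your proposed resolution cannot work: the slow component is genuinely present for $\tilde\psi_1$. The strong-unstable manifold of $\bigl((2k+1)\tfrac\pi2,0\bigr)$ is tangent to $(1,0)$, and since $\{\tau=0\}$ is invariant and carries exactly the rate-$2$ dynamics $\dot\phi=2\phi+O(\phi^3)$, that manifold lies inside $\{\tau=0\}$. The trajectory $\tilde\psi_1$ has $\tau>0$, so in backward time it must approach tangent to the slow eigenvector $(1,1)$, giving $\psi_1-(2k+1)\tfrac\pi2\sim\tau\sim\rho$, i.e.\ of size $\varepsilon^{\kappa}$ at the threshold, not $\varepsilon^{2\kappa}$. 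The hypothesis $\mu\notin\sigma_p(\Delta-V_1)$ only selects \emph{which} equilibrium is the backward limit (it encodes $p\sim B/\rho$ with $B\ne0$); it does not place the trajectory on the strong-unstable manifold, so a blow-up in the spirit of Lemma \ref{Lemma:theta00} would simply confirm the slow asymptotics rather than remove them. A concrete check: $V_1\equiv0$, $\mu=1$ gives $p=e^{-\rho}/\rho$, hence $\cot\psi_1=-\rho/(1+\rho)$, a deviation exactly of order $\rho$; converting by \eqref{theta_psi} then yields $\tan\theta_1\approx-\varepsilon/\rho_\varepsilon=-\varepsilon^{1-\kappa}$, which is small because $\kappa<1$, so $\tilde\theta_1(t_\varepsilon)$ lands near a multiple of $\pi$ (at distance $\sim\varepsilon^{1-\kappa}$), not within $C\varepsilon^{2\kappa-1}$ of an odd multiple of $\tfrac\pi2$ --- in particular the failure is not just the weaker exponent $\varepsilon^{1-\kappa}$ you mention, but a different limiting angle altogether. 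So the step you deferred is the step that fails, and it cannot be repaired along the route you sketch; any complete argument has to carry the true $O(\rho_\varepsilon)$ deviation of $\psi_1$ (equivalently the $O(\varepsilon^{1-\kappa})$ deviation of $\theta_1$ from multiples of $\pi$) through the conversion and the subsequent matching, rather than invoke the rate-$2$ bound.
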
 
 \begin{proof} 
 From Lemma \ref{lem:model_outer_noeig}, $\tilde{\psi}_1(\mu,t)$ tends to an odd multiple of $\frac\pi 2$. By linearizing \eqref{eq:Lp_outer}, reset in the covering space,  at the fixed point $\left( (2k+1)\frac \pi 2,0 \right)$, and calculating the unstable eigenvalue we can conclude there is a $C>0$ for which
 $$\left| \tilde{\psi}_1(\mu,t_\varepsilon)-(2k+1)\frac{\pi}{2}\right| \le C \varepsilon^{2\kappa},$$ and the statement in the lemma follows by switching to $\theta$.
 
 \end{proof}

 The estimates in Lemmas \ref{lem:model_inner_section} and \ref{lem:model_outer_section} pertain to the respective model problems. We need similar estimates for the two full problems.

 \begin{lemma}
 There is an $\varepsilon_0 >0$ and $C>0$ so that for $\varepsilon \le \varepsilon_0$,
 \begin{equation}
     \left| \tilde{\theta}_-(s_\varepsilon)-m\pi\right| \le C \left( \varepsilon +\varepsilon^{-2\alpha} \right),
 \end{equation}
 for some integer $m$.
 \end{lemma}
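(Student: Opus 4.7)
The plan is to combine the two approximation results already established: Lemma \ref{lem:outer_threshold}, which controls the deviation of the full inner trajectory from the model inner trajectory at the threshold, and Lemma \ref{lem:model_inner_section}, which controls the deviation of the model inner trajectory from an integer multiple of $\pi$. The bridge between them is the triangle inequality, applied carefully in the covering space $\mathbb{R}$.

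First, I would transfer the estimate of Lemma \ref{lem:outer_threshold} up to the cover. The quantity $\eta_0(r) = \theta_-(\varepsilon^2\mu, s) - \theta_0(s)$ is defined modulo $\pi$, but because both primary lifts $\tilde{\theta}_-(\mu, s)$ and $\tilde{\theta}_0(s)$ satisfy $\tilde{\theta}_-(\mu, s), \tilde{\theta}_0(s) \to 0$ as $s \to -\infty$, their difference starts at $0$, evolves continuously in $s$, and satisfies the same ODE as $\eta_0$. Since Lemma \ref{lem:outer_threshold} ensures $|\eta_0(r)| \le C\varepsilon$ for $r \in [r_0, r_\varepsilon]$, this difference of lifts never wraps past a branch cut, and therefore
\begin{equation*}
\bigl|\tilde{\theta}_-(\mu, s_\varepsilon) - \tilde{\theta}_0(s_\varepsilon)\bigr| = |\eta_0(r_\varepsilon)| \le C_1 \varepsilon.
\end{equation*}

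Next, Lemma \ref{lem:model_inner_section} supplies an integer $m$ (depending on $\varepsilon$) such that $|\tilde{\theta}_0(s_\varepsilon) - m\pi| \le C_2 \varepsilon^{-2\alpha}$. The triangle inequality then yields
\begin{equation*}
\bigl|\tilde{\theta}_-(s_\varepsilon) - m\pi\bigr| \le \bigl|\tilde{\theta}_-(s_\varepsilon) - \tilde{\theta}_0(s_\varepsilon)\bigr| + \bigl|\tilde{\theta}_0(s_\varepsilon) - m\pi\bigr| \le C_1 \varepsilon + C_2 \varepsilon^{-2\alpha},
\end{equation*}
and taking $C = \max\{C_1, C_2\}$ gives the desired bound for $\varepsilon$ smaller than the minimum of the two thresholds $\varepsilon_0$ produced by the two lemmas.

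The only delicate point, and the main source of care, is the bookkeeping with lifts: one must verify that choosing the primary lifts of $\theta_-$ and $\theta_0$ makes the difference of lifts equal to (a continuous extension of) $\eta_0$, so that the same integer $m$ works for $\tilde{\theta}_-(s_\varepsilon)$ as for $\tilde{\theta}_0(s_\varepsilon)$. This is guaranteed by the matching asymptotics at $s \to -\infty$ and by the smallness of $\eta_0$ throughout $[r_0, r_\varepsilon]$, which prevents the two lifts from drifting into different fibers of the covering map. With that observation in hand, the argument is essentially a direct concatenation of the two previous lemmas, and I do not anticipate any further obstacle.
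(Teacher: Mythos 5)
Your proposal is correct and follows exactly the paper's argument, which combines Lemma~\ref{lem:outer_threshold} and Lemma~\ref{lem:model_inner_section} via the triangle inequality; your extra bookkeeping with the lifts makes explicit a point the paper leaves implicit but does not change the route.
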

 \begin{proof} 
 This follows easily from Lemma \ref{lem:model_inner_section} put together with Lemma \ref{lem:outer_threshold}, using the triangle inequality. 
 \end{proof}
The lemma addresses the fate of the primary lift, a straightforward corollary gives the related estimate for other lifts.

\begin{corollary} \label{cor:inner_threshold_final}
    There is an $\varepsilon_0>0$ and $C>0$ so that, if $\varepsilon \le \varepsilon _0$, then
    \begin{equation}
        \left| \tilde{\theta}_-^k(s_\varepsilon)-(k+m)\pi\right| \le C \left( \varepsilon +\varepsilon^{-2\alpha} \right)
    \end{equation}
\end{corollary}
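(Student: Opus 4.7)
The plan is direct: derive the corollary as an immediate consequence of the preceding lemma by exploiting the translation invariance of the dynamics in the covering space $\mathbb{R}$. The key observation is that, by the definition of the lifts recalled at the opening of Section \ref{sec:proof}, the $k$-th lift is simply a vertical shift of the primary lift, namely $\tilde{\theta}^k_-(\mu,s) = \tilde{\theta}_-(\mu,s) + k\pi$ for every integer $k$. Hence no new dynamical estimate is needed; the whole content is algebraic bookkeeping on top of the primary-lift bound already established.

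The steps I would carry out are as follows. First, fix $\varepsilon_0$, $C$, and the integer $m$ produced by the preceding lemma, so that $|\tilde{\theta}_-(s_\varepsilon) - m\pi| \le C(\varepsilon + \varepsilon^{-2\alpha})$ for all $\varepsilon \le \varepsilon_0$. Next, for arbitrary $k \in \mathbb{Z}$ and the same $\varepsilon$, substitute the definition of $\tilde{\theta}^k_-$ and evaluate at the threshold $s_\varepsilon$, choosing the target integer to be $k+m$ so that the shift cancels:
\begin{equation*}
\tilde{\theta}^k_-(s_\varepsilon) - (k+m)\pi = \bigl(\tilde{\theta}_-(s_\varepsilon) + k\pi\bigr) - (k+m)\pi = \tilde{\theta}_-(s_\varepsilon) - m\pi.
\end{equation*}
Finally, apply the preceding lemma to the right-hand side to conclude the stated bound with the same constants $\varepsilon_0$ and $C$, uniformly in $k$.

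There is no genuine obstacle. The only subtlety worth flagging is that the integer $m$ is fixed once by the primary lift, and the corollary is then stated with target $(k+m)\pi$ rather than an independently chosen integer for each $k$; this reflects the fact that covering-space translations send integer multiples of $\pi$ to integer multiples of $\pi$ without rearranging which one is nearest. This packaging is precisely what the subsequent matching/gluing argument in Section \ref{sec:proof} needs, where distinct lifts of the inner trajectory are paired with the distinct winding behaviors of the outer trajectory to enumerate eigenvalues in the Sturm--Liouville spirit, using the outer estimates of Lemma \ref{eta_estimate_outerplat} and Corollary \ref{theta_estimate_outer}.
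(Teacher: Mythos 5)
Your proposal is correct and coincides with the paper's (implicitly one-line) argument: since the lifts are defined by $\tilde{\theta}^k_-(\mu,s)=\tilde{\theta}_-(\mu,s)+k\pi$, the difference $\tilde{\theta}^k_-(s_\varepsilon)-(k+m)\pi$ equals $\tilde{\theta}_-(s_\varepsilon)-m\pi$, and the bound of the preceding lemma transfers verbatim with the same $\varepsilon_0$ and $C$.
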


 For the outer problem, there are two estimates, one at each end of the interval $[0,\tilde{\mu}]$ which encloses the (positive) point spectrum of $\Delta -V_1$. 

 \begin{lemma} \label{lem:outer_threhold_final}
 There is an $\varepsilon_0>0$ and $C>0$ so that, if $\varepsilon \le \varepsilon _0$, then
 \begin{equation}
     \begin{split}
         &\left| \tilde{\theta}_+(\tilde{\mu},t_\varepsilon)-(2k_1+1)\frac\pi 2\right| \le C \left( \varepsilon +\varepsilon^{2\kappa -1} \right),\\
         &\left| \tilde{\theta}_+(0,t_\varepsilon)-(2k_0+1)\frac\pi 2\right| \le C \left( \varepsilon +\varepsilon^{2\kappa -1} \right),
     \end{split}
 \end{equation}
 where $k_0$ and $k_1$ are given by \eqref{eq:asym_model_noeig}.
 \end{lemma}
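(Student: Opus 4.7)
The plan is to combine the estimate for the outer model problem at the threshold (Lemma \ref{lem:model_outer_section}) with the bound on the deviation between the full outer problem and the outer model (Corollary \ref{theta_estimate_outer}) via the triangle inequality, exactly mirroring the way Corollary \ref{cor:inner_threshold_final} is deduced from Lemma \ref{lem:model_inner_section} and Lemma \ref{lem:outer_threshold} in the inner setting.

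First, observe that the two parameter values at which we evaluate are admissible for the model estimate: $\mu = 0 \notin \sigma_p(\Delta - V_1)$ by (A3), and $\mu = \tilde{\mu} \notin \sigma_p(\Delta - V_1)$ by the very choice of $\tilde{\mu}$ as an upper bound for the positive point spectrum. Hence Lemma \ref{lem:model_outer_section} applies at both endpoints. Moreover, by the definition of $k_0$ and $k_1$ in \eqref{eq:asym_model_noeig}, the primary lift $\tilde{\psi}_1(\tilde{\mu}, t)$ tends to $(2k_1+1)\pi/2$ and $\tilde{\psi}_1(0,t)$ tends to $(2k_0+1)\pi/2$ as $t \to -\infty$. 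Passing through the relation $\tan\theta = \varepsilon \tan\psi$ from \eqref{theta_psi}, and accounting for the attendant $\varepsilon^{-1}$ factor when converting a $\psi$-estimate into a $\theta$-estimate in a neighborhood of an odd multiple of $\pi/2$, Lemma \ref{lem:model_outer_section} yields
\begin{align*}
\bigl|\tilde{\theta}_1(\tilde{\mu}, t_\varepsilon) - (2k_1+1)\tfrac{\pi}{2}\bigr| &\le C\,\varepsilon^{2\kappa-1}, \\
\bigl|\tilde{\theta}_1(0, t_\varepsilon) - (2k_0+1)\tfrac{\pi}{2}\bigr| &\le C\,\varepsilon^{2\kappa-1}.
\end{align*}

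Next, Corollary \ref{theta_estimate_outer} gives $|\theta_+(\mu,\rho_\varepsilon) - \theta_1(\mu,\rho_\varepsilon)| \le C\varepsilon$ for $\mu \notin \sigma_p(\Delta-V_1)$; since both trajectories are close on $S^1$, one can lift so that the primary lift $\tilde{\theta}_+$ and the primary lift $\tilde{\theta}_1$ agree up to $O(\varepsilon)$ at the threshold $t = t_\varepsilon$ (this is the same bookkeeping issue as in Corollary \ref{cor:inner_threshold_final}; the lifts are selected so that the distance on $S^1$ equals the distance in $\mathbb{R}$). Applying the triangle inequality:
\begin{equation*}
\bigl|\tilde{\theta}_+(\mu, t_\varepsilon) - (2k+1)\tfrac{\pi}{2}\bigr| \le \bigl|\tilde{\theta}_+(\mu, t_\varepsilon) - \tilde{\theta}_1(\mu, t_\varepsilon)\bigr| + \bigl|\tilde{\theta}_1(\mu, t_\varepsilon) - (2k+1)\tfrac{\pi}{2}\bigr| \le C(\varepsilon + \varepsilon^{2\kappa-1}),
\end{equation*}
with $k = k_1$ when $\mu = \tilde{\mu}$ and $k = k_0$ when $\mu = 0$. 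Taking $\varepsilon_0$ small enough that both the model estimate and Corollary \ref{theta_estimate_outer} are in force simultaneously yields the claimed bounds.

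The only delicate point is the lift bookkeeping: we must confirm that the choices of $k_0$ and $k_1$ produced by the model problem at $t = -\infty$ are indeed the correct integers appearing at $t = t_\varepsilon$ for the primary lift $\tilde{\theta}_+$ of the full problem. This follows because $\tilde{\psi}_1(\mu, t)$ is monotone-close to $(2k_j+1)\pi/2$ on the interval $[t_\varepsilon, \text{large})$ by Lemma \ref{lem:model_outer_section}, and $\tilde{\psi}_-(\mu, t)$ stays $O(\varepsilon^2)$-close to $\tilde{\psi}_1(\mu, t)$ by Lemma \ref{eta_estimate_outerplat}; hence no integer-shift ambiguity arises when converting to $\theta$. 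This is the main step where one must be careful, but once $\varepsilon_0$ is small enough no additional obstruction appears, and the proof concludes.
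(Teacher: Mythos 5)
Your proposal is correct and matches the paper's (very brief) proof, which likewise just combines the model-problem threshold estimate of Lemma~\ref{lem:model_outer_section} with the full-versus-model deviation bound of Lemma~\ref{eta_estimate_outerplat} via the triangle inequality, in analogy with Corollary~\ref{cor:inner_threshold_final}; your use of Corollary~\ref{theta_estimate_outer} is just the $\theta$-converted form of Lemma~\ref{eta_estimate_outerplat}, so it is the same argument. The only redundancy is that the $\varepsilon^{-1}$ loss from $\tan\theta=\varepsilon\tan\psi$ is already incorporated in the statement of Lemma~\ref{lem:model_outer_section}, so you need not account for it again, and your lift-bookkeeping remark is a sensible (if implicit in the paper) addition.
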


\begin{proof}
The proof follows similarly to that of Lemma \ref{lem:outer_threhold_final} and Corollary \ref{cor:inner_threshold_final} but applying Lemma \ref{eta_estimate_outerplat} and Lemma \ref{lem:model_outer_section}.
\end{proof}

\begin{remark}
\label{rem:alphacon}
    Note, we need here that $2 \kappa - 1 > 0$, which since $\kappa = 1 + \alpha$ is the condition that $\alpha > -\frac12$.
\end{remark}

 We can now complete the proof of the main result for eigenvalue parameter values of size $O(\varepsilon^2)$.
Set
\begin{equation}\label{Sigma_k}
    \Sigma^k(\mu,\varepsilon)=\tilde{\theta}^k_-(\mu,s_\varepsilon)-\tilde{\theta}_+(\mu, t_\varepsilon).
\end{equation}
If $\Sigma^k(\mu,\varepsilon)=0$ for any integer $k$, then $\lambda =\varepsilon^2\mu$ is an eigenvalue of $\Delta-(V_0+V_{1,\varepsilon})$ for that value of $\varepsilon$.  

The following lemma summarizes the first part of this result.

\begin{lemma}\label{existence_lemma}
    If $\varepsilon$ is sufficiently small, then there are at least $m(V_1)$ values of $\mu >0$ for which there is an integer $k$ so that
    $$\Sigma^k(\mu,\varepsilon)=0.$$
\end{lemma}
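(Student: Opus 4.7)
The plan is to apply the Intermediate Value Theorem to each function $\mu\mapsto\Sigma^k(\mu,\varepsilon)$ on $[0,\tilde\mu]$ for a carefully chosen family of integers $k$, and then use the trivial shift relation $\Sigma^k-\Sigma^{k'}=(k-k')\pi$ to conclude that the resulting zeros produce pairwise distinct eigenvalues. This will give the required $m(V_1)=k_0-k_1$ values of $\mu>0$.

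First I would assemble the endpoint data. Corollary \ref{cor:inner_threshold_final}, combined with the fact that $\lambda=\varepsilon^2\mu$ is uniformly $O(\varepsilon^2)$ on $[0,\tilde\mu]$, gives
\begin{equation*}
\tilde{\theta}_-^k(\mu,s_\varepsilon)=(k+m)\pi+O(\varepsilon+\varepsilon^{-2\alpha}),
\end{equation*}
while Lemma \ref{lem:outer_threhold_final} provides the outer values at the endpoints
\begin{equation*}
\tilde{\theta}_+(0,t_\varepsilon)=(2k_0+1)\tfrac{\pi}{2}+O(\varepsilon+\varepsilon^{2\kappa-1}),\qquad \tilde{\theta}_+(\tilde\mu,t_\varepsilon)=(2k_1+1)\tfrac{\pi}{2}+O(\varepsilon+\varepsilon^{2\kappa-1}).
\end{equation*}
For each integer $\ell$ with $k_1+1\le\ell\le k_0$, I would set $k_\ell=\ell-m$ and subtract, obtaining
\begin{equation*}
\Sigma^{k_\ell}(0,\varepsilon)=\tfrac{(2\ell-2k_0-1)\pi}{2}+o(1)\le -\tfrac{\pi}{2}+o(1),\qquad \Sigma^{k_\ell}(\tilde\mu,\varepsilon)=\tfrac{(2\ell-2k_1-1)\pi}{2}+o(1)\ge \tfrac{\pi}{2}+o(1).
\end{equation*}
Hence, for all $\varepsilon$ small enough, $\Sigma^{k_\ell}(\cdot,\varepsilon)$ is strictly negative at $\mu=0$ and strictly positive at $\mu=\tilde\mu$. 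Its $C^1$ dependence on $\mu$, noted just after \eqref{eq:cond2}, then yields by the IVT a zero $\mu_\ell\in(0,\tilde\mu)$.

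Distinctness of the $\mu_\ell$ comes for free from the identity $\Sigma^{k_\ell}(\mu,\varepsilon)-\Sigma^{k_{\ell'}}(\mu,\varepsilon)=(\ell-\ell')\pi\neq 0$ for $\ell\neq\ell'$: the two functions cannot vanish at the same $\mu$. Thus the integers $\ell=k_1+1,\ldots,k_0$ produce $k_0-k_1=m(V_1)$ distinct positive eigenvalues $\lambda_\ell=\varepsilon^2\mu_\ell$ of $\Delta-W_\varepsilon$, as claimed. The main obstacle is the bookkeeping of the three competing error scales $\varepsilon$, $\varepsilon^{-2\alpha}$ and $\varepsilon^{2\kappa-1}=\varepsilon^{1+2\alpha}$: each must vanish as $\varepsilon\to 0$ and be dominated by the $\pi/2$ gap at each endpoint. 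This is precisely what forces $\alpha\in(-\tfrac12,0)$ (flagged in Remark \ref{rem:alphacon}) together with the more restrictive choice of $\alpha$ close to $-\tfrac12$ already imposed during the inner-plateau bootstrap; no new analytic ingredient is required beyond the lemmas already established.
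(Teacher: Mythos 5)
Your proposal is correct and follows essentially the same route as the paper: an intermediate value (shooting) argument in $\mu$ over $[0,\tilde\mu]$ built on the endpoint estimates of Lemma \ref{lem:outer_threhold_final} and Corollary \ref{cor:inner_threshold_final}, with Lemma \ref{lem:spectrum_outer} converting the count $k_0-k_1$ into $m(V_1)$. You merely make explicit what the paper leaves as ``it follows easily,'' namely the sign change of each $\Sigma^{k_\ell}$ at the endpoints and the distinctness of the resulting zeros via the shift identity $\Sigma^{k}-\Sigma^{k'}=(k-k')\pi$.
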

\begin{proof}
The proof proceeds in a similar fashion to that of Lemma \ref{lem:spectrum_outer}. From Lemma \ref{lem:outer_threhold_final}, there will be $k_0-k_1$ values of $\mu \in [0,\tilde{\mu}]$ for which $\tilde{\theta}_+(\mu,t_\varepsilon)=k\pi$. Using Corollary \ref{cor:inner_threshold_final}, it follows easily that there will be $k_0-k_1$ values of $\mu$ for which $\Sigma^k(\mu, \varepsilon)=0$ for some $k$, again with $\varepsilon$ sufficiently small, The number $k_0-k_1$ can be replaced by $m(V_1)$ from Lemma \ref{lem:spectrum_outer}, thus proving the lemma. 
\end{proof}

It remains to prove that there are at most $m(V_1)$ such small eigenvalues, i.e., that the eigenvalue count is exact.

\begin{lemma}\label{lem:exact_count}
If $\varepsilon$ is sufficiently small and $\mu=\hat{\mu}$ is a zero of $\Sigma^k$, then 
$$\frac{\partial}{\partial\mu}
\biggr{\rvert}_{\mu=\hat{\mu}} \Sigma^k(\mu,\varepsilon) >0.$$
\end{lemma}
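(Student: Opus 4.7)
The plan is to recognize Lemma~\ref{lem:exact_count} as a Sturm--Liouville-type monotonicity statement and verify it directly at the level of the scalar angular equation \eqref{eq:Lp3}. Differentiating \eqref{eq:Lp3} with respect to $\mu$ (recall $\lambda=\varepsilon^{2}\mu$) along either of the two trajectories $\tilde{\theta}_-^k$ or $\tilde{\theta}_+$ produces the scalar linear variational equation
\begin{equation*}
\dot{\zeta}=A(s)\,\zeta+\varepsilon^{2}\sigma(s)\cos^{2}\theta(s),\qquad A(s)=\partial_{\theta}\dot{\theta}\bigr|_{(\theta(s),\sigma(s))},
\end{equation*}
in which the forcing is manifestly nonnegative. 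Writing $a(s)=\partial_{\mu}\tilde{\theta}_-^{k}(\mu,s)$ and $b(s)=\partial_{\mu}\tilde{\theta}_{+}(\mu,s)$, it will suffice to prove $a(s_{\varepsilon})\geq 0$ and $b(s_{\varepsilon})\leq 0$, with strict inequality in at least one.

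For $a$, I would read off the boundary condition from \eqref{eq:fps1}: the fixed point $\hat{\theta}_{0,-}=(0,0)$ is independent of $\mu$, hence $a(-\infty)=0$. Since the $\theta$-linearization at $\hat{\theta}_{0,-}$ has negative eigenvalue, $A(s)\to -2$ as $s\to -\infty$, so the homogeneous solutions of the variational equation blow up at $-\infty$ and the unique solution with zero limit is picked out by variation of constants,
\begin{equation*}
a(s_{\varepsilon})=\int_{-\infty}^{s_{\varepsilon}}\exp\!\left(\int_{s'}^{s_{\varepsilon}}A(\tau)\,d\tau\right)\varepsilon^{2}\sigma(s')\cos^{2}\tilde{\theta}_-^{k}(s')\,ds'\;\geq 0,
\end{equation*}
with strict positivity since $\cos^{2}\tilde{\theta}_-^{k}$ cannot vanish on a set of full measure. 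For $b$, the fixed point $\hat{\theta}_{1,+}(\lambda)=\arctan(-\sqrt{\varepsilon^{2}\mu})$ from \eqref{eq:fps2} does depend on $\mu$, and so $b(+\infty)=-\varepsilon/(2\sqrt{\mu}(1+\varepsilon^{2}\mu))<0$ for $\hat{\mu}>0$. The $\theta$-linearization at $\hat{\theta}_{1,+}$ has the strictly positive eigenvalue $2\sqrt{\varepsilon^{2}\mu}$, so on the unique $1$-dimensional center manifold $W^{c}(\hat{\theta}_{1,+})$ along which $\tilde{\theta}_{+}$ lives, the homogeneous variational equation grows exponentially at $+\infty$, which pins down $b$ by its asymptotic value through backward variation of constants,
\begin{equation*}
b(s_{\varepsilon})=-\int_{s_{\varepsilon}}^{\infty}\exp\!\left(-\int_{s_{\varepsilon}}^{s'}A(\tau)\,d\tau\right)\varepsilon^{2}\sigma(s')\cos^{2}\tilde{\theta}_{+}(s')\,ds'\;<0.
\end{equation*}
Combining the two signs gives $\partial_{\mu}\Sigma^{k}(\hat{\mu},\varepsilon)=a(s_{\varepsilon})-b(s_{\varepsilon})>0$.

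The main obstacle I anticipate is the justification of the backward integration at $+\infty$, because $\hat{\theta}_{1,+}$ is a saddle--center (zero eigenvalue in $\sigma$ coupled with the positive eigenvalue in $\theta$), so the variational system is not hyperbolic on the full compactified phase space. The resolution is to exploit the uniqueness of $W^{c}(\hat{\theta}_{1,+})$ already used throughout Section~\ref{sec:geom}: on this invariant curve the variational equation reduces to a genuinely scalar ODE and the positive $\theta$-eigenvalue alone supplies the exponential dichotomy needed to single out $b$ by its asymptotic value. Note that the argument in fact gives the stronger conclusion $\partial_{\mu}\Sigma^{k}(\mu,\varepsilon)>0$ for every admissible $\mu$, not just at zeros; together with Lemma~\ref{existence_lemma} this immediately yields the exact count $m(V_{1})$ of eigenvalues in the gap, as required for Theorem~\ref{thm:1}.
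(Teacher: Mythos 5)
Your proposal is correct and follows essentially the same route as the paper: differentiate the angular equation in $\mu$, observe that the forcing term $\propto \sigma\cos^2\theta$ is nonnegative, and combine this with the asymptotic behavior of the two invariant-manifold trajectories (the $\mu$-independent fixed point $\hat{\theta}_{0,-}$ at one end, the $\mu$-dependent angle $\arctan(-\sqrt{\lambda})$ at the other) to get $\partial_\mu\tilde{\theta}_-^k>0$ and $\partial_\mu\tilde{\theta}_+<0$ at the matching section. The only differences are cosmetic: you write explicit variation-of-constants formulas anchored at the ends where the homogeneous solutions grow, while the paper fixes the sign near each end (via the tangency expansion at $(0,0)$ and the limit $\arctan(-\sqrt{\mu})$ in the $\psi$-variables) and propagates it with an integrating factor.
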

\begin{proof}
    We first show that $\frac{\partial}{\partial\mu}
\tilde{\theta}_+(\mu, t)<0$ for any $\mu$ and $t$. It is equivalent to show that $\frac{\partial}{\partial\mu}
\tilde{\psi}_+(\mu, t)<0$ for any $\mu$ and $t$. As $t\to + \infty$, $\tilde{\psi}_+(\mu, t) \to \arctan (-\sqrt{\mu})$. Since $\frac{d}{d\mu}\arctan (- \sqrt{\mu})<0$, $\frac{\partial}{\partial\mu}\tilde{\psi}_+(\mu, t)<0$ for $t$ large. To show that this quantity stays negative, we calculate from \eqref{eq:Lp6}:
\begin{equation}\label{eq:variation_mu}
  \frac{d}{dt}\left({\frac{\partial\psi}{\partial\mu}} \right)=R(\tau,\psi,\mu)\frac{\partial \psi }{\partial\mu}+\tau \cos^2\psi,  
\end{equation}
where $R=2(\tau-1)\cos 2 \psi -\tau \left((\mu+\tilde{V}_{0,\varepsilon}(\tau)+\tilde{V}_1(\tau)+1)\sin 2 \psi\right)$. Since $\tau\cos^2\psi \ge 0$, solving \eqref{eq:variation_mu} using an integrating factor, we can conclude that $\frac{\partial}{\partial\mu}
\tilde{\psi}_+(\mu, t)<0$ for any $\mu$ and $t$.

Similarly, we show that $\frac{\partial}{\partial\mu}
\tilde{\theta}_-(\mu, t)<0$ for any $\mu$ and $t$ by first showing that $\frac{\partial}{\partial\mu}
\tilde{\theta}_-(\mu, t)>0$ for large negative $t$. Using that the unstable manifold of $(0,0)$ in \eqref{eq:Lp3} is tangent to the unstable eigenvector, we can expand $$\theta =\left( \mu+V_0(0)+\varepsilon^2V_1(0)-1\right)\sigma +O(\sigma^2),$$ from which we can conclude that $\frac{\partial}{\partial\mu}
\tilde{\theta}_-(\mu, t)>0$ if $t\ll 0$. A similar argument using a version of \eqref{eq:variation_mu} in terms of $\theta$ then shows that $\frac{\partial}{\partial\mu}
\tilde{\theta}_-(\mu, t)<0$ for any $\mu$ and $t$.
\end{proof}

It follows that there is at most one $\mu$ for each integer $k$ for which $\Sigma^k(\mu,\varepsilon)=0$. The number of eigenvalues of the full problem is therefore exactly $k_0-k_1$ and hence $m(V_1)$. This ostensibly still allows for these eigenvalues to have multiplicity greater than $1$. However, from  classical Sturm-Liouville theory, see for instance \cite{reed1978iii}, Theorem XI.53, these eigenvalues must be simple in the class of radial functions. 

 \subsection{The case of $O(1)$ eigenvalues} The eigenvalues that do not tend to $0$ with $\varepsilon$ come from the inner potential $V_0$ and it is considerably easier to capture them as the outer potential manifests as just a small perturbation. We work with the solution $\theta_-(\lambda,s,\varepsilon)$ of \eqref{eq:Lp3} on the unstable manifold of $(0,0)$ and compare it with the analogous solution of with a model problem based on just $V_0$. Whereas the model problem in the case of $O(\varepsilon^2)$ had the eigenvalue parameter evaluated at $0$, we need to include it here for all positive values:
 \begin{equation}\label{eq:inner_model_evalue}
    \begin{split}
        \dot{\theta} = & (\sigma -1)\sin{2\theta} +\sigma \left(\lambda + \tilde{V}_{0} (\sigma)\cos^2 \theta -\sin^2 \theta \right),\\\dot{\sigma} = & \sigma(1-\sigma)^2.
    \end{split} 
 \end{equation}
 Let $\theta_0(\lambda,s)$ denote the solution of \eqref{eq:inner_model_evalue} on $W^u(0,0)$, suitably parameterized. Its asymptotics as $s\to +\infty$ are given by the following lemma.

 \begin{lemma}\label{lem:inner_orderone}
 If $\lambda \notin \sigma(\Delta-V_0)$ and $\lambda >0$, then
 \begin{equation}
     \theta_0(\lambda, s)\to \theta_{1,-}=\arctan\left({\sqrt{\lambda}}\right).
 \end{equation}
 \end{lemma}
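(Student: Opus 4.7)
The plan is to leverage the equivalence (established in the Proposition of Section~\ref{sec:geom}) between eigenvalues of the radial Schr\"odinger operator and heteroclinic orbits of the angular system connecting $\hat{\theta}_{0,-}$ to $\hat{\theta}_{1,+}$. Since $\theta_0(\lambda,s)$ is the unique trajectory on $W^u(\hat{\theta}_{0,-})$ for the inner model system \eqref{eq:inner_model_evalue}, showing that $\theta_0(\lambda,s)\to\hat{\theta}_{1,-}(\lambda)=\arctan(\sqrt{\lambda})$ is equivalent to ruling out convergence to the opposite fixed point $\arctan(-\sqrt{\lambda})$, since such convergence would exhibit $\lambda$ as an eigenvalue of $\Delta-V_0$ and contradict the hypothesis.

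First I would observe that $\dot{\sigma}=\sigma(1-\sigma)^2>0$ on $(0,1)$, so $\sigma(s)$ increases monotonically to $1$ and the $\omega$-limit set of $(\theta_0(\lambda,\cdot),\sigma(\cdot))$ is nonempty, compact, connected, and contained in the invariant circle $\{\sigma=1\}$. On that circle the reduced flow is $\dot{\theta}=\lambda\cos^2\theta-\sin^2\theta$, whose only fixed points modulo $\pi$ are $\arctan(\pm\sqrt{\lambda})$; a direct linearization yields eigenvalues $\mp 2\sqrt{\lambda}$, so $\arctan(\sqrt{\lambda})$ is hyperbolically attracting and $\arctan(-\sqrt{\lambda})$ is hyperbolically repelling within $\{\sigma=1\}$. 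Because $\sigma$ is strictly monotone in the interior of the cylinder $S^1\times[0,1]$, no periodic orbits exist off the two boundary circles, and a Poincar\'e--Bendixson argument adapted to this cylinder then forces the $\omega$-limit set to be a single equilibrium, namely either $\arctan(\sqrt{\lambda})$ or $\arctan(-\sqrt{\lambda})$.

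The alternative $\theta_0(\lambda,s)\to\arctan(-\sqrt{\lambda})$ would place the trajectory in $W^u(\hat{\theta}_{0,-})\cap W^c(\hat{\theta}_{1,+}(\lambda))$ for the system \eqref{eq:inner_model_evalue} (which coincides with \eqref{eq:Lp3} when the outer potential is suppressed), and the Proposition of Section~\ref{sec:geom} would then identify $\lambda$ as a positive eigenvalue of $\Delta-V_0$, contradicting the hypothesis. Hence $\theta_0(\lambda,s)\to\arctan(\sqrt{\lambda})$ as claimed. The main technical obstacle is the $\omega$-limit reduction in the previous paragraph, specifically ruling out an exotic $\omega$-limit consisting of both fixed points together with a heteroclinic arc on $\{\sigma=1\}$; this requires exploiting the local dynamics at $(\arctan(\sqrt{\lambda}),1)$, where the $\theta$-direction contracts at exponential rate $2\sqrt{\lambda}$ while the $\sigma$-direction is a center. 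Once the trajectory enters a fixed small neighborhood of this point, it is captured by the attracting center manifold and cannot escape, which prevents the repeated $\theta$-winding that such an exotic $\omega$-limit would demand.
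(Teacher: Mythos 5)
Your argument is correct, and it is worth noting that the paper states Lemma~\ref{lem:inner_orderone} without any proof, so you are supplying the step the authors treat as immediate. Your route is exactly the one their framework suggests: since $\sigma(s)\to 1$ monotonically, the $\omega$-limit set lies in the invariant circle $\{\sigma=1\}$, where for $\lambda>0$ the flow $\dot\theta=\lambda\cos^2\theta-\sin^2\theta$ has the two hyperbolic rest points $\arctan(\pm\sqrt{\lambda})$ (eigenvalues $\mp 2\sqrt{\lambda}$, matching the paper's sign convention that $\hat{\theta}_{1,-}$ is attracting and $\hat{\theta}_{1,+}$ repelling within $\sigma=1$), and convergence to $\arctan(-\sqrt{\lambda})$ would place the trajectory on $W^c(\hat{\theta}_{1,+})$ and hence, by the Proposition of Section~\ref{sec:geom} applied to the model system, exhibit $\lambda$ as a positive eigenvalue of $\Delta-V_0$, contradicting the hypothesis. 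This is the same dichotomy the authors exploit in Lemma~\ref{Lemma:theta00}; the only structural difference is that there $\lambda=0$ makes the boundary equilibrium doubly degenerate and forces a blow-up, whereas for $\lambda>0$ hyperbolicity within $\{\sigma=1\}$ lets your elementary trapping argument do the work. Two small points you could tighten: the exclusion of an $\omega$-limit set consisting of an arc or the whole circle is cleaner if you note that any connected invariant set on the circle other than a single equilibrium must contain the attracting point, at which stage your capture argument applies; and the capture/trapping region near $(1,\arctan\sqrt{\lambda})$ uses that $\tilde{V}_0(\sigma)\to 0$ and that the vector field extends $C^1$ to $\sigma=1$, which is exactly what assumption (A1) and the compactification construction guarantee, so you should cite that rather than leave it implicit.
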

 The purely perturbative nature of $V_{1,\varepsilon}$ then allows us to prove the lemma comparing the model and full problems for $\lambda=O(1)$.
 \begin{lemma}\label{eq:full_model_orderone}
 If $\lambda \notin \sigma(\Delta-V_0)$ and $\lambda >0$, then there are $\varepsilon _0$ and $C>0$ so that
 \begin{equation}
     \left | \theta_-(\lambda,s,\varepsilon)-\theta_0(\lambda,s)\right|\le C\varepsilon
 \end{equation}
 for all $s\in \mathbb{R}$.
 \end{lemma}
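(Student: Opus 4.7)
\bigskip
\noindent\textbf{Proof plan for Lemma \ref{eq:full_model_orderone}.}

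The plan is to set $\eta(s,\varepsilon) := \theta_-(\lambda,s,\varepsilon) - \theta_0(\lambda,s)$ (working in the covering space with consistent primary lifts, so $\eta(s) \to 0$ as $s\to-\infty$) and subtract the two angular equations \eqref{eq:Lp3} and \eqref{eq:inner_model_evalue}. A Taylor expansion around $\theta_0$, together with the trigonometric identities used to derive \eqref{eq:dev_inner}, produces a variational equation of the form
\begin{equation}
\dot\eta = G(s)\,\eta + H(s,\varepsilon) + O(\eta^2),
\end{equation}
where the linearization coefficient is
\begin{equation}
G(s) = 2(\sigma(s)-1)\cos 2\theta_0(s) - \sigma(s)\bigl(\lambda+\tilde V_0(\sigma(s))+1\bigr)\sin 2\theta_0(s),
\end{equation}
and the forcing is $H(s,\varepsilon) = \sigma(s)\,\tilde V_{1,\varepsilon}(\sigma(s))\cos^2\theta_0(s)$. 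Since $V_1$ is bounded on $[0,\infty)$, one has $|\tilde V_{1,\varepsilon}(\sigma)| \le \varepsilon^2 \|V_1\|_\infty$ uniformly, so $|H(s,\varepsilon)| \le C_H\varepsilon^2$ for all $s\in\mathbb{R}$.

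I would then control $\eta$ in three regimes. For $s$ near $-\infty$ and on any bounded interval $(-\infty, s^*]$, I apply exactly the center-unstable manifold argument of Lemma \ref{pre_plat}: appending $\dot\varepsilon = 0$ to the full system \eqref{eq:Lp3}, the union of $W^u(0,0,\varepsilon)$ over small $\varepsilon$ is a $C^1$ center-unstable manifold of $(0,0,0)$, and since the $\varepsilon$-dependence of the vector field is $O(\varepsilon^2)$, this yields $|\eta(s,\varepsilon)| = O(\varepsilon^2)$ uniformly on any $(-\infty,s^*]$ with $s^*$ independent of $\varepsilon$. For the large-$s$ regime, I use that by Lemma \ref{lem:inner_orderone}, $\theta_0(\lambda,s) \to \arctan\sqrt\lambda$, and by the $\dot\sigma$ equation, $\sigma(s)\to 1$. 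Plugging these limits into $G$, using $\tilde V_0(1)=0$ and $\sin(2\arctan\sqrt\lambda) = 2\sqrt\lambda/(1+\lambda)$, one obtains $G(s) \to -2\sqrt\lambda < 0$; hence $s^*$ can be chosen so that $G(s) \le -\sqrt\lambda$ for all $s\ge s^*$.

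With $|\eta(s^*)| = O(\varepsilon^2)$ from the first phase and $G(s)\le -\sqrt\lambda$ on $[s^*,\infty)$, I would run a Gronwall-type estimate via the integrating factor $\exp\!\bigl(-\int_{s^*}^s G(u)\,du\bigr)$, under the bootstrap hypothesis $|\eta(s)|\le C\varepsilon$ on $[s^*,\infty)$. The nonlinear term is then $O(\varepsilon^2)$ and is absorbed, yielding
\begin{equation}
|\eta(s)| \le |\eta(s^*)|\,e^{-\sqrt\lambda(s-s^*)} + \frac{C_H\varepsilon^2 + O(\varepsilon^2)}{\sqrt\lambda} \le \tfrac{C}{2}\varepsilon
\end{equation}
for $\varepsilon \le \varepsilon_0$ sufficiently small, which closes the bootstrap. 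Combining the three regimes gives the claimed bound $|\eta(s,\varepsilon)| \le C\varepsilon$ uniformly in $s\in\mathbb{R}$.

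The main obstacle is verifying the uniform negativity $G(s) \le -c < 0$ on the unbounded tail $[s^*,\infty)$. The subtlety is that $\sigma \to 1$ only polynomially (through the center direction $\dot\sigma = \sigma(1-\sigma)^2$), so the $(\sigma-1)$ factor in $G$ decays slowly, while the dominant contribution to $G$ comes from the $-\sigma(\lambda+\tilde V_0+1)\sin 2\theta_0$ term which requires $\theta_0$ to be genuinely close to $\arctan\sqrt\lambda$. The saving fact, which makes this case much easier than the $\lambda = O(\varepsilon^2)$ analysis of Section \ref{sec:estimates}, is that $\hat\theta_{1,-}$ is hyperbolic in the $\theta$-direction whenever $\lambda>0$ (with eigenvalue $-2\sqrt\lambda$), so Lemma \ref{lem:inner_orderone} delivers exponential convergence of $\theta_0$ to the fixed point once $\sigma$ is merely close enough to $1$, which is enough to make the tail estimate go through.
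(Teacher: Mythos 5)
Your proposal is correct and follows essentially the route the paper sketches for this lemma: treat $V_{1,\varepsilon}$ as a uniformly $O(\varepsilon^2)$ forcing, use the Lemma \ref{pre_plat}-type center-unstable manifold argument up to a fixed $s^*$, and close a bootstrap/Gronwall estimate on the tail using the dynamics near $\sigma=1$, where the linearization coefficient tends to $-2\sqrt{\lambda}<0$ because $\lambda>0$ makes $\hat{\theta}_{1,-}$ hyperbolically attracting in the $\theta$-direction. One minor caveat that does not affect the argument: the convergence $\theta_0(\lambda,s)\to\arctan\sqrt{\lambda}$ is in general only algebraic in $s$ (since $\sigma\to 1$ algebraically), not exponential as you suggest at the end, but all you actually need is that $G(s)\le -\sqrt{\lambda}$ for $s\ge s^*$, which follows from mere convergence.
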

The new piece here is that the estimate holds as $s\to+\infty$. This can be seen by checking the bootstrapping estimates or looking at the dynamics of \eqref{eq:Lp3} in a neighborhood of $\sigma =1$ for small $\varepsilon$.

It can now be seen that the number of positive eigenvalues of $L$ on $L_2$ radial functions that are bounded away from zero as $\varepsilon \to 0$ is exactly $m(V_0)$ by repeating a similar argument to that used in the previous section for Lemma \ref{lem:spectrum_outer}. The exactness of the count also follows from a similar argument using  $\frac{\partial \theta}{\partial \mu}$.

Combining these results, we complete the proof of Theorem \ref{thm:1}, demonstrating that the total number of eigenvalues in $\sigma_p(\Delta - W_{\varepsilon})$ contained in $(0, \infty)$ is precisely $m(V_0) + m(V_1)$.

\section{Numerical Examples }\label{sec:appl}

We present three numerical scenarios that illustrate Theorem~\ref{thm:1}, demonstrating how different potential combinations affect the behavior of systems \eqref{eq:Lp3} and \eqref{eq:Lp6}. In all cases, we fix $\mu \in (0,1)$, $\varepsilon=0.1$, and use thresholds \eqref{sigma_threshold} and \eqref{tau_threshold} for $\sigma$ and $\tau$ respectively. All numerical simulations were executed on a MacBook Pro equipped with an Apple M1 chip (8-core CPU at 3.2 GHz) and 8GB of unified memory. The computations were completed with minimal processing time.\footnote{The complete MATLAB source code for all three cases is available at \url{https://github.com/efleurantin2103/AlgoSpace/tree/master/Scalar_Paper_3cases}.  The computational codes are subject to usage restrictions described in the repository documentation.} 

% {\color{red}  I think it would be good to give the values $m(V_0), m(V_1), m(W)$ in each case below.  Also, we should give a more thorough description of what is being plotted by the various curves/connection to main theorem?  Do we want to choose $\mu$ such that we count $m(V_1)$ only as done or do we want to see the whole count manifest?} 

Each scenario satisfies assumptions (A1-A2). In particular:
\begin{itemize}
    \item All potentials are $C^\infty(\mathbb{R}^+)$ and radially symmetric;
    \item Each $V_{1,\varepsilon}$ satisfies the scaling structure described in assumption (A1);
    \item Each $V_0$ satisfies the integrability and decay assumptions required to ensure that $\Delta - V_0$ has essential spectrum $(-\infty,0]$ and a finite number of eigenvalues in $(0,\infty)$.
\end{itemize}

In all Figures in scenarios 1,2, and 3, left plots are solutions of \eqref{eq:Lp3}, while right plots are solutions of \eqref{eq:Lp6}. We are plotting the unstable manifolds of 3 different equilibrium solutions for the $\sigma$ system for scenarios 1 and 3, in particular $(0,0)$, $(0,\pi)$ and $(0, 2\pi)$ while varying $\mu \in (0,1)$. For scenario 2, in addition to the first 3 mentioned, we are also plotting the unstable manifold of $(0,3\pi)$.  For the $\tau$ system, we are plotting the center manifold of $(1, \arctan{\sqrt{\gamma}}-2\pi)$ while also varying $\mu$ from $(0,1)$.  To compute the eigenvalue counts above a given $\mu$ value, we consider $\lfloor \frac{\theta (0)-\theta(1)}{\pi} \rfloor$ in the $(\sigma, \theta)$ system and $\lfloor \frac{\psi (0)-\psi(1)}{\pi} \rfloor$ in the $(\tau, \psi)$ system for the given $\mu$.  %In the simulations we note that the unstable manifold of 

\subsection{Scenario 1: Exponential and Gaussian potentials}
Consider the potentials
\[
V_0(x) = -2.8e^{-x^2}, \quad V_{1,\varepsilon}(x) = -30\varepsilon^2 e^{-(\varepsilon x)^2}.
\]
The Gaussian profile ensures that $V_{1,\varepsilon}$ is rapidly decaying and scales appropriately under $\varepsilon \to 0$. Meanwhile, $V_0(x)$ is smooth, integrable on $[0,\infty)$, and satisfies $|V_0(x)| \leq C (1+x)^{-a}$ for any $a > 0$, thus fulfilling the assumptions on the background potential $V_0$. Figure~\ref{fig:Sigma-Tau1} shows the corresponding phase portraits.

\begin{figure}[tbp]
\begin{center}
\subfigure[\label{fig:Sigma-Tau1a}]{
    \includegraphics[height=8cm]{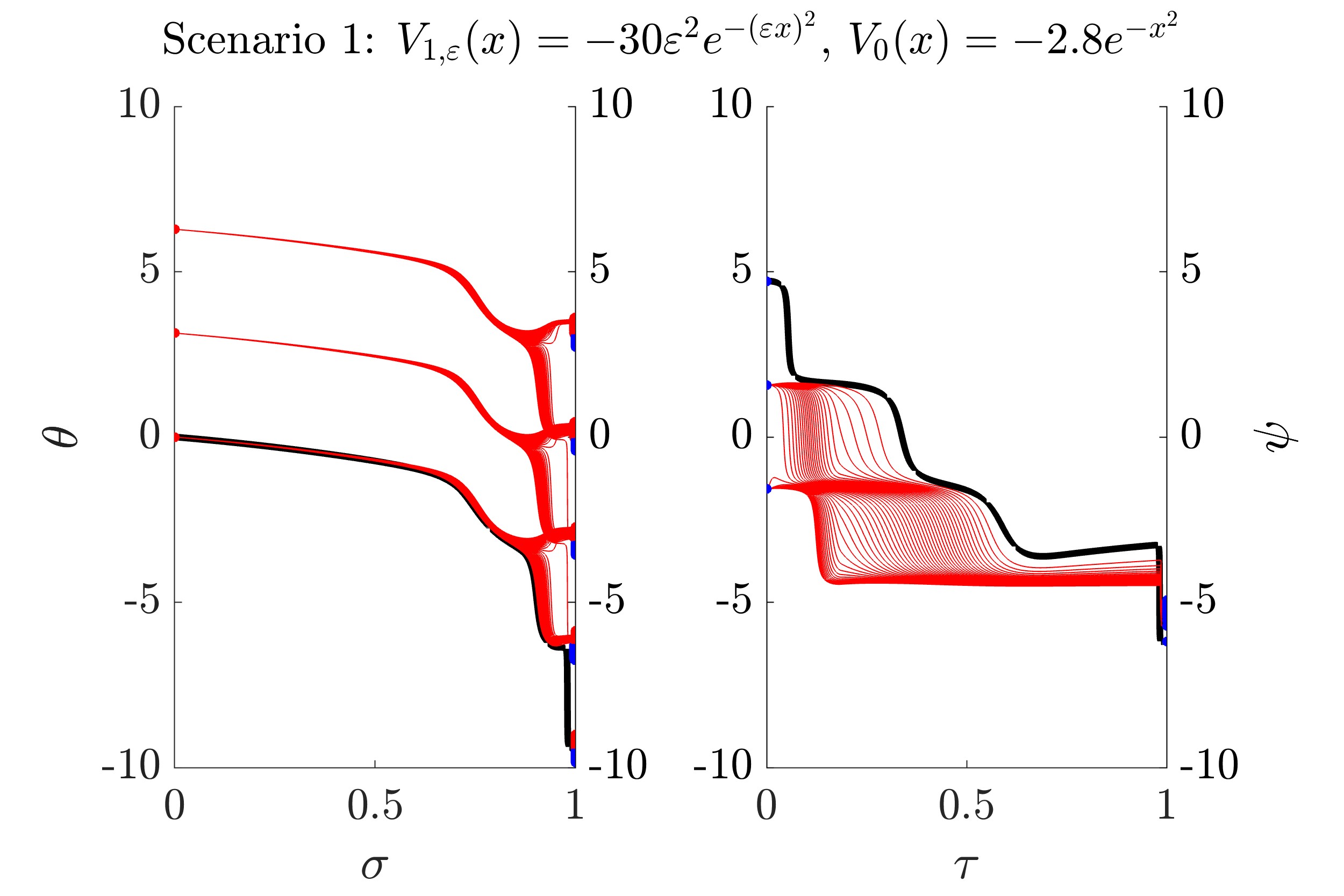}
}
\subfigure[\label{fig:Sigma-Tau1b}]{
    \includegraphics[height=8cm]{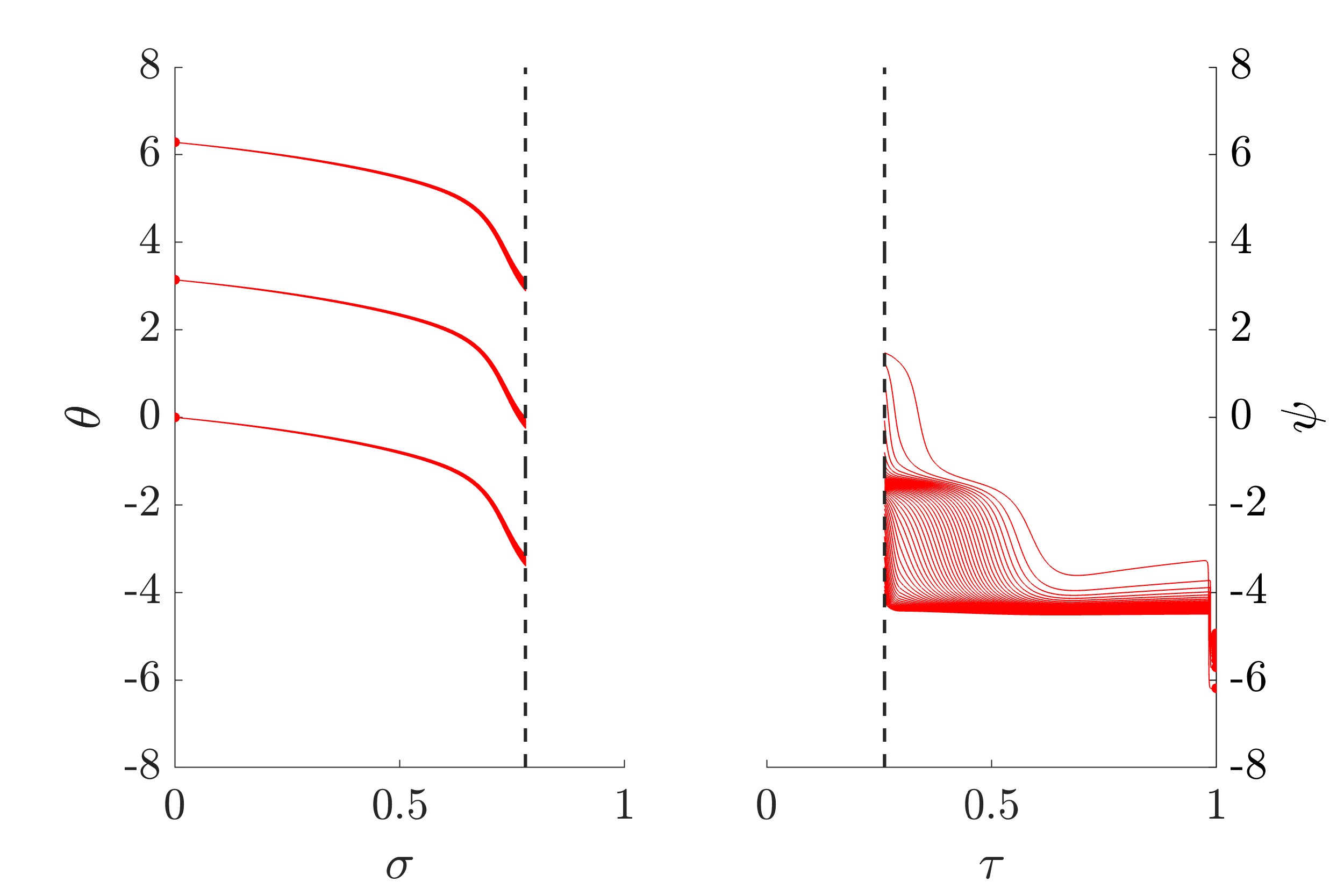}
}
\caption{Scenario 1: Unstable manifold $W^u(\hat{\theta}_{0,-})$ (left) and center manifold $W^{c}(\hat\psi_{1,-})$ (right) from \eqref{eq:Lp3} and \eqref{eq:Lp6}. Red/blue points indicate saddle/unstable equilibria at $\sigma=0$, while blue points at $\sigma=1$ indicate center-stable equilibria. Black curves show maximum $\pi$-jumps as $\varepsilon \to 0$ for their respective systems.}
\label{fig:Sigma-Tau1}
\end{center}
\end{figure}

\subsection{Scenario 2: Rational and hyperbolic secant potentials}
Consider the potentials
\[
V_0(x) = -\frac{2.6}{1+2x^4}, \quad V_{1,\varepsilon}(x) = -20\varepsilon^2\text{sech}(\varepsilon x).
\]
Again, $V_{1,\varepsilon}$ has the correct scaling and decay properties. The rational function $V_0(x)$ is $C^\infty$, and satisfies $|V_0(x)| \leq C (1+x)^{-4}$, which ensures square-integrability and decay sufficient to define a compact perturbation of the Laplacian. Figure~\ref{fig:Sigma-Tau2} shows the results.

\begin{figure}[tbp]
\begin{center}
\subfigure[\label{fig:Sigma-Tau2a}]{
    \includegraphics[height=8cm]{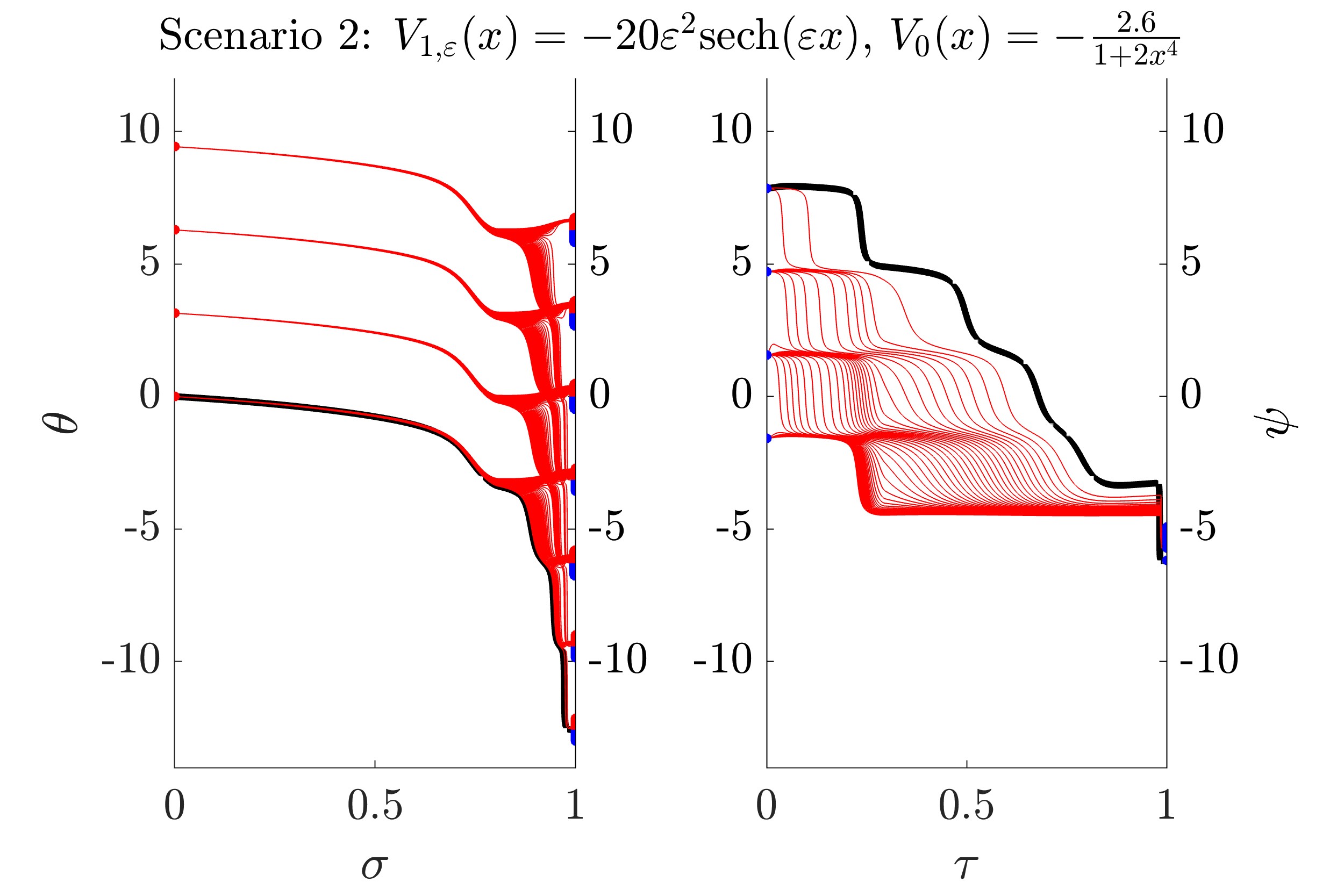}
}
\subfigure[\label{fig:Sigma-Tau2b}]{
    \includegraphics[height=8cm]{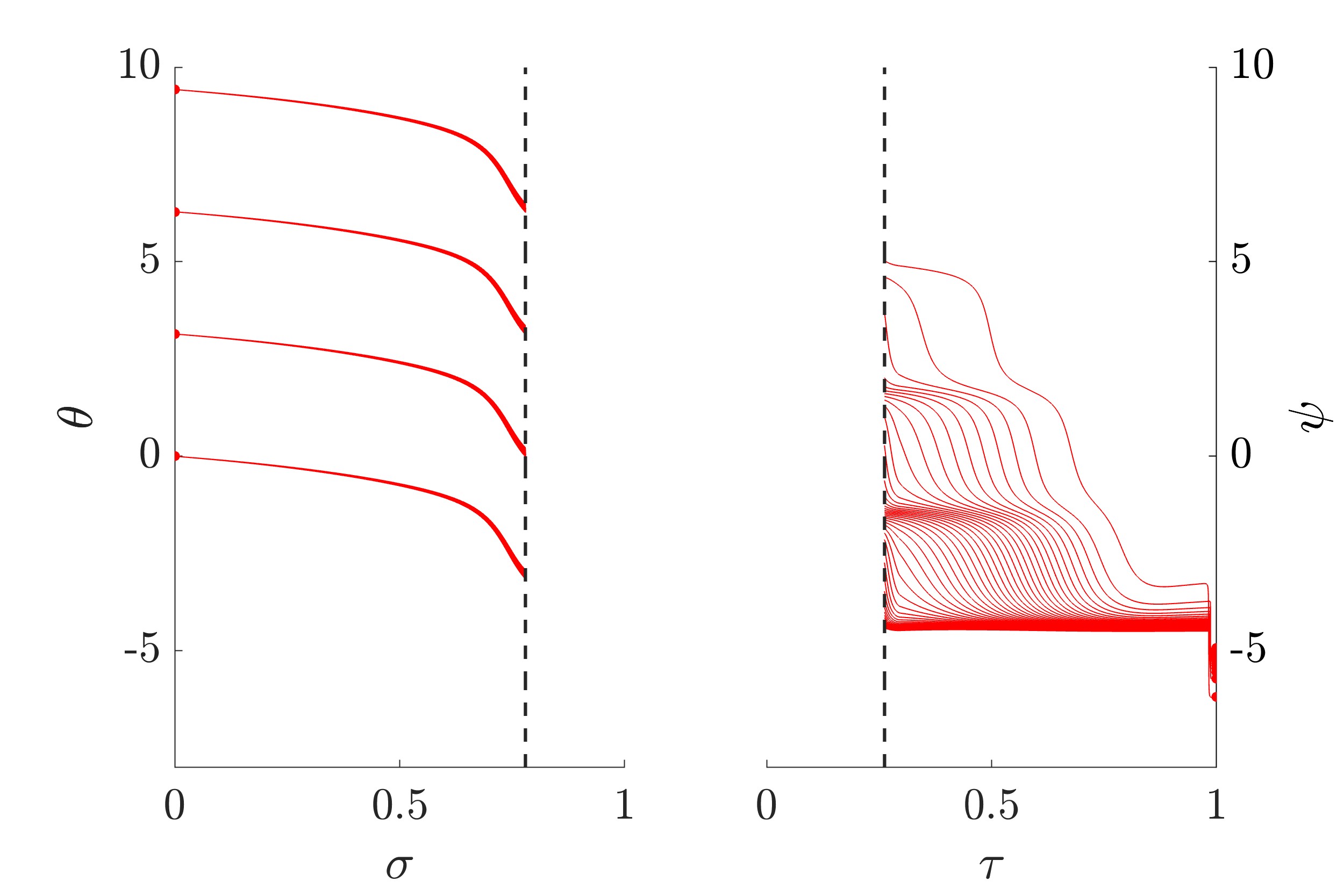}
}
\caption{Scenario 2: System behavior under hyperbolic secant and rational potentials. Panel arrangement, curve colors, and equilibria properties follow the same convention as Figure~\ref{fig:Sigma-Tau1}.}
\label{fig:Sigma-Tau2}
\end{center}
\end{figure}

\subsection{Scenario 3: Hyperbolic cosine  and rational potentials}
Consider the potentials
\[
V_0(x) = -\frac{3}{\cosh^2(1.2x)}, \quad V_{1,\varepsilon}(x) = -\frac{30\varepsilon^2}{(1+(\varepsilon x)^2)^2}.
\]

The potential $V_{1,\varepsilon}$ is smooth and scales as required, with algebraic decay. The background potential $V_0(x)$ is bounded, smooth, integrable, and satisfies $|V_0(x)| \leq 12e^{-2.4x}$, which again meets the necessary integrability and decay assumptions. Figure~\ref{fig:Sigma-Tau3} displays the associated system behavior.

\begin{figure}[tbp]
\begin{center}
\subfigure[\label{fig:Sigma-Tau3a}]{
    \includegraphics[height=8cm]{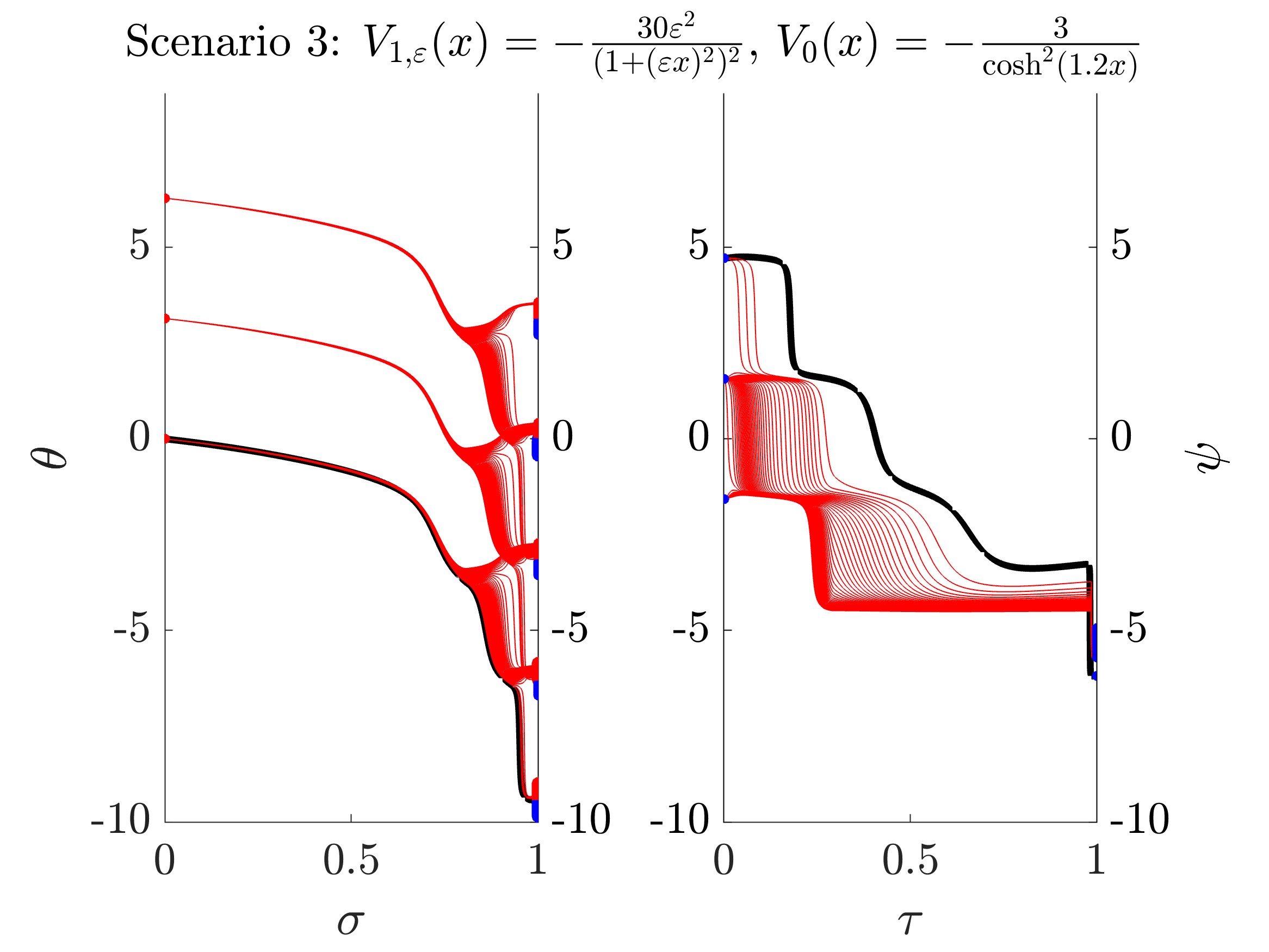}
}
\subfigure[\label{fig:Sigma-Tau3b}]{
    \includegraphics[height=8cm]{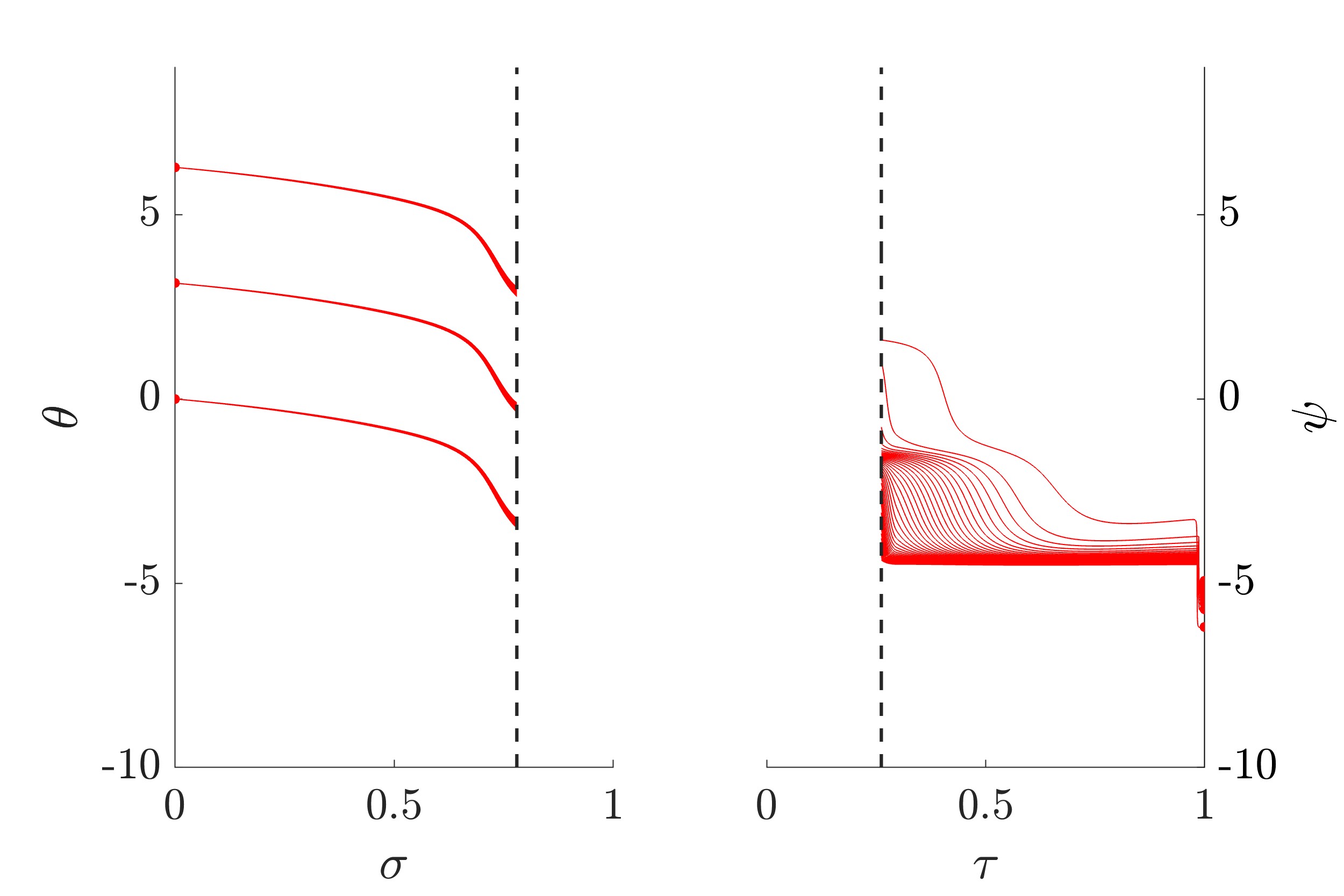}
}
\caption{Scenario 3: System behavior under rational and hyperbolic potentials. Panel arrangement and equilibria properties follow the same convention as Figure~\ref{fig:Sigma-Tau1}.}
\label{fig:Sigma-Tau3}
\end{center}
\end{figure}

%%%%%%%

\begin{figure}[tbp]
\begin{center}
\subfigure[\label{fig:Sigma-TauEa}]{
    \includegraphics[height=2.9cm]{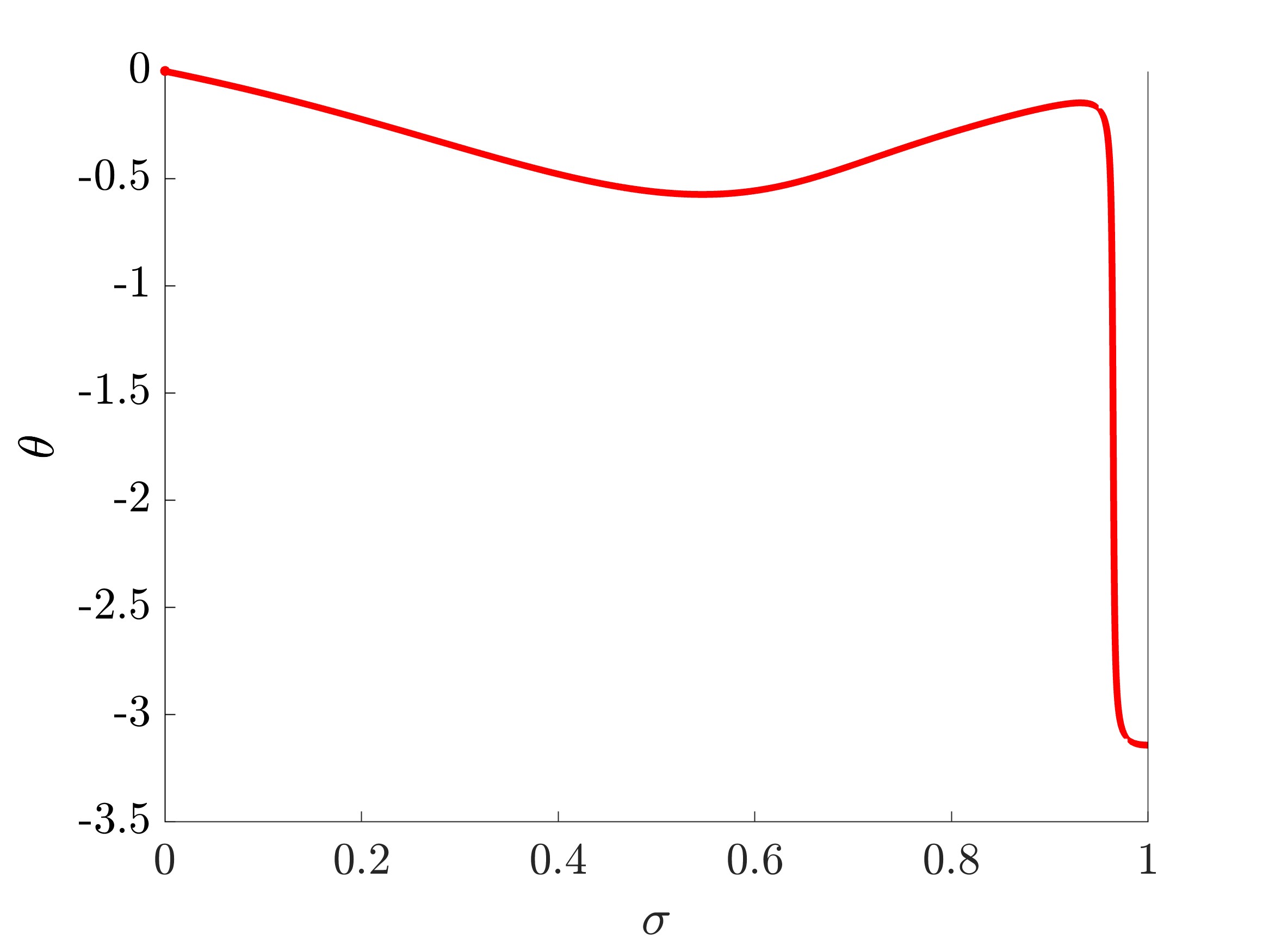}
}
\subfigure[\label{fig:Sigma-TauEb}]{
    \includegraphics[height=2.9cm]{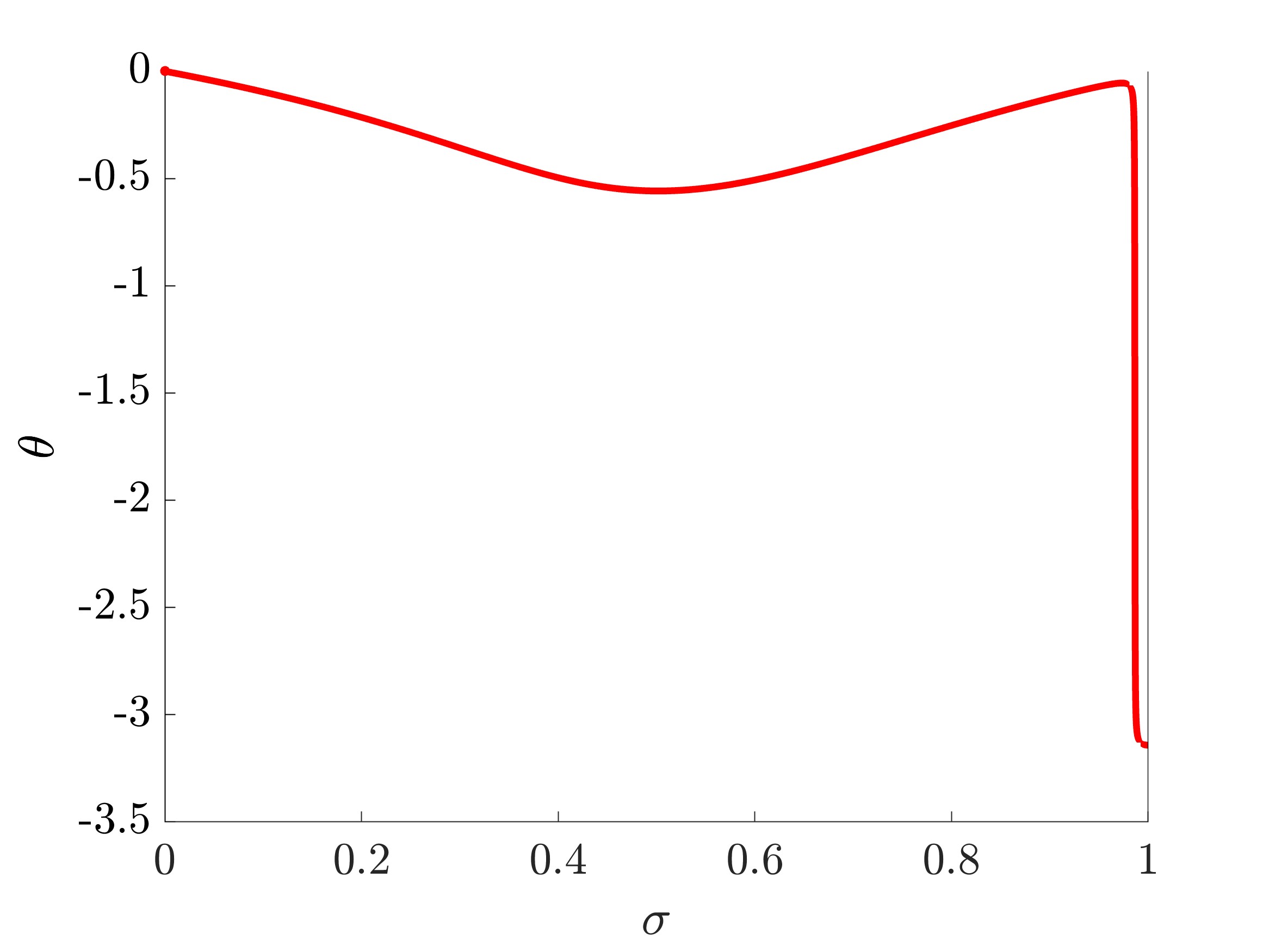}
}
\subfigure[\label{fig:Sigma-TauEc}]{
    \includegraphics[height=2.9cm]{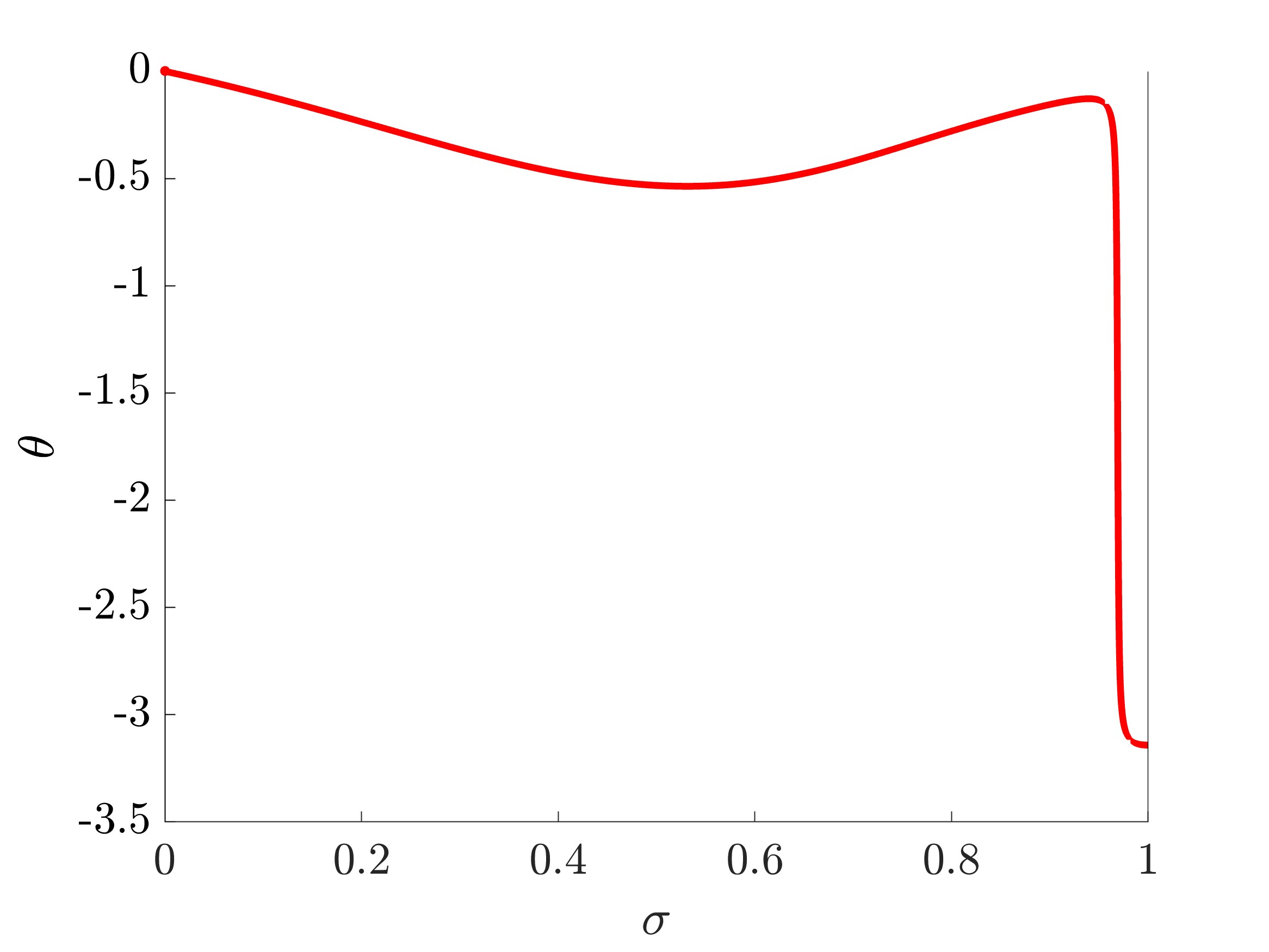}
}
\caption{Each panel shows the eigenvalue distribution computed using \eqref{eq:inner_model_evalue} under different scenarios: (a) Case 1, (b) Case 2, and (c) Case 3. All configurations converge to the same eigenvalue multiplicity, in that $m(V_0)=1$.}
\label{fig:mV0}
\end{center}
\end{figure}

\begin{figure}[tbp]
\begin{center}
\subfigure[\label{fig:Sigma-TauVa}]{
    \includegraphics[height=2.9cm]{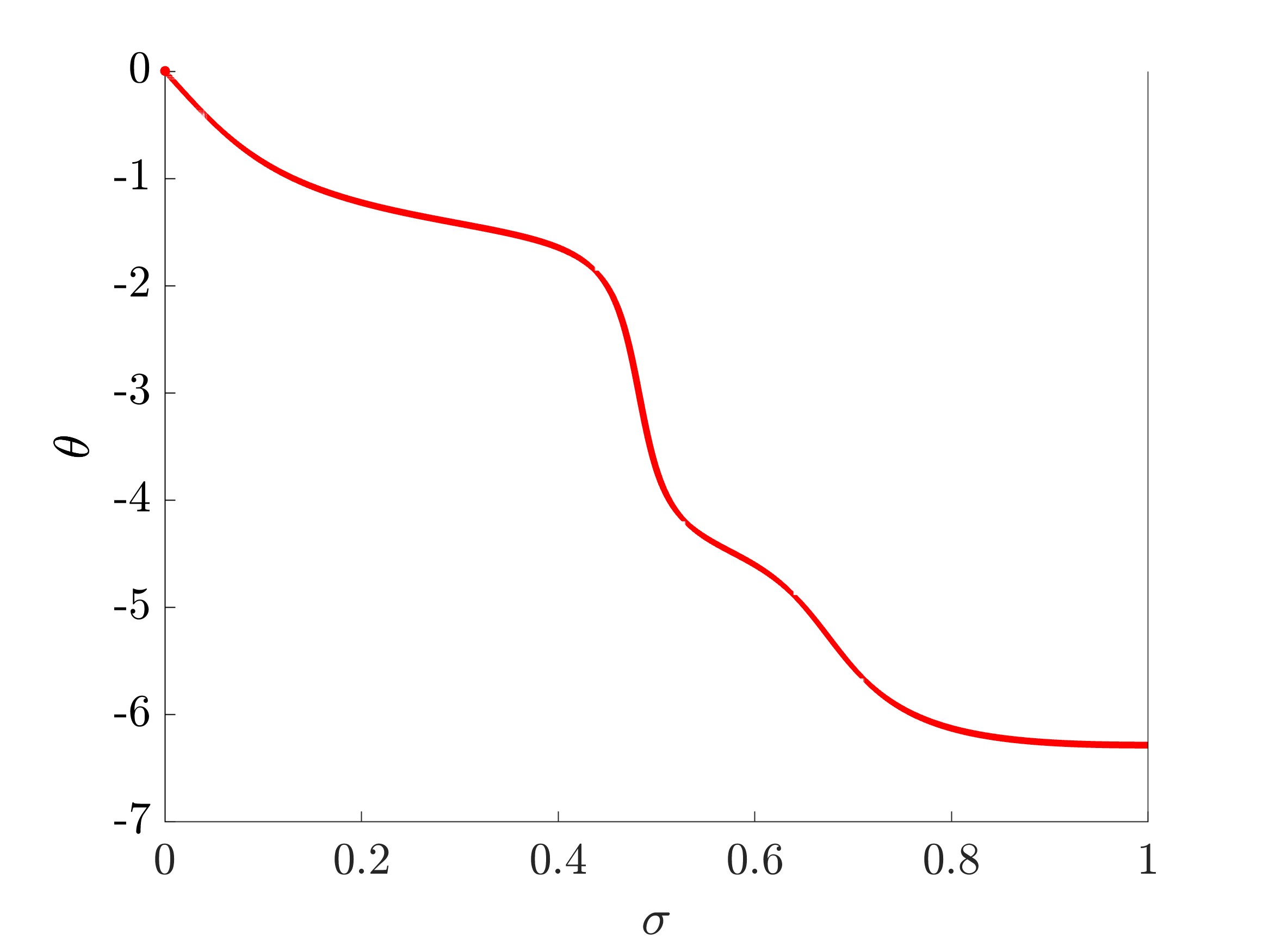}
}
\subfigure[\label{fig:Sigma-TauVb}]{
    \includegraphics[height=2.9cm]{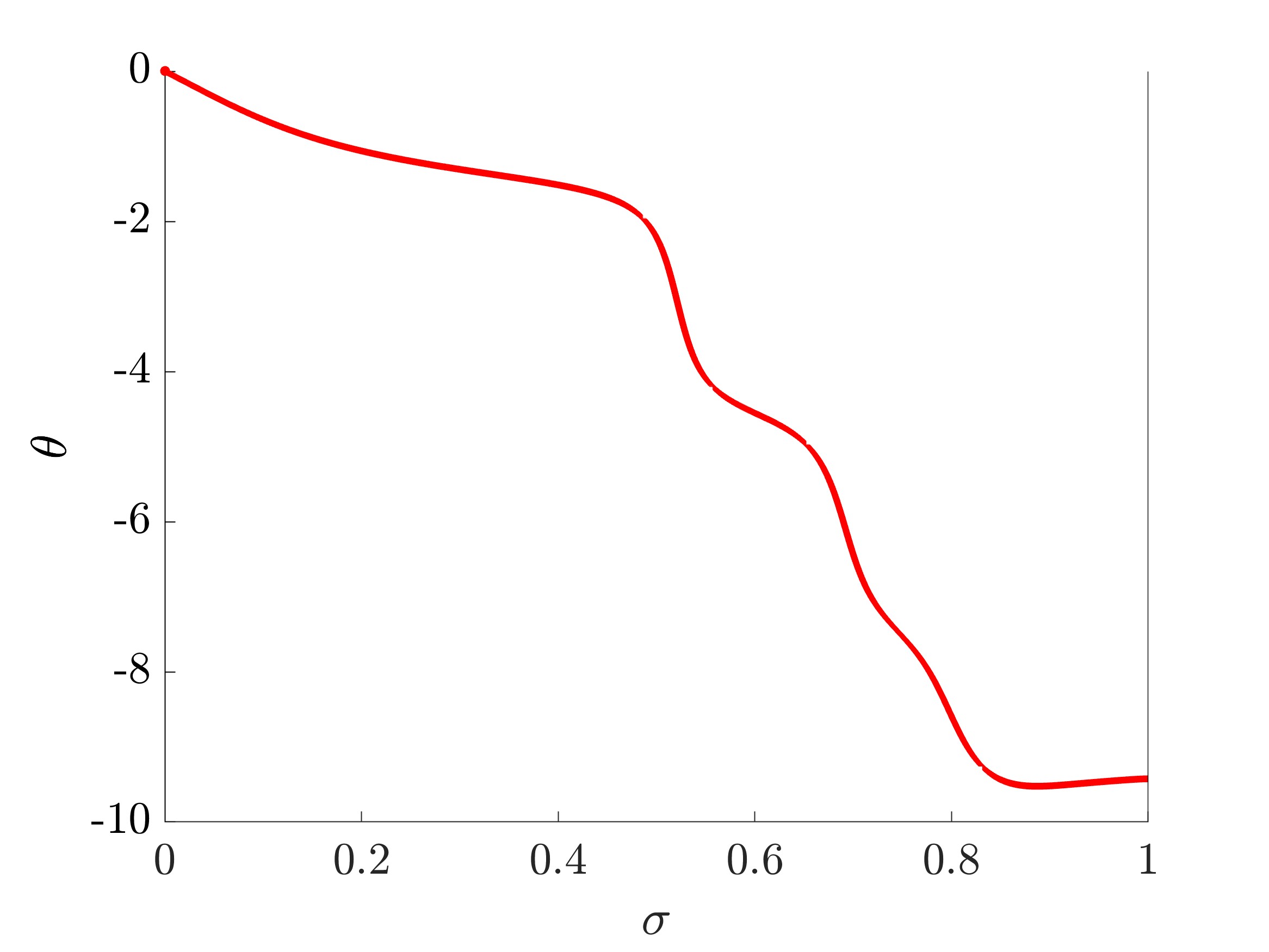}
}
\subfigure[\label{fig:Sigma-TauVc}]{
    \includegraphics[height=2.9cm]{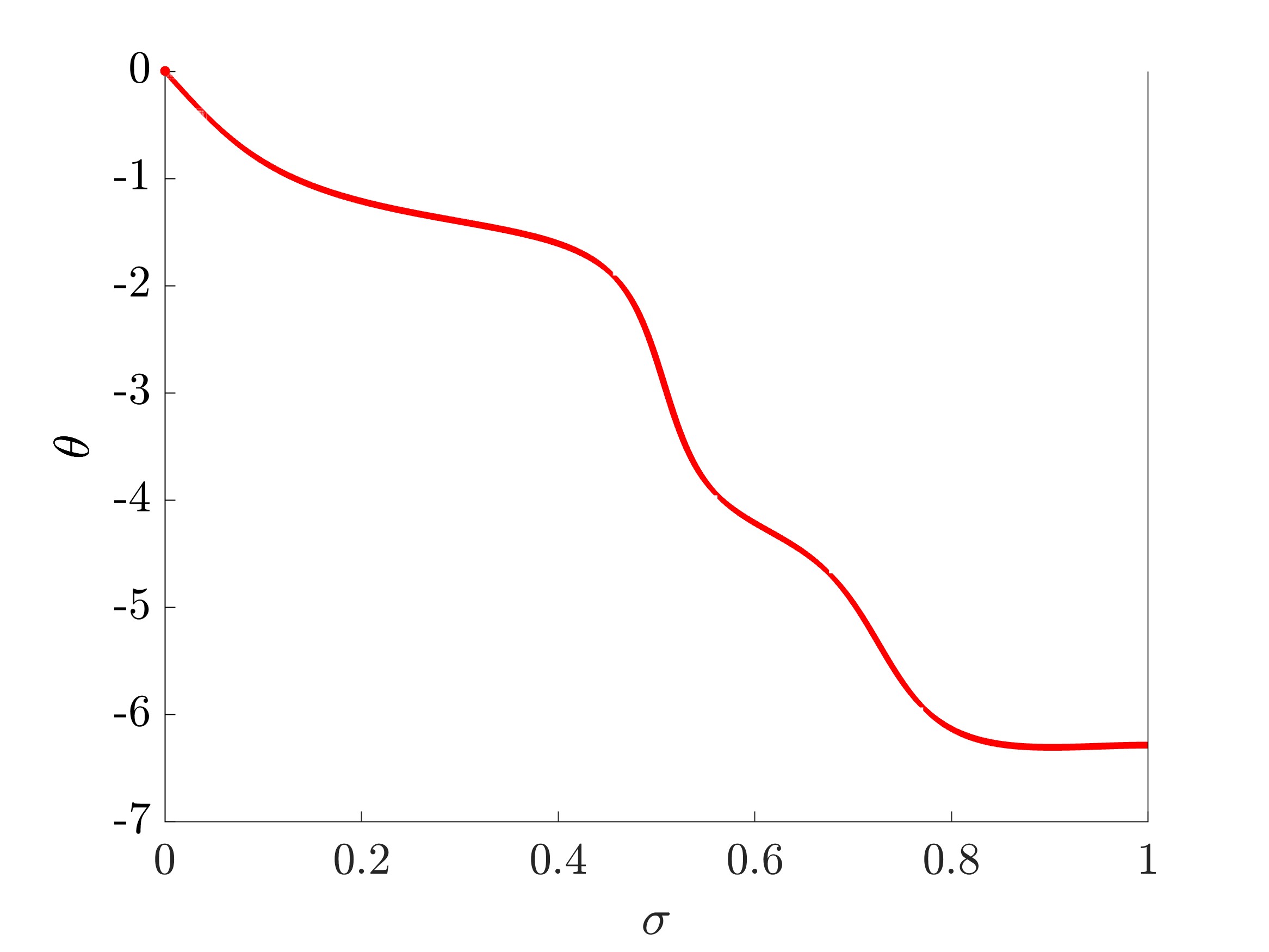}
}
\caption{Each panel shows the eigenvalue distribution computed using the $V_1$ model problem under different scenarios: (a) Case 1: $V_1=-30e^{-x^2}$, (b) Case 2: $V_1=-20\textrm{sech}(x)$, and (c) Case 3: $V_1 = \frac{-30}{(1+x^2)^2}$. For scenario 2, $m(V_1)=3$. For scenarios 1 and 3, $m(V_1)=2$.}
\label{fig:mV1}
\end{center}
\end{figure}

\subsection{Verification of spectral assumptions}
In all three scenarios, the operator $\Delta - V_0- V_{1,\varepsilon}$ satisfies the spectral assumptions required in Theorem~\ref{thm:1}. Specifically:
\begin{itemize}
    \item The essential spectrum of $\Delta - V_0$ remains $(-\infty, 0]$ since $V_0$ is compactly supported or decaying and integrable.
    \item Each $V_{1,\varepsilon}$ has compact support at scale $\mathcal{O}(1/\varepsilon)$ and contributes a finite-rank perturbation, preserving the essential spectrum and modifying the point spectrum.
    \item The eigenvalue count $m(V_1)$ %in Definition~\ref{def:count} 
    is well-defined and finite for each case.
\end{itemize}
Numerical results show at most three jumps in the angle variable $\psi$ from \eqref{eq:Lp6} in scenarios 1 and 3, and at most four jumps in scenario 2, corresponding to topological phase shifts in $\theta$ from \eqref{eq:Lp3}. The eigenvalue analysis in Figure \ref{fig:mV0} demonstrates that $m(V_0) = 1$ consistently across all scenarios. The eigenvalue count for the $V_1$ system for each scenario is illustrated in Figure \ref{fig:mV1}. We note that for scenario 2, $m(V_1)=3$, while in scenarios 1 and 3, $m(V_1)=2$. Incorporating this information into the topological relationship $m(W) = m(V_0) + m(V_1)$, we obtain $m(W) = 3$ for scenarios 1 and 3, and $m(W) = 4$ for scenario 2. This observation is consistent for both the $\sigma$ and $\tau$ systems.

\begin{remark}\label{rem:transient}
The $\tau$ system trajectories in the right panels of Figures~\ref{fig:Sigma-Tau1},~\ref{fig:Sigma-Tau2} and ~\ref{fig:Sigma-Tau3} exhibit a characteristic initial transient that requires careful interpretation. Crucially, these transients contribute to the topological count given by $m(W)$, which essentially measures the net phase winding of the solution. The rapid initial transition represents genuine phase accumulation that must be included in this count.
\end{remark}

\section{Conclusion and discussion}\label{sec:conclusion}
In this paper, we have presented a rigorous analysis of the operator $\Delta - W_{\varepsilon}$ with scale-separated potential $W_{\varepsilon} = V_0 + V_{1,\varepsilon}$, where both components are smooth, radially symmetric, and satisfy assumptions (A1-A3). Our main focus has been on understanding the spectral properties, particularly the existence and counting of eigenvalues in the spectral gap. More specifically, we have established that for sufficiently small $\varepsilon$, the number of eigenvalues of $\Delta - W_{\varepsilon}$ in the gap $(0,\mu_0)$ is exactly equal to $m(V_1)$, where $m(V_1)$ counts the eigenvalues of $\Delta - V_{1,\varepsilon}$. This result provides deep insight into how scale separation affects the spectral properties of such operators and demonstrates a principle of spectral additivity for scale-separated potentials.

Our analysis employed a careful decomposition of the problem into two distinct spatial scale regimes, allowing us to track eigenvalues through a combination of Sturm-Liouville theory and perturbation methods. This approach demonstrates broad applicability to spectral problems involving multiple characteristic length scales.

In addition, we provided a complete characterization of the discrete spectrum by showing that for sufficiently small $\varepsilon$, the total number of eigenvalues of $\sigma\left(\Delta - W_{\varepsilon}\right)$ in the interval $(0, \infty)$ is precisely the sum of the contributions from each potential component: $m(V_1)$ from the scaled potential $V_{1,\varepsilon}$ and $m(V_0)$ from the fixed potential $V_0$.

\subsection{Future work}

Several promising avenues present themselves:

\noindent\textit{Regularity Requirements.} The smoothness assumptions on $V_0$ and $V_1$ could %potentially 
{\ef certainly} be weakened. Recent work by Frank-Simon~\cite{FrankSimon2015} suggests that finite regularity might suffice. For potentials in $L^p$ spaces, the results of Frank-Laptev-Safronov~\cite{FrankLaptevSafronov2016} provide guidance on possible extensions.

\noindent\textit{Non-radial Potentials.} Perhaps the most challenging extension would be to non-radially symmetric potentials. While some results exist for such potentials in the single-scale case, as shown by Agmon~\cite{Agmon1982}, the interaction of different scales without radial symmetry introduces substantial new difficulties. Recent techniques developed by Denisov~\cite{Denisov2004} for handling non-radial perturbations might prove useful in this context.

The techniques developed in this paper, particularly our scale separation analysis, should provide valuable insights for these extensions. Each relaxation of assumptions will require new mathematical tools, likely combining methods from spectral theory, harmonic analysis, and dynamical systems.

A key challenge will be understanding how the eigenvalue count in the spectral gap persists under these various generalizations. The work of Simon~\cite{Simon1982} on eigenfunction decay and the general framework of Reed-Simon~\cite{reed1978iv}, Chapter XIII for essential spectra will likely play crucial roles in these investigations.

\section*{Acknowledgments} JLM thanks Kenji Nakanishi for framing the original problem and acknowledges the thesis work of his student Dmitro Golovanich for laying the foundation of finding its solution.  Emmanuel Fleurantin was supported by the National Science Foundation under Grant No. DMS-2137947.  Jeremy L. Marzuola gratefully acknowledges support from NSF grant DMS-2307384. The research of Christopher K.R.T. Jones was supported by the US Office of Naval Research under grant N00014-24-1-2198.

\bibliographystyle{alpha}
\bibliography{references_scalar}

\end{document}